\newtheorem{theorem}{Theorem}
\newtheorem{definition}{Definition}
\newtheorem{lemma}{Lemma}
\newtheorem*{remark}{Remark}
\colorlet{shadecolor}{gray!20}
\newtheorem{insightenv}{Insight}
\newcommand{\insightbox}[1]{
  \noindent
  \fcolorbox{black}{shadecolor}{%
    \begin{minipage}{0.97\columnwidth}
      \begin{insightenv}
        #1
      \end{insightenv}
    \end{minipage}%
  }
}
\newcommand{\IMess}[4][0]{
  \stepcounter{seqlevel}
  \path
  (#2)+(0,-\theseqlevel*\unitfactor-0.7*\unitfactor) node (mess from) {};
  \addtocounter{seqlevel}{#1}
  \path
  (#4)+(0,-\theseqlevel*\unitfactor-0.7*\unitfactor) node (mess to) {};
  \path[->,>=angle 60] (mess from) -- (mess to) node[midway, above]
  {\footnotesize #3};
}
\newcommand{\Mess}[4][0]{
  \stepcounter{seqlevel}
  \path
  (#2)+(0,-\theseqlevel*\unitfactor-0.7*\unitfactor) node (mess from) {};
  \addtocounter{seqlevel}{#1}
  \path
  (#4)+(0,-\theseqlevel*\unitfactor-0.7*\unitfactor) node (mess to) {};
  \draw[->,>=angle 60] (mess from) -- (mess to) node[midway, above]
  {\footnotesize #3};
}
\def\eg{\textit{e.g.}\xspace}
\def\etal{\textit{et~al.}\xspace}
\def\ie{\textit{i.e.}\xspace}
\newcommand{\sysname}{\ensuremath{\mathtt{Lite\text{-}PoT}}\xspace} 
\newcommand{\defn}{\triangleq}
\newcommand{\pp}{\ensuremath{\mathsf{pp}}}
\newcommand{\gpone}{\mathbb{G}_{1}}
\newcommand{\gptwo}{\mathbb{G}_{2}}
\newcommand{\gptarget}{\mathbb{G}_T}
\newcommand{\genone}{I_{\gpone}}
\newcommand{\gentwo}{I_{\gptwo}}
\newcommand{\gentarget}{I_{\gptarget}}
\newcommand{\Zp}{\mathbb{Z}_p}
\newcommand{\isequal}{=_{?}}
\newcommand{\isnotequal}{{\neq}_?}
\newcommand{\histidx}{t}
\newcommand{\pk}{\mathsf{pk}}
\newcommand{\sk}{\mathsf{sk}}
\newcommand{\vk}{\mathsf{vk}}
\newcommand{\st}{\mathsf{st}}
\newcommand{\server}{\mathsf{S}}
\newcommand{\user}{\mathsf{U}}
\newcommand{\contract}{\mathcal{C}}
\newcommand{\presig}{\sigma_{\mathsf{prv}}}
\newcommand{\cursig}{\sigma_{\mathsf{cur}}}
\newcommand{\prevk}{\vk_{\mathsf{pre}}}
\newcommand{\curvk}{\vk_{\mathsf{cur}}}
\newcommand{\stkey}{\mathsf{st}_{\pk}}
\newcommand{\stser}{\mathsf{st}_{\mathsf{ser}}}
\newcommand{\tx}{\mathsf{tx}}
\newcommand{\txser}{\mathsf{tx}_{\contract}}
\newcommand{\txcontract}{\mathsf{tx}_{\contract}}
\newcommand{\stcontract}{\mathsf{st}_{\contract}}
\newcommand{\sample}{\leftarrow_{\$}}
\newcommand{\com}{\fun{Com}}
\newcommand{\compp}{\mathsf{compp}}
\newcommand{\arrcompp}{\mathtt{COMPP}}
\newcommand{\checkwellform}{\fun{CheckWellForm}}
\newcommand{\checkinclusion}{\fun{CheckInclusion}}
\newcommand{\inclusionsoundgame}{\mathbf{G}_{\mathsf{incl-snd}}}
\newcommand{\gamesdh}{\mathbf{G}_{(n, k)-\mathsf{sdh}}}
\newcommand{\advinclude}{\ensuremath{\mathcal{A}_{\mathsf{snd}}}}
\newcommand{\advb}{\ensuremath{\mathbf{B}}}
\newcommand{\funonupdate}{\fun{OnUpdate}}
\newcommand{\funoffupdate}{\fun{OffUpdate}}
\newcommand{\funcheckinclude}{\fun{CheckInclude}}
\newcommand{\funfraudproof}{\fun{Disprove}}
\newcommand{\funextwellform}{\checkwellform}
\newcommand{\idxill}{\ensuremath{i_{\mathsf{ill}}}}
\newcommand{\idxgp}{\ensuremath{i_{\mathsf{gp}}}}
\newcommand{\memprove}{\mathsf{MPrv}}
\newcommand{\memverify}{\mathsf{MVrf}}
\newcommand{\arrelem}{\mathcal{E}}
\newcommand{\pos}{\mathsf{PoP}}
\newcommand{\pop}{\mathsf{PoP}}
\newcommand{\coeff}[2]{c_{#1}[#2]}
\newcommand{\event}[1]{\ensuremath{\mathsf{E}_{#1}}}
\newcommand{\pr}[1]{\ensuremath{\mathsf{Pr[#1]}}}
\newcommand{\calg}{\ensuremath{\mathsf{C_{alg}}}}
\newcommand{\dalg}{\ensuremath{\mathsf{D_{alg}}}}
\newcommand{\ealg}{\ensuremath{\mathsf{E_{alg}}}}
\newcommand{\falg}{\ensuremath{\mathsf{F_{alg}}}}
\newcommand{\faalg}{\ensuremath{\mathsf{FA_{alg}}}}
\newcommand{\fbalg}{\ensuremath{\mathsf{FB_{alg}}}}
\newcommand{\jalg}{\ensuremath{\mathsf{J_{alg}}}}
\newcommand{\eqp}{\ensuremath{=}}
\newcommand{\numhashquery}{\ensuremath{q_{H}}}
\newcommand{\hashgp}{\ensuremath{H_{\gptwo}}}
\newcommand{\hashzp}{\ensuremath{H_{\Zp}}}
\newcommand{\checkone}{\ensuremath{\mathtt{KnowledgeCheck}}}
\newcommand{\checktwo}{\ensuremath{\mathtt{WellformCheck}}}
\newcommand{\checkthree}{\ensuremath{\mathtt{NonDegenCheck}}}
\newcommand{%
    
    \import{./figures/}{.pdf_tex}
}[1]{%
    
    \import{./figures/}{#1.pdf_tex}
}
\newcommand{\todotemplate}[3]{%
    \mbox{}
    \marginpar{%
        \colorbox{#2!80!black}{\textcolor{white}{#1}}%
        \vspace*{-22pt}
    }%
    \textcolor{#2}{{#3}}%
}
\newcommand{\lucien}[1]{
\todotemplate{lucien}{purple}{#1}
}
\newcommand{\panos}[1]{
\todotemplate{panos}{red}{#1}
}
\newcommand{\pedro}[1]{
\todotemplate{pedro}{teal}{#1}
}
\newcommand{\pparagraph}[1]{\vspace{0.5em} \noindent \textbf{#1.}}
\newcommand{\zksnark}{zk-SNARK\xspace}
\newcommand{\var}[1]{\ensuremath{\mathsf{#1}}}
\newcommand{\fun}[1]{\ensuremath{\mathtt{#1}}\xspace}
\newcommand{\poly}{\mathsf{poly}}
\newcommand{\adv}{\ensuremath{\mathcal{A}}}
\newcommand{\rt}{\ensuremath{\mathsf{rt}\xspace}}
\newcommand*\circled[1]{\tikz[baseline=(char.base)]{
            \node[shape=circle,fill,inner sep=0.8pt] 
            (char) {\textcolor{white}{#1}};}}
\newcommand{\degree}{\ensuremath{d}}
\newcommand{\ppArray}[2]{\ensuremath{#1[#2]}}
\begin{document}

\title{\sysname: {P}ractical {P}owers-{o}f-{T}au Setup Ceremony}
\makeatletter
\newcommand{\linebreakand}{%
  \end{@IEEEauthorhalign}
  \hfill\mbox{}\par
  \mbox{}\hfill\begin{@IEEEauthorhalign}
}
\makeatother

\author{\IEEEauthorblockN{Lucien K. L. Ng}
\IEEEauthorblockA{Georgia Institute of Technology}
\and
\IEEEauthorblockN{Pedro Moreno-Sanchez}
\IEEEauthorblockA{IMDEA Software Institute\\
Visa Research}
\and
\IEEEauthorblockN{Mohsen Minaei}
\IEEEauthorblockA{Visa Research}
\and
\IEEEauthorblockN{Panagiotis Chatzigiannis}
\IEEEauthorblockA{Visa Research}
\linebreakand
\IEEEauthorblockN{Adithya Bhat}
\IEEEauthorblockA{Visa Research}
\and
\IEEEauthorblockN{Duc V. Le}
\IEEEauthorblockA{Visa Research}}

\maketitle
\begin{abstract}
    Zero-Knowledge Succinct Non-Interactive Argument of Knowledge (\zksnark)
    schemes have gained 
    significant adoption in privacy-preserving applications, decentralized systems (e.g., blockchain), and verifiable computation  
    due to their efficiency. 
    However, the most efficient zk-SNARKs often rely on a one-time trusted setup to generate a public parameter, often known as the ``Powers of Tau" (PoT) string. 
    The leakage of the secret parameter, $\tau$, in the string would allow attackers to
    generate false proofs, compromising the soundness of all \zksnark systems built on it. 

    Prior proposals for decentralized setup ceremonies have utilized blockchain-based smart contracts to allow any party to contribute randomness to $\tau$ while also preventing censorship of contributions. For a PoT string of \degree-degree generated by the randomness of $m$ contributors, these solutions required a total of $O(m \degree)$ on-chain operations (i.e., in terms of both storage and cryptographic operations). These operations primarily consisted of costly group operations, particularly scalar multiplication on pairing curves, which discouraged participation and limited the impact of decentralization.

    In this work, we present \sysname, which includes two key protocols designed to reduce participation costs:
    \emph{(i)} a fraud-proof protocol to reduce the number of expensive on-chain cryptographic group operations to $O(1)$ per contributor.
    Our experimental results show that
    (with one transaction per update)
    our protocol enables decentralized ceremonies for PoT strings up to a $2^{15}$-degree, an ${\approx}16\times$ improvement over existing on-chain solutions;
    \emph{(ii)} a proof aggregation technique that batches $m$ randomness contributions into one on-chain update with only $O(d)$ on-chain operations, independent of $m$.
     This significantly reduces the monetary cost of on-chain updates by $m$-fold via amortization. 




\end{abstract}


\section{Introduction}
\label{sec:intro}



Many cryptographic protocols depend on an initial setup to generate \emph{public parameters}. This is a critical step in preserving their security properties. 
This setup also creates an unwanted \emph{trapdoor} (also called a backdoor or toxic waste in the literature). If anyone learns this trapdoor, it would undermine the security properties of the system using those parameters. 
The \emph{powers-of-tau} (PoT) is a prominent example of such a setup. It is used in KZG polynomial commitments~\cite{aniket-polycommit-2010}, Verkle trees~\cite{verkle-tree}, fast proofs of multi-scalar multiplication (MSM)~\cite{msm-proofs}, the Danksharding proposal~\cite{danksharding} and  zk-SNARKs~\cite{sonic-2019, plonk-eprint-2019}. 
In fact, the increasing use of zk-SNARKS in the blockchain domain make powers-of-tau setup an indispensable tool within a variety of blockchain applications, including zk-rollups~\cite{rollups}, zkBridges~\cite{zkbridge} and privacy-preserving payment systems such as ZeroCash~\cite{zerocash}.  

In PoT, the public parameters ($\pp$) are defined as: 

{\footnotesize
\vspace{-0.3cm}
\begin{align*}
  \pp
  = (&\tau \genone, \tau^2 \genone, \ldots, \tau^n \genone; 
   \tau \gentwo, \tau^2 \gentwo, \ldots, \tau^k \gentwo )
\in \mathbb{G}_1^n \times \mathbb{G}_2^k
\end{align*}
}

Here, $\genone$ and $\gentwo$ are the generators of elliptic curve groups $\mathbb{G}_1$ and $\mathbb{G}_2$, respectively.  
We hereby denote the value $\degree = n+k$ as the \emph{number of powers-of-tau}, where 
the value $\tau$ is the trapdoor. 
{Supporting a large number of powers-of-tau is crucial for enhancing the expressiveness of an application using $\pp$. For example, in zk-SNARKs, the number of powers-of-tau dictates how complex the statement of the proof can be.}
To ensure the security of an application using $\pp$, it is crucial that $\tau$ is generated randomly and remains unknown to anyone. 
As a straightforward solution, one could rely on a trusted party to create $\pp$ in a so-called \emph{trusted setup}. 
This party selects $\tau$ at random, generates $\pp$, and deletes $\tau$ (and the randomness used to create it) as soon as the generation of $\pp$ is completed. 
However, trusted setups are not desirable in practice for decentralized applications as finding a universally accepted trusted party can be challenging.

Given this problem, there is a clear motivation for a distributed protocol for {efficiently} generating powers-of-tau parameters, which allows multiple users to participate.
To realize such a protocol, a key observation about the powers-of-tau $\pp$ is that it is \emph{re-randomizable}. 
Given $\pp$ and without knowing $\tau$, one can produce a new string $\pp'$ by choosing a new random value {$r$}  
and multiplying each component of $\pp$ by an appropriate power of 
{$r$}, more concretely: 

\vspace{-0.3cm}
{\small
\begin{align*}
  \pp'  
  = (&r \cdot \tau \genone, \ldots, r^n \cdot \tau^n \genone;
  r\cdot \tau \gentwo, \ldots, r^k \cdot \tau^k \gentwo )
\end{align*}
}

Here, the new trapdoor becomes $r \cdot \tau$, which is secure as long as \emph{either} $\tau$ or $r$ are unknown and \emph{neither} is zero. 
This re-randomizability property of powers-of-tau permits a relatively simple serial MPC protocol (also called \emph{ceremony}): the $i$-th user in turn randomizes the public parameters with their own chosen value {$r_i$}.  
As long as each participant correctly re-randomizes the $\pp$ from the previous user and at least one of the participants destroys their value {$r_i$} (and the randomness used to draw it), the cumulatively constructed public parameter will be secure. 

However, realizing such a ceremony in practice presents several challenges: 
\emph{(i)} all participants should agree on the final value of $\pp$ (\emph{consensus}); 
\emph{(ii)} each participant should only be able to re-randomize the current $\pp$ and not simply replace it with one for which the trapdoor is known to them (\emph{validity}); 
\emph{(iii)} the final $\pp$ should be available for all participants, as well as the history of prior versions for auditability (\emph{data availability}); 
\emph{(iv)} it should not be possible to prevent any participant from contributing to the ceremony (\emph{censorship resistance}). 
\subsection{State-of-the-art and Goals}

Nikolaenko et al.~\cite{boneh-pot-2024} recently proposed the \emph{first truly decentralized and permissionless setup ceremony} for powers-of-tau. 
They observed that one can leverage a smart contract in a blockchain with expressive smart contracts (like Ethereum)  to coordinate the contributions of different participants of the ceremony. In a nutshell, upon receiving a contribution from a participant, the smart contract 
first checks its validity and 
then stores it as the last valid $\pp$. 
In doing so, this approach seamlessly leverages existing blockchain platforms to achieve part of the desired properties. 
In particular:  
\emph{(i)} consensus is inherently provided by the blockchain; 
\emph{(ii-iii)} the validity of each contribution as well as data availability are ensured by the smart contract itself; 
\emph{(iv)} the permissionless nature of existing blockchains makes them naturally resistant to censorship.

However, the validity check required by the smart contract has a cost in terms of gas (i.e., the fee that the user needs to pay in Ethereum to execute the contract) that is linear in \degree. 
{More significantly, this validity check involves costly elliptic curve group operations, making it expensive for users to contribute}.
For instance, even for contributing a {$\pp$ with a} small number of \degree, {e.g. $2^{10}$ elements,} a user's contribution will require $11,500,000$ gas (i.e., \$315 as reported in~\cite{boneh-pot-2024}).  
Note that limiting the number of powers-of-tau in $\pp$ restricts the applications for which the generated $\pp$ can be used. 
For instance, in zk-SNARKS, the size of the statement that can be proven depends on the number of elements within powers-of-tau (i.e., fewer {values of \degree} can only work with small statements).



The goal of this work is to answer the question: 
\begin{center}
    \emph{Can we devise a decentralized setup ceremony that permits a larger number of powers-of-tau while maintaining consensus, validity, data availability, and censorship resistance?}
\end{center}

\subsection{Our Contributions}

In this work, we positively answer this question. For this, our contributions are as follows.

First, we introduce \sysname, a fully decentralized and permissionless setup ceremony.  
Unlike the approach in~\cite{boneh-pot-2024}, where each $\pp$ update is proactively checked for two conditions before acceptance -- \emph{(i)} the contributor knows the value $r$ used in the contribution, and \emph{(ii)} the proposed $\pp$ is well-formed -- \sysname's key innovation is a fraud-proof protocol that shifts the paradigm for creating $\pp$. In \sysname, each proposed $\pp$ update is only checked for condition \emph{(i)} before acceptance, while condition \emph{(ii)} is not proactively verified. If a user \emph{a posteriori} identifies a $\pp$ update as invalid under condition (ii), they can submit a succinct fraud-proof. If the contract verifies this fraud-proof, it rejects the invalid $\pp$ update and reverts to the previous valid $\pp$ update.

{This paradigm shift allows us to design the fraud-proof protocol in a way that significantly reduces gas consumption compared to~\cite{boneh-pot-2024}. This reduction enables setup ceremonies with a larger number of powers-of-tau while maintaining key properties such as consensus, validity, data availability, and censorship resistance.}

Second, to further enhance the efficiency of \sysname, we introduce an aggregation technique for combining multiple contributions while keeping the on-chain cost constant. In this setting, users do not submit their $\pp$ and proof for fact 
\emph{(i)} directly to the contract. Instead, they send their contributions to an \emph{untrusted} operator, who collects and forwards them to the contract. A key innovation here is our proof aggregation protocol, which allows the operator to batch $m$ tuples of the form ($\pp_j$, $\pi_j$)—where $\pp_j$ is the contributed $\pp$ and $\pi_j$ is the proof for fact \emph{(i)}—into a single $\pp^*$ and a constant-size proof $\pi^*$ 
consisting of only $3$ group elements.
When the operator submits $(\pp^*, \pi^*)$ to the contract, the contract verifies $\pi^*$ and accepts $\pp^*$ as the latest contribution. As with \sysname, $\pp^*$ is not proactively checked for validity under fact \emph{(ii)}, so users can still challenge its validity later using the fraud-proof designed in \sysname.
This aggregatable contribution~ reduces the on-chain cost from $O(m\cdot \degree)$ -- which would be required to store and process $m$ individual contributions -- to $O(\degree)$. This makes the on-chain gas cost  
independent to 
the number of contributions 
batched off-chain. 
Looking ahead, our construction employs bilinear pairing following the BLS signature scheme~\cite{asiacrypt/BonehLS01}, which is known to be vulnerable to rogue-key attacks if used in a multi-party setting. To address this vulnerability, we implement multiple countermeasures in our proof aggregation protocol and prove the security of our scheme in the algebraic group model (AGM).

Third, we implemented both the on-chain fraud-proof mechanism and the off-chain aggregation techniques in \sysname. Our performance evaluation shows that, even without aggregation, \sysname can reduce on-chain gas consumption by $16$-fold compared to the state-of-the-art solution~\cite{boneh-pot-2024}. More concretely, the efficiency boost of \sysname allows creating a $\pp$ with a number of powers-of-tau of $2^{15}$, thereby enabling for the first time its usage in applications like ProtoDanksharding~\cite{eip4844}, that requires $\pp$ with a number of powers-of-tau between $2^{12}$ and $2^{15}$. 
As a side benefit of our off-chain collaboration paradigm,
our evaluation demonstrates an order-of-magnitude speedup because of the optimized $\pp$ verification.
With our aggregation technique, the contributors can move most of the computation off-chain without worrying about increasing the on-chain cost.
The $\pp$ verification after each sequential update, which is one of the bottlenecks in the ceremony, 
can be shifted off-chain.
Contributors can now adopt optimization, \eg, parallelization, to accelerate the off-chain computation.
It greatly reduces the verification time compared to the prior art~\cite{boneh-pot-2024}, which entirely relies on the Ethereum VM
-- 
an environment that lacks support for parallelism and is constrained in its ability to perform high-performance computations.

\section{Preliminaries}
\pparagraph{Notation}
We denote by $(\dots)$ a list.
$||$ is the concatenation operator. Note that an object can be appended into a set or list via concatenation,
\eg, $(a, b) || c = (a, b, c)$.
When a list has named variables, say, $A = (x, y)$, we can access the variables by their names, \eg, $A[x] = x$ or $A.x$.
The $x$ here is a symbol instead of the value of $x$.
We use $*$ as a wildcard and ``$;$'' as a separator in a 2-dimension list.
We use $\defn$ when defining a function.
We denote by $[m]$ the list $(1, \ldots, m)$, by $\bot$ an null element,
by 
$=_{?}$ the equality check operator that outputs $1$ if $A = B$ and $0$ otherwise.

We write $x \sample X$ to denote sampling an element $x$ from a set $X$ uniformly and independently at random.
$\gpone, \gptwo, \gptarget$ are elliptic curve groups.
We use addition notation for group operations, \ie, group elements are added instead of multiplied, and a scalar multiplies a group element instead of using exponents.
$\hashgp: \{0, 1\}^{*} \to \gptwo$ and $\hashzp: \{0, 1\}^{*} \rightarrow \Zp$ are cryptographic hash function.
We use $\poly(\lambda)$ to denote a polynomial function in the security parameter $\lambda$.

\begin{table}[t]
\center
\caption{Notation} \label{tb:notation}
\small
\resizebox{\columnwidth}{!}{ 
\begin{tabular}{r|l}
  \toprule
  $\pp$ & Public Parameter in the form of Powers-of-Tau \\
  $r$ & Random Value Multiplied to $\pp$ for Update \\
  $n, k$ & $(\gpone, \gptwo)$ Degrees of $\pp$, respectively  \\
  $d$ & $n + k$, the total degree of $\pp$  \\
  $m$ & Number of Contributors in a Batch \\
  $P_1, \ldots, P_n$ & Elements in $\pp$'s $\gpone$ Array \\
  $Q_1, \ldots, Q_k$ & Elements in $\pp$'s $\gptwo$ Array \\
  $\user$ & Contributor  \\
  $\server$ & Operator  \\
  $\contract$ & Smart Contract Holding the Setup Ceremony \\
  $\genone, \gentwo, \gentarget$ & Generator in $\gpone$, $\gptwo$, and $\gptarget$, respectively \\
  $\lambda$ & Security Parameter \\
  \bottomrule
  \hline
\end{tabular}
 }
\end{table}

\pparagraph{Threat Model and Entities}
Our protocols involve three types of entities:
a \emph{smart contract} $\contract$,
\emph{operator(s)} $\server$,
and \emph{contributors} $\user$.
The smart contract $\contract$ acts as a trusted party, executing its predefined code reliably and maintaining data integrity. 
Contributors mainly interact with an operator to contribute their randomness to the $\pp$.
Operators organize the contributions, update the $\pp$, and eventually upload the final $\pp$ and its proofs to $\contract$.
Note that operators do not hold any private state; they simply route messages between contributors and $\contract$.
Both contributors and operators can access the transaction (execution) history, view the state of $\contract$, and submit transactions to $\contract$ for execution.

We assume a permissionless blockchain, allowing anyone to participate as an operator or contributor.
Additionally, we consider a computationally bounded adversary who can corrupt any number of operators and all but one contributor.





\subsection{Cryptographic Tools}

\pparagraph{Vector Commitment} We use vector commitment as one of the building blocks for our protocol.
\begin{definition}[Vector Commitment~\cite{boneh2020graduate}]
A vector commitment scheme consists of four algorithms as follows:
{
 \begin{itemize}
   \item $\fun{Init}(1^{\lambda}) \rightarrow \mathsf{par}$
   takes as input the security parameter $1^{\lambda}$
   and outputs the system parameter $\mathsf{par}$, which is implicitly taken by other algorithms as input.
   \item $\com(A) \rightarrow \rt$
   takes as input an array $A$ and outputs a vector commitment $\rt$.
   \item $\memprove(A, \mathsf{pos}) \rightarrow \mathsf{prf}$
   takes an array $A$ and an index $\mathsf{pos}$ and outputs
   a positional membership proof $\mathsf{prf}$.
   %
   \item $\memverify(\rt, \mathsf{pos}, \mathsf{prf}, e) \rightarrow \{0, 1\}$
   takes as input a vector commitment $\rt$, a position $\mathsf{pos}$, a membership proof, and an element $e$ and outputs a bit $0/1$.

  \end{itemize}
}
\end{definition}
A commitment scheme should satisfy correctness, positional binding, and succinctness.
For the formal
definitions of these properties, we refer readers to \cite{boneh2020graduate}.

Our protocols expect the vector commitment scheme to have $O(1)$-size commitments and $O(\log n)$-size membership proofs and $O(n)$-time verification for a $n$-size committed vector. It can be easily achieved by using Merkle Tree to realize the vector commitment scheme.

\pparagraph{Bilinear Pairing}
A bilinear pairing is a map:
$e: \gpone \times \gptwo \rightarrow \gptarget$
where $\gpone$, $\gptwo$, and $\gptarget$ are groups, typically elliptic curve groups, with prime order $p$ and the following properties:

\begin{compactenum}
    \item \textit{Bilinearity}: For all \( U \in \gpone \), \( V \in \gptwo \), and \( a, b \in \Zp \),
    $$e(a \cdot U, b \cdot V) = ab \cdot e(U, V)$$

    \item \textit{Non-degeneracy:} 
        $e(\genone, \gentwo) \neq \gentarget$

\end{compactenum}

\pparagraph{$(n,k)$-Strong Diffie-Hellman Assumption}
Following is the $(n,k)$-SDH assumption defined by a security game.
\begin{definition}[$\gamesdh^{\adv}$]
  The security game for $(n,k)$-Strong Diffie-Hellman assumption is defined as following:\\
  \begin{compactenum}
    \item The challenger samples $z \sample \Zp$ and sends to $\adv$
    {\footnotesize
    \begin{align*}
    (&\genone, z \genone, \ldots, {z^n} \genone; \gentwo, z \cdot \gentwo, \ldots, z^{k} \gentwo) \in \gpone^{n+1} \times \gptwo^{k+1}
    \end{align*}
    }
    \item $\adv$ returns $(c, U) \in \Zp \times \gpone$.
    \item The challenger returns $c \isnotequal -z \wedge U \isequal \genone^{\frac{1}{c + z}}$.
  \end{compactenum}
\end{definition}

\begin{definition}[$(n,k)$-strong Diffie-Hellman Assumption]\label{def:q-sdh}
  For any $n, k \in \poly(\lambda)$,
  any PTT adversary $\adv$ can win $\gamesdh^{\adv}$ with the neligible probability.
\end{definition}

\pparagraph{BLS Signature Scheme} 
We use BLS as an alternative approach for proof-of-possession in our protocol.
\begin{definition}[BLS Signature~\cite{asiacrypt/BonehLS01}] BLS signature scheme consists of the following algorithms: 
\begin{compactitem}
    \item $\mathsf{KeyGen}(1^{\lambda}) \rightarrow (\sk \sample \Zp^{*}, \pk = \sk \cdot \gentwo)$.
    \item $\mathsf{Sign}(m) \rightarrow \sigma = \sk \cdot H_{\gpone}(m)$
    \item $\mathsf{Verify}(\pk, m, \sigma) \rightarrow e(H_{\gpone}(m), \pk) =_{?} e(\sigma, \gentwo)$
\end{compactitem}
\end{definition}
This signature scheme is existentially unforgeable under adaptive chosen message attacks in the random model and under the computation Diffie-Hellman assumption,
which a weaker assumption than the $(n, k)$-SDH assumption.

BLS signature scheme can be easily extended for signature aggregation.
Suppose multiple signers with $\pk_{1}, \ldots, \pk_{m}$ have their signatures $\sigma_{1}, \ldots, \sigma_{m}$ on the same message.
They can combine their signatures into one: $\sigma^{*} = \sum_{i \in [m]} \sigma_{i}$ and verified by a verification key $\vk = \sum_{i \in [m]} \pk_{i}$.

However, this key aggregation scheme suffers from \emph{rogue key attacks}.
An adversary can forge an aggregate signature on an arbitrary message that has never been signed by the victim.
We discuss their mitigations in \Cref{sec:related-work}.


\pparagraph{Algebraic Group Model (AGM)~\cite{crypto/FuchsbauerKL18}} AGM is an idealized model of group operations for constructing security proofs. It white-boxes the group operations of a PPT adversary $\adv$.
Whenever $\adv$ outputs a group element $Y$, it also outputs an array of coefficients $(a_{1}, \ldots, a_{l})$ such that $U = \sum_{i=1}^l a_{i} X_{i}$,
where $\{X_{i}\}_{i \in [l]}$ are all the group elements $\adv$ has received.
Basically, it restricts $\adv$ to only output group elements as a linear combination of prior seen group elements.
AGM is used to prove the tight security reduction for the (original) BLS signature scheme~\cite{crypto/FuchsbauerKL18} and will serve as an important tool to prove the security of our proof aggregation scheme.

\subsection{Prior Decentralized PoT Ceremony Scheme}
\label{sec:boneh-pot}
In this part, we recall how the decentralized setup works as described in \cite{boneh-pot-2024}. We then present several insights on potential improvements to their constructions that will be later achieved by our protocol.



\pparagraph{Initialization} In~\cite{boneh-pot-2024},
there is a smart contract initialized with a public parameter $\pp_0$:
\begin{align*}
    \pp_0 &= (\genone, \ldots, \genone; \gentwo, \ldots, \gentwo)
\end{align*}
This initialization step can be thought of as the first contribution with $\tau = 1$.

\pparagraph{Power-of-Tau Update}
After the initialization step, the $i$-{th} contributor can update the current parameter, $\pp = (P_1, \ldots, P_n; Q_1, \ldots, Q_k)$ by submitting:
\begin{align*}
    \pp' &= (P'_1, \ldots, P'_n; Q'_1, \ldots, Q'_k)
        \\&=  (rP_1, \ldots, r^n P_n; r Q_1, \ldots, r^k Q_k)
\end{align*}
to the contract along with a cryptographic proof $\pi$.
The smart contract accepts $(\pp', \pi)$ and update $\pp \leftarrow \pp'$ if and only if the pair $(\pp', \pi)$ passes the following three checks:

      \pparagraph{(Check \#1) $\checkone$}
      This check binds the updated string with the previous one. The contributor uses $\pi$ to demonstrate, in zero-knowledge, the knowledge of $r \in \Zp^{*}$ such that
      $P'_{1} = r \cdot P_{1}$,

  This procedure safeguards against any \emph{malicious} attempt to replace the previous string with a new one. 
  For this check, the construction in~\cite{boneh-pot-2024} uses a Fiat-Shamir version of Schnorr's $\Sigma$-protocol for its on-chain efficiency.  
  
    \pparagraph{(Check \#2) $\checktwo$}
    This check  ensures that $\pp'$ is a well-formed powers-of-tau string,
        \ie, there exists $\tau \in \Zp^{*}$ such that
       $\pp_i = (\tau \genone, \tau^{2} \genone, \ldots, \tau^{n} \genone; \tau \gentwo, \ldots, \tau^{k} \gentwo)$. To perform $\checktwo$, the smart contract executes the following on-chain computation:
       
        {\scriptsize
        \begin{align*}
        &e\left(\sum_{i=1}^{n} \rho_1^{i-1} P_i, \ \gentwo + \sum_{\ell=1}^{k-1} \rho_2^{\ell} Q_{\ell}\right) \stackrel{?}{=} e\left(\genone + \sum_{i=2}^{n-1} \rho_1^{i} P_i, \ \sum_{\ell=1}^{k} \rho_2^{\ell-1} Q_{\ell}\right)
        \end{align*}
        }

    \pparagraph{(Check \#3) $\checkthree$}
    $\pp_i$ is not degenerated, \ie, $r \neq 0$. This is done by checking $P'_1 \neq \genone$.

\subsubsection{Comparison With Our Work}
Looking ahead, we make two main observations that differentiate this state-of-the-art work from our approach (c.f. \cref{sec:tech-overview}). 

\vspace{1em}
\insightbox{
  While Schnorr's $\Sigma$-protocol used in~\cite{boneh-pot-2024} is efficient for on-chain applications, it lacks the aggregation properties found in other BLS-style proof-of-possession. As such, it might not be the best choice to combine multiple contributions into one.
}
  
We will discuss in \Cref{sec:agg-schnorr} that it is not immediate to extend \cite{boneh-pot-2024} for aggregation.

  \vspace{1em}
  \insightbox{
      The check $\checktwo$ incurs substantial on-chain computation costs as it requires $O(\degree)$ elliptic curve multiplications.
      To avoid such expensive operations, we can store the $\pp$ in a more efficient way and employ a considerably more efficient $O(\log(\degree))$ fraud-proof check to handle invalid $\pp$.
    }

In light of these insights, this work introduces two novel techniques: an \emph{on-chain fraud-proof mechanism} (\cref{subsec:fraud-proof-overview}) and \emph{off-chain aggregatable contributions} (\cref{sub:agg-contribution}), reducing the cost of PoT updates.




\section{Technical Overview}
\label{sec:tech-overview}

The goal of this work is to design a decentralized setup ceremony that ensures key properties such as consensus, validity, data availability, and censorship resistance while supporting more powers-of-tau than existing solutions.

To achieve this, we present two key contributions depicted in~\cref{fig:protocol-overview}: \circled{1} a fraud-proof mechanism; and \circled{2} an aggregatable contribution mechanism.

\begin{figure*}[t]
  \centering
  \includegraphics[width=0.99\linewidth]{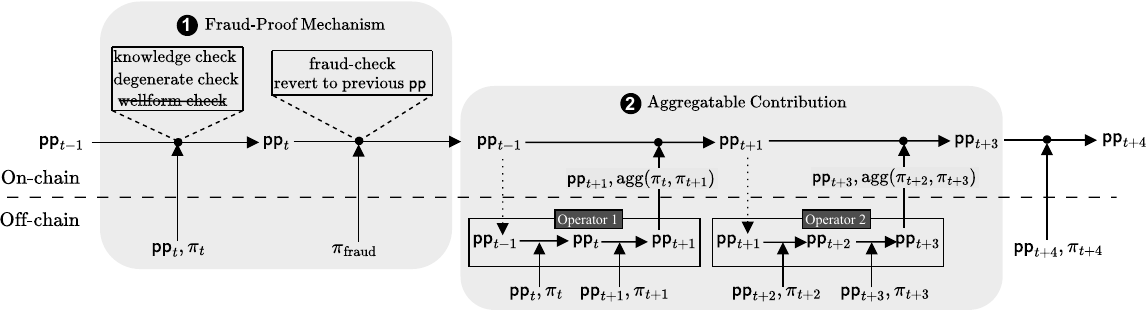}
  
  \caption{\protect\circled{1} provides an overview of our Fraud-Proof mechanism. For an update, the contract ($\contract$) only runs the inexpensive verification for $\checkone$ and $\checkthree$. Unlike in~\cite{boneh-pot-2024}, $\contract$ skips the costly verification $\checktwo$. Only when $\pp_t$ is ill-formed, a fraud-proof $\pi_{\mathtt{fraud}}$ will be submitted, and $\contract$ can verify $\pi_{\mathtt{fraud}}$ with $O(\log\degree)$ gas 
  and then revert back to $\pp_{t-1}$. \protect\circled{2} gives an overview of our Aggregatable Contribution Scheme. Here, an operator receives contributions from multiple contributors, verifies them off-chain, aggregates the proofs, and submits the latest $\pp$ with the constant-size proof. Another operator can then take over and aggregate subsequent contributions.}
\label{fig:protocol-overview}
\end{figure*}
\subsection{Fraud-Proof Mechanism}
\label{subsec:fraud-proof-overview}
\label{subsec:fraud-overview}

Our first key observation for reducing 
the ceremony cost
(c.f. \circled{1} in~\cref{fig:protocol-overview}) involves adopting an optimistic approach. 
Unlike~\cite{boneh-pot-2024}, where the contract explicitly verifies each update of $\pp$, our contract assumes each update is well-formed by default. 
If anyone successfully challenges an ill-formed $\pp$ using a fraud-proof, the contract reverts the state to the last valid $\pp$. 
Below, we detail this optimistic paradigm.


\pparagraph{Optimistic Paradigm}
In this paradigm, a user contributes to the power-of-tau generation with a tuple $(\pp_t, \pi_t)$, where $\pp_t$ is the user's contribution and $\pi_t$ is the zero-knowledge proof required for $\checkone$ in~\cite{boneh-pot-2024} (see~\cref{sec:boneh-pot}). 
The contract accepts the update $\pp_t$ only if $\checkone$ and $\checkthree$ are satisfied. 
Unlike~\cite{boneh-pot-2024}, $\checktwo$ is not proactively verified by the contract, allowing for the potential submission of an invalid contribution.


If any user detects an invalid 
contribution, $\pp_t$, they can submit a fraud proof, $\pi_\textit{fraud}$, to the contract. Upon a successful challenge, the contract rewinds to the state immediately preceding the invalid $\pp$, i.e., $\pp_{t-1}$, nullifying all subsequent updates. This approach incentivizes contributors to validate prior updates and challenge any invalid ones, ensuring the integrity of the process.

In the best-case scenario, where all the users are honest, this optimistic paradigm is more efficient than~\cite{boneh-pot-2024}, as $\checktwo$ is bypassed after each contribution.
However, in the worst-case scenario, if an invalid $\pp_t$ is identified, the contract must perform the expensive $\checktwo$ to validate the challenged contribution. 
While $\checkone$ and $\checkthree$ are efficient, they do not fully guarantee the well-formedness of $\pp$. The $\checktwo$ serves as a critical safeguard when discrepancies arise.

Next, we overview how to overcome this challenge.

\pparagraph{Fraud Proof Implementation}
To mitigate the potentially high costs associated with verifying a fraud proof, we introduce an optimized fraud-proof mechanism. This mechanism reduces the verification process to simply reading four group elements and performing two pairing equality checks, independent of \degree. Additionally, it requires only $O(\log \degree)$ hashing operations, which are significantly more gas-efficient compared to the pairing equality checks\footnote{Each pairing operation consumes $34,000$ gas compared to $30$ gas for a hashing in EVM-based chains.}.


The first pairing equality check is performed as follows:
\begin{align*}
    e(P_1, \gentwo) \isequal e(\genone, Q_1)
\end{align*}
where $P_1$ and $Q_1$ are extracted from $\pp = (P_1, \ldots, P_n; Q_1, \ldots, Q_k)$.
If this equality does not hold, the contract accepts the fraud-proof, deeming the challenge successful.

However, if the equality holds, we can safely assume that $P_1 = \tau\genone$ and $Q_1 = \tau \gentwo$ for some $\tau \in \Zp$.
Given this assumption, if $\pp$ is ill-formed but still passes the above equality test, it must be structured as follows:
\begin{align*}
    \pp = (\tau\genone, \ldots, \tau^{i-1}\genone, \textcolor{red}{\delta} \genone, \ldots; \quad \tau \gentwo, \ldots )
\end{align*}
where the $i$-th item in the array for $\gpone$ is incorrect, i.e., $\delta \neq \tau^i$ for an index $i > 1$, and it is the first incorrect item in $\gpone$. This implies that $P_{i-1} = \tau^{i-1}\genone$. Without loss of generality, we assume the error occurs in $\gpone$, though our approach also applies to errors in the array for $\gptwo$.

To prove the $i$-th item in $\gpone$ is incorrect, the prover submits the index $i$ to the contract, which then verifies the fraud-proof by computing:
\begin{align}\label{eqt:ill-pairing-check}
    e(P_{i-1}, Q_1) \isequal e(P_i, \gentwo)
\end{align}

Given our assumption that $P_i \neq \tau^{i}\genone$, the above equality must not hold, because the left-hand side (LHS) is:
\begin{align*}
e(P_{i-1}, Q_1) =
e(\tau^{i-1}\genone, \tau \gentwo)
=\tau^{i} \cdot e(\genone, \gentwo)
\end{align*}
while the right-hand side (RHS) is:
\begin{align*}
    e(P_i, \gentwo)
    = e(r\genone, \gentwo) = r \cdot e(\genone, \gentwo)
\end{align*}
Thus, LHS $\neq$ RHS because $r \neq \tau^{i}$.

While there may be other incorrect items in $\pp$, it is sufficient to prove $\pp$ is ill-formed by demonstrating that just one item is incorrect.

While the approach so far reduces the amount of computation required by the contract to verify the correctness of a fraud-proof, the verification still requires the contract to access (and thus store) the complete $\pp_t$ contribution, which is expensive when the number of powers-of-tau increases. 
Next, we overview our approach to overcome this challenge.

\pparagraph{Storing $\pp$ as Vector Commitments}
To reduce the storage overhead to verify a fraud proof, 
the $\contract$ only stores $\compp$, which is the Merkle root the submitted $\pp$\footnote{Although $\contract$ no longer stores $\pp$ in its state, $\server$ and $\user$ can still fetch the historic $\pp$ from blockchain transaction history
\eg, for fraud proofs.
}.
When a challenge is raised against one contribution $\pp$, the challenger also provide the membership proofs of the ill-formed elements in $\compp$.
As shown in~\cref{tb:fraud-cost}, computing $\compp$ requires $O(\degree)$ calldata and hashing, which is an order of magnitude cheaper than storing the $\pp$.
To handle a fraud-proof challenge, $\contract$ can verify the membership proof with an insignificant gas cost, requiring only $O(\log\degree)$ calldata and hashing operations.


\pparagraph{Practical Deployment and Cost} 
To deploy the aforementioned fraud-proof mechanism in practice, one should consider several aspects. 
First, a naive approach to the challenge mechanism might involve a dedicated fixed period for each $\pp$ update, during which no further updates occur.
However, this waiting period is unnecessary, as we discuss in~\cref{sec:discussion}. 
Second, a user that detects an invalid contribution to $\pp$ needs to pinpoint the ill-formed element within $\pp$ to create the (succinct) fraud-proof. As we describe in~\cref{sec:discussion}, there exists a mechanism for the user to create such a proof with only a small performance cost.

In summary, we have designed a fraud-proof-based ceremony protocol that asymptotically outperforms the state-of-the-art~\cite{boneh-pot-2024}, as shown in~\cref{tb:fraud-cost}.

\begin{table}[b!]
\centering
\caption{Gas Cost with/out Fraud Proof Mechanism}
\label{tb:fraud-cost}
\resizebox{\columnwidth}{!}{ 
    \begin{threeparttable}
    \begin{tabular}{lcc}
    \toprule
    & Update Cost & Fraud Proof Verification Cost \\
    \midrule
    \cite{boneh-pot-2024} & ${\approx} 2n$ \var{ECMULT} + $4k - 2$ \var{ECPAIR} & $-$ \\
    \midrule
    \multirow{2}{*}{\sysname} & ${\approx}(n+k)$ \var{CALLDATA} & $\approx \log_{2} (n + k)$ \var{CALLDATA} \\
    & $+ 2(n+k)$ \var{Keccak} & $+ \log_{2} (n + k)$ $\var{Keccak}$ \\
    \bottomrule
    \end{tabular}
    \begin{tablenotes}\footnotesize
    \item[*] \var{ECMULT} and \var{ECPAIR} are ${>}10\times$ more expensive than \var{CALLDATA} and \var{Keccak}.
    \item[*] Aggregation is not considered in this comparision.
    \end{tablenotes}
    \end{threeparttable}
} 
\end{table}



\subsection{Aggregateable Contribution}
\label{sub:agg-contribution}

Despite the improvement above, the on-chain asymptotic costs remain 
$O(m\cdot d)$ (in terms of calldata) for $m$ contributors, as each contributor still needs to 
update the $\pp$ with proofs individually.
We aim to reduce the on-chain cost to $O(d)$, independent of $m$.
We introduce an aggregation technique that compresses multiple updates into a single one, forming our second key contribution to this work. 

Our second key observation to reduce cost of the proof-of-tau ceremony (c.f. \circled{2} in~\cref{fig:protocol-overview}) is as follows.
Instead of having each contributor upload their own update on-chain, 
we propose a (trustless) operator that aggregates contributors' updates and proofs \textit{off-chain}. This operator only needs to upload a pair consisting of the final $\pp$ and an aggregated proof. 
Upon submission, the contract performs $\checkone$ and $\checkthree$ but omits $\checktwo$ (as described in~\cref{subsec:fraud-proof-overview}). 

In our aggregation protocol, contributors first obtain the public parameters $\pp$ from the operator. After computing their proofs, they submit updates to the operator to incorporate into $\pp$. Once the operator publishes the updated $\pp$ on-chain, contributors can verify that their contributions were properly included. Only after this verification do contributors proceed with additional operator interactions.

\pparagraph{Aggregation Process} 
In more detail, the aggregation process occurs sequentially as follows: the first contributor queries the latest state $\pp_{t}$ from the operator. The contributor then updates $\pp_t$ to $\pp_{t+1}$ and submits the proof $\pi_{t+1}$ for $\checkone$ to the operator. After the operator verifies the update and the proof, the process continues with the next contributor.
Once the operator has received $m$ contributions, it will execute a proof aggregation algorithm to generate a single proof from all individual proofs received from the contributors. Both the final aggregated contribution and the aggregated proof are then submitted to the contract. 

\pparagraph{Key Challenge in Proof Aggregation} The main challenge in this process is how to design such an aggregated proof. Each contributor must prove their knowledge of a random value $r$ used to update $\pp = (\tau \cdot \genone, \ldots)$ to $\pp' = (r \cdot \tau \cdot \genone, \ldots)$, ensuring that they cannot ``reset'' $\tau$.
The contributor can use Schnorr's $\Sigma$-protocol~\cite{crypto-1989-1727} and BLS-style proof of possession~\cite{eurocrypt/RistenpartY07} to prove its knowledge of $r$.

However, when aggregating updates, the final $\pp'$ becomes $(r_{1} \ldots r_{m} \cdot \tau \genone, \ldots)$, and no party -- including the operator or contributors -- knows the accumulated random value $r_{1} \cdot \ldots \cdot r_{m}$.
Furthermore, for security reasons, no individual should learn the entire accumulated randomness.

\pparagraph{Alternatives and Limitations} One alternative is to use off-the-self MPC protocols or collaborative zk-SNARKS. These techniques, however, require the contributors to collaborate and stay online, executing the protocol synchronously. This introduces inefficiencies, as re-execution of the entire protocol may be needed if a party aborts.
Instead, we aim for a \emph{fire-and-forget} aggregation, where contributors can complete their tasks in a single message to the operator, avoiding complex interactions with either the operator or other contributors.

\pparagraph{Efficiency and Practicality}
Looking ahead, the succinctness of our aggregated proof and the efficiency of verification are crucial for practicality.
Our conservative estimation on the on-chain cost of the prior Schnorr's $\Sigma$-based-protocol~\cite{boneh-pot-2024} is ${>}10,000$ gas per contributor.
In Ethereum's KZG Ceremony for PoT, with over  $140,000$ contributors, 
this would lead to a gas consumption exceeding $1.4$ billion, more than $46\times$ the gas limit per Ethereum block.

In contrast, our scheme's verification is significantly more efficient.
The smart contract only needs to verify 
a proof consisting of a three group elements  using three pairing operations, making this the first practical approach for Ethereum's KZG ceremony for PoT. 

Next, we provide an overview of our aggregation protocol. For ease of exposition, we introduce our scheme gradually through different cases with increasing complexity.

\subsubsection{Strawman Aggregation Protocol}


In this part, we explain different strawman protocols. 

\pparagraph{Single Contributor Case}
Suppose there is only one contributor $\user_{1}$. It samples a secret key $\sk$ and public key $\pk$:
\begin{align*}
\sk \sample \Zp \quad\quad \pk = \sk \cdot \genone \in \gpone
\end{align*}
Then it computes a $\pp$ exactly like the normal case: it samples $r \sample \Zp$ and update $\pp$ to $\pp'$ using $r_{1}$ as the random value.
To produce a proof of possession of $r$, it computes
\begin{align*}
\sigma = \sk \cdot \ppArray{\pp}{Q_1} = \sk \cdot r \cdot \gentwo \in \gptwo
\end{align*}
and outputs $\pi = (\pk, \sigma)$ as the proof.

For a verifier, \eg, $\contract$ or $\server$, to conduct $\checkone$,
the contributor submits $\pp$, the verification key $\vk = \pk$, and $\sigma$ to the verifier, who then verifies that
\begin{align*}
e(\vk, \ppArray{\pp}{Q_1}) \isequal e(\genone, \sigma)
\end{align*}
Correctness holds because
\begin{align*}
 e(\vk, \pp.Q_{1}) = e(\sk \genone, r \gentwo) &=  e(\genone, (\sk \cdot r) \gentwo) 
 \\&= e(\genone, \sigma)
\end{align*}

Meanwhile, the secrecy of $r$ and possession of $r$ (by $\user_{1}$) are both gauranteed
by the $(n, k)$-Strong Diffie-Hellman Assumption.


\pparagraph{Two Contributors Case}
While the above construction already provides a new protocol for checking $\checkone$,
the most interesting aspect is how we can extend it to an aggregate protocol for multiple contributions.


Now, suppose there is a second contributor $\user_{2}$ with random value $r'$ and a key pair $(\sk', \pk') \in \Zp^* \times \gpone$.
Firstly, $\user_{2}$ obtains the prior $\pp$ from $\user_{1}$ and computes $\pp'$ from $\pp$ with $r'$.

This time, the computation for $\checkone$'s proof is different.
$\user_{2}$ also obtains $\sigma$ from $\user_{1}$,
computes
\begin{align}\label{eqt:single-update}
  \sigma' = r' \cdot \sigma + \sk' \cdot \ppArray{\pp'}{Q_1}
\end{align}
and finally outputs $\pp'$ with $\pi' = (\pk', \sigma')$ as the proof.

Anyone with $\pk$ can verify this proof by computing
a new verification key $\vk \leftarrow \pk + \pk'$ and checking
\begin{align}\label{eqt:single-pairing-check}
e(\vk, \ppArray{\pp'}{Q_1}) \isequal e(\genone, \sigma')
\end{align}
Correctness holds because
\begin{align*}
  e(\vk, \ppArray{\pp'}{Q_1}) &= e(\pk + \pk', r' r \gentwo)
  \\ &= e(\genone, (\sk + \sk') r' r \gentwo)
  = e(\genone, \sigma')
\end{align*}

Again, by the $(n,k)$-Strong Diffie-Hellman assumption, the above protocol guarantees
i) the secrecy of $r_{1}$ and $r_{2}$ , and ii) $\user$ and $\user'$ joint possession of $r' \cdot r$.

Meanwhile,
this protocol achieves most of the promised goals:
\emph{(i)} the proof is aggregated into a\emph{ co}nstant-size proof ($\sigma'$ and $\vk$), consisting of only two group elements;
\emph{(ii)} it has the \textit{fire-and-f\emph{orge}t} property, as $\user_{2}$ does not need to interact with $\user_{1}$ except for obtaining $\sigma$,
{(Looking ahead, a trustless operator will take care of the relay of $\sigma$s to entirely eliminate interaction among users)}.
and \emph{(iii)} it produces a proof for the \emph{oblivious} accumulated randomness, as neither $\user_{1}$ nor $\user_{2}$ knows $r' \cdot r$


\pparagraph{Three or More Contributors Case}
Generalizing the above strawman scheme to more contributors is straightforward.
Each new contributor $\user'$ retrieves $\pp$ and $\sigma$ from the previous contributor, updates $\pp$ to $\pp'$, and computes the new proof $\pi' = (\pk', \sigma')$ using \Cref{eqt:single-update}.
Verification of this update can be performed by anyone with access to all the public keys ($\pk$s), using \Cref{eqt:single-pairing-check} where $\vk$ is the sum of all $\pk$s.

\paragraph*{Adding Operators and Contract into the Picture.}
The role of an operator $\server$ is to coordinate the contribution in a batch.
It collects the public keys ($\pk$s) from the contributors and relays the latest $\pp$ and $\sigma$.
Once enough contributions have been gathered, the operator uploads the final $\pp$ along with the aggregated proof $\pi_{\server}$ to the contract $\contract$:
\begin{align*}
\footnotesize
\pi_{\server} &= \mathsf{agg}_{\mathsf{straw}}(\{\pi_{i} = (\sigma_{i}, \pk_{i})\}_{i \in [1..m]}) 
\\& \defn (\sigma_{m}, \vk = \sum_{i \in [i..m]} \pk_{i})
\end{align*}

The contract $\contract$ will verify $\pp$ with $\pi_{\server}$ using \Cref{eqt:single-pairing-check}, while maintaining the fraud-proof mechanism, and then update $\pp$.
After this, another operator can take over by fetching the $\pp$ and $\sigma$ from $\contract$.

\paragraph*{Maintaining $\vk$ on $\contract$.}
Notice that an operator might not have access to the $\pk$s held by previous operators, preventing it from computing the full verification key $\vk$ by summing all $\pk$s.
Our solution is to let $\contract$ store $\prevk$, the verification key from the previous update.
When a new operator updates $\pp$, it provides the set of $\{\pk_{i}\}_{i}$ by submitting $\curvk = \sum_{i} \pk_{i}$ to $\contract$. The contract then computes $\vk = \prevk + \curvk$.
Once the submission is verified, $\contract$ updates $\prevk \leftarrow \vk$.




\subsubsection{Contribution Inclusion Check}
A critical challenge arises in ensuring that contributors can verify whether their contributions were included in the final $\pp$. 
Even after observing $\pp$, $\sigma$, and $\vk$ on $\contract$, a contributor $\user_{i}$ has no guarantee that their contribution is indeed included.
The operator $\server$ could claim to accept $\user_{i}$'s contribution, but then discard it and continue with an independent $\pp$.
As a result, $\user_{i}$ would be unable to confirm their participation in the final $\pp$.

\pparagraph{Inclusion Check Scheme}
To resolve this, we need a protocol that assures $\user_{i}$ that their random value $r_{i}$ is indeed part of the final $\pp$, specifically that $\ppArray{\pp}{Q_{1}} = (r_{i} \cdot \ldots ) \gentwo$.
A simple solution is for $\server$ to publish a list of the public keys $\stkey = (\pk_{1}, \ldots, \pk_{m})$ allowing each $\user_{i}$ to verify that 
$\vk = \pk_{1} + ... + \pk_{m}$ and that $\pk_{i} \in \st_{\pk}$.

The intuition is that since $\vk$ contains $\user_{i}$'s $\sk_{i}$,
the final $\sigma$ should contain $\user$'s $\sigma = (\sk_{i} + \ldots) (r_{i} \cdot \ldots) \gentwo$ to pass the pairing equality check.
Since $\user$'s $\sigma$ contains $r_{i}$, the pairing equality check ensures that $\ppArray{\pp}{Q_{1}}$ also contains $r_{i}$.
More formally, the following chain of reasoning should hold:
{\small
\begin{align*}
  &e(\vk, \ppArray{\pp}{Q_{1}}) = e(\genone, \sigma) \text{ and } \vk = \sk_{i} \genone + \ldots \\
  \Rightarrow& e((\textcolor{red}{\sk_{i}} + \ldots) \genone, \ppArray{\pp}{Q_{1}}) = e(\genone, (\textcolor{red}{ \sk_{i}} + \ldots) (\textcolor{red}{r_{i}} \cdot \ldots) \gentwo) \\
  \Rightarrow & e((\sk_{i} + \ldots) \genone, (\textcolor{red}{r_{i}} \cdot \ldots) \gentwo) = e(\genone, (\sk_{i} + \ldots) (r_{i} \cdot \ldots) \gentwo) \\
  \Rightarrow & \ppArray{\pp}{Q_{1}} = (r_{i} \cdot \ldots)  \gentwo
\end{align*}
}

\pparagraph{(Intra-Operator) Rogue Key Attack}
Despite the inclusion check protocol, there remains a vulnerability known as the \emph{Rogue Key Attack}, a well-known issue in BLS-style schemes.
In our context, a rogue key attack allows a malicious operator to deceive a contributor into believing that its randomness is included in the latest $\pp$.

Here is how the attack works in our context:
Suppose a contributor, $\user$, with public key $\pk$, submits their key to the operator $\server$, who then maliciously discards all other information from $\user$.
$\server$ then proceeds to work with a compromised contributor $\user'$, with $(\sk'$, $\pk')$, and uploads the final $\pp$ containing only $\user'$'s contribution.

The goal is to make $\user$ believe that its contribution was included.
$\server$ publishes a list of public keys, $\stkey = (\pk, \pk' - \pk)$ and claims the verification key $\vk = \pk + (\pk' - \pk)  = \pk'$.
Notice how $\user$'s $\pk$ is effectively ``cancelled'' out.
Using $\sk'$, $\server$ generates a valid signature $\sigma = \sk' \cdot \ppArray{\pp}{Q_{1}}$, which passes the verification, thus falsely convincing $\user$ that their contribution is present simply because $\pk$ appears in $\stkey$.



\pparagraph{Preventing Rogue Key attack with Proof-of-Possession (PoP)}
To defend against this attack, we adopt the proof-of-possession approach~\cite{eurocrypt/RistenpartY07}. This requires each contributor to prove ownership of their public key by signing a hash of the key using their secret key.
Specifically, each contributor $\user$ must provide $\pos = \sk \cdot \hashgp(\pk)$ to $\server$, where the operator can verify the proof using the pairing equation $e(\pk, \hashgp(\pk)) =_{?} e(\genone, \pos)$.
The key insight is that an adversary generating fake public keys (like $\pk' - \pk$) cannot sign on the corresponding hash without knowing the respective secret keys ($\sk' - \sk$).

\pparagraph{Our Final Inclusion Check Protocol}
We enhance the inclusion check protocol by incorporating a PoP mechanism.
$\server$ now publishes the list of key and proof pairs, $\stkey = \{(\pk_{i}, \pos_{i})\}_{i}$ on its public server.
During the verification process, a contributor $\user_{j}$ with public key $\pk_{j}$ confirms:
\emph{(i)} $(\pk_{j}, *) \in \stkey$;
\emph{(ii)} $\curvk = \sum_{i} \pk_{i}$;
and \emph{(iii)} the validity of all $\pos$s.

With these checks in place, the protocol secures the inclusion of each contributor's randomness, and we formally prove its security in \cref{thm:incl-snd}.

\subsubsection{Progressive Key Aggregation}\label{sec:pro-key-agg}
The proof of possession approach can prevent \emph{intra-operator} rogue key attacks,
where the rogue keys exist among an operator's collected $\pk$s.
However, it does not eliminate the \emph{inter-operator} rogue key attacks, where the rogue keys come from other operators.



\pparagraph{(Inter-Operator) Rogue Key Attacks}
Here, we illustrate this attack.
Suppose the (honest) first operator $\server_{1}$
has uploaded $\pp$ and $\vk_{1}$ with its proof.
Now, $\prevk = \vk_{1}$.
A malicious second operator $\server_{2}$ with $(\sk_{2}, \pk_{2})$ can then perform a ``key cancellation'' to remove $\vk_{1}$ from the next verification key. 
It simply uploads $\curvk = -\vk_{1} + \pk_{2}$.
Notice, the new verification key to check $\server_{2}$'s update becomes
\begin{align*}
    \vk = \prevk + \curvk = \vk_{1} + (-\vk_{1} + \pk_{2}) = \pk_{2}
\end{align*}
Now, $\server_{2}$ can use its $\sk_{2}$ to produce a valid $\sigma$ on any $\pp$.

\pparagraph{Unhelpful PoP}
Unfortunately, the PoP approach, which defends against the intra-operator rogue key attack, is ineffective here.
Verifying all $\pk$s and $\pos$s would be too costly for $\contract$, as their total size scales linearly with the number of contributions, thus defeating the purpose of a constant-size aggregate proof.

\pparagraph{Random Coefficients for $\vk$s and $\sigma$s}
To defend this attack, we resort to a ``random coefficient'' approach~\cite{ctrsa/FerraraGHP09,eurocrypt/CamenischHP07,asiacrypt/BonehDN18},
which is a recurring technique for batched signature verification.
Instead of computing $\vk = \prevk + \curvk$,
$\contract$ now samples two random coefficients $\rho_{1}, \rho_{2} \sample \Zp^{*}$
and computes $\vk = \rho_{1} \prevk + \rho_{2} \curvk$.
Now, the key cancellation trick cannot help a malicious $\server$ to get rid of the prior $\prevk$
because with overwhelming probability
\begin{align*}
  \vk &= \rho_{1} \prevk + \rho_{2} \curvk = \rho_{1} \vk_{1} + \rho_{2} (-\vk_{1} + \pk_{2})
  \\&= (\rho_{1} - \rho_{2}) \vk_{1} + \ldots \text{ and } (\rho_{1} - \rho_{2}) \neq 0
\end{align*}

Obviously, $\sigma$ computed by our prior approach cannot pass \Cref{eqt:single-pairing-check}
as $\vk$ is no longer a simple sum of all $\pk$s.
To cater for the randomly combined $\vk$s, the operator and contributors need to instead update two $\sigma$s: $\presig$ and $\cursig$.
In our protocol, $\server$ fetches $\presig$ from $\contract$ and then asks $\user$ with a random value $r$, new $\pp$, and key pair $(\sk, \pk)$ to update as following
\begin{align*}
  \presig \leftarrow r \cdot \presig \quad \cursig \leftarrow \sk \cdot \ppArray{\pp}{Q_{1}} + r \cdot \cursig
\end{align*}
Note that $\presig$ and $\cursig$ are supposed to contain the $\sk$s in $\prevk$ and $\curvk$, respectively.

Then, $\user$ sends $\pp, \pi = (\presig, \cursig, \pk, \pos)$ to $\server$.
After gathering all the updates from contributors, 
$\server$  uploads to $\contract$ the final $\pp$ and 

{
\small
\begin{align*}
\pi_{\server} = \mathsf{agg}(\{\pi_i\}_{i \in [m]}) \defn (\ppArray{\pi_m}{\presig}, \ppArray{\pi_m}{\cursig}, \curvk = \sum_{i \in [m]} \pk_i)
\end{align*}
}


Finally,
$\contract$ computes $\vk = \rho_{1} \prevk + \rho_{2} \curvk$ and $\sigma = \rho_{1} \presig + \rho_{2} \cursig$ and
uses \Cref{eqt:single-pairing-check} to verify them.
The correctness and soundness hold because the above verification is (almost) equivalent to verifying
{\small 
\begin{align*}
e(\prevk, \ppArray{\pp}{Q_{1}}) \isequal e(\genone, \presig)
  \wedge
e(\curvk, \ppArray{\pp}{Q_{1}}) \isequal e(\genone, \cursig)
\end{align*}
}

and for each of these equality checks, we can apply our prior conclusion about the security properties.

\pparagraph{Progressive Key Aggregation}
The random combination on $\sigma$ and $\vk$ is \emph{not} only for this one-time verification.
After passing the verification, $\contract$ will update in its state:
\begin{align*}
  \prevk \leftarrow \rho_{1} \prevk + \rho_{2} \curvk ; \quad
  \presig \leftarrow \rho_{1} \presig + \rho_{2} \cursig
\end{align*}
Without these random coefficients in the update $\prevk$, an inter-operator rogue key attack is still possible.

Another issue for $\contract$ is how to sample $\rho_{1, 2}$.
In practice, $\contract$ is implemented on a blockchain and $\contract$'s randomness can be biased to favor the adversary.
Here, we propose $\contract$ samples $\rho_{\{1,2\}}$ via the following hashing
\begin{align*}
\rho_{j} = H(\prevk ||\curvk || \presig || \cursig || \ppArray{\pp}{Q_{1}} || j ) \text{ for } j \in \{1, 2\}
\end{align*}

We provide a formal security analysis and proofs
against inter-operator rogue key attacks in Theorem~\ref{thm:incl-snd} (Lemma~\ref{lem:hash-zippel} and Lemma~\ref{lem:snd-after-t}).



\section[Construction]{Construction and Security Analysis}
\label{sec:approach}



In this section, we formalize our ideas into algorithms and analyze the security of our proposed construction.

\begin{algorithm}[t]
\footnotesize
    \caption{The On-Chain Update}
    \label{alg:on-update}
    \begin{algorithmic}[1]
    
    
    
    \Procedure{$\fun{ContractSetup}$}{$n, k$}
    \State
    Sets $\pp \leftarrow (\genone, \ldots, \genone; \quad \gentwo, \ldots, \gentwo ) \in \gpone^{n} \times \gptwo^{k}$
    \State
    Sets $\prevk \leftarrow \genone, \presig \leftarrow \gentwo, \arrcompp = ()$
    \State
    Sets $\ell \leftarrow 0$ \Comment{The number of current rounds}
    \EndProcedure
  
    \Procedure{$\fun{OnUpdate}$}{}
    \Comment{Invoked by $\server$}
    \State Outputs $\fun{ContractUpdate}(\pp, \presig', \cursig', \curvk) \neq \mathsf{Abort}$
    \EndProcedure

    \Procedure{$\fun{ContractUpdate}$}{$\pp, \presig', \cursig', \curvk$}
  
    \LComment{Executed on $\contract$}
  \State Computes
          $\rho_{j} = \hashzp(\prevk||\curvk||\presig'|| \cursig' || \pp.Q_{1} || j)$ for
          $j \in \{1, 2\}$
  \State Computes $\vk^{*} = \rho_{1} \cdot \prevk + \rho_{2} \cdot \curvk $
  \State Computes $\sigma^{*} = \rho_{1} \cdot \presig + \rho_{2} \cdot \cursig $
  \If{$e(\genone, \sigma^{*}) \neq e(\vk^{*}, Q_{1})$ or $\pp.P_{1} = \genone$}
  \State Aborts
  \EndIf
  \State Sets $\prevk \leftarrow \vk^*, \presig = \sigma^*$
  \State Sets $\arrcompp \leftarrow \arrcompp || \com(\pp), \ell \leftarrow \ell + 1$
    
    \EndProcedure
  
  \end{algorithmic}
\end{algorithm}


{\centering
\begin{algorithm}[t]
\footnotesize
\caption{\small The Off-Chain Update}
\label{alg:off-update}
\begin{algorithmic}[1]


\Procedure{$\fun{ServerSetup}$}{}
\State
Fetches $\prevk$ from $\contract$'s state.
\State
Sets $\pp = (\genone, \ldots, \genone; \quad \gentwo, \ldots, \gentwo )$
\State
Sets
$\presig \leftarrow \gentwo, \cursig \leftarrow \gentwo, \curvk \leftarrow \genone, \stkey \leftarrow \{\}$
\EndProcedure

\Procedure{$\fun{OffContribute}$}{}
  \Comment{Invoked by $\user$}
\State
Fetches
$(\pp, \presig, \cursig)$ to from $\server$'s state $\stser$
\State Parses $\pp$ as $(P_{1}, \ldots, P_{n}; \quad Q_1, \ldots, Q_k )$
\If{$P_1 = \genone$}
Outputs $0$
\EndIf
\label{algl:cursig-zero}
\State Samples $r \sample \Zp^{*}$ and $\sk \sample \Zp^{*}$
\State Computes $\pp' = (r P_{1}, \ldots, r^{n} P_{n}; \quad r Q_1, \ldots, r^{k} Q_k )$
\State Computes $\pk = \sk \cdot \genone$ and $\pos = \sk \cdot \hashgp(\pk)$.
\State Computes $\presig' = r \cdot \presig$ and
$\cursig' = r \cdot \cursig + \sk \cdot r \cdot P_{1}$
\State Sets $\pi \leftarrow (\pk, \presig', \cursig', \pos)$
\State Outputs $\fun{ServerOffUpdate} (\pp', \pi)$
\EndProcedure






\Procedure{$\fun{ServerOffUpdate}$}{$\pp', \pi$}
  \Comment{Executed on $\server$}
  \State Computes $\vk' = \curvk + \pi.\pk$
  \State Computes $(b, \cdot, \cdot) = \checkwellform(\pp)$
  \If{$b = 0$
  or $\pp'.P_{1} = \genone$ or $e(\pi.\pos, \gentwo) \neq e(\pk, \hashgp(\pi.\pk))$
  or $e(\pi.\presig', \gentwo) \neq e(\prevk, \pp'.Q'_{1})$
  or $e(\pi.\cursig', \gentwo) \neq e(\vk', \pp'.Q'_{1})$
  or $(\pk, \cdot) \in \stkey$}
\State Outputs $0$
\EndIf
\State Updates $\pp \leftarrow \pp'$, $\presig \leftarrow \pi.\presig'$, $\cursig \leftarrow \pi.\cursig'$
\State Updates $\curvk \leftarrow \pi.\vk', \stkey \leftarrow \stkey || (\pi.\pk, \pi.\pos)$
\State Outputs $1$
\EndProcedure

\end{algorithmic}
\end{algorithm}
}

\subsection{Construction}
Our construction consists of the following procedures.




\begin{compactitem}
    \item $\fun{ContractSetup}(1^\lambda, n, k)$ takes as input the degrees of the PoT string $n, k$ and implicitly the security parameter $1^{\lambda}$. Once this algorithm is executed, $\contract$ is considered deployed on a permissionless blockchain and able to receive PoT update for the ceremony.
    The procedure is specified in \Cref{alg:on-update}.
    
    \item $\fun{ServerSetup}(1^{\lambda})$ implicitly takes as input the security parameter $1^{\lambda}$.
        This function sets an operator by fetching data from $\contract$ and initializing a public state for off-chain update.
    The procedure is specified in \Cref{alg:off-update}.

    \item $\fun{OffContribute}$ is invoked by and implicitly receives randomness from a contributor.
        This function enables a contributor to perform an off-chain update on $\pp$ by communicating with $\server$ and invoking $\server$'s $\fun{ServerOffUpdate}$ to provide the updated $\pp$ and proof. Finally, the contributor will receive a bit $0/1$ indicating if the off-chain update is successful.
    The procedures are specified in \Cref{alg:off-update}.

    \item $\fun{OnUpdate}$ is invoked by an operator after gathering off-chain updates.
        This function submits the updated $\pp$ with the proof to $\contract$ via invoking $\contract$'s $\fun{ContractUpdate}$.
        If $\contract$ does not abort, this function considers the update successful and outputs $1$; otherwise, $0$.
    The procedure is specified in \Cref{alg:on-update}.

    \item $\fun{CheckInclusion}(\pk)$ is invoked by $\user$ with a $\pk$ after $\server$ has completed an on-chain update.
        This function fetches data from $\server$'s public state and from the state and transaction history of $\contract$,
        and outputs a bit $0/1$ indicating if $\user$'s contribution is included on a valid on-chain update.
    The procedure is specified in \Cref{alg:check-inclusion}.

    \item $\fun{Disprove}(t)$ is invoked by $\user$ and takes as input a round number $t$.
        This function communicates with $\contract$ and submits a fraud-proof via invoking $\contract$'s $\fun{RecvFraud}$.
        It outputs a bit $0/1$ indicating if $\contract$ has rewound the state to void that $\pp$.
    The procedure is specified in \Cref{alg:give-fraud-proof}.

\end{compactitem}

{\centering 
\begin{algorithm}[t]
    \footnotesize
    \caption{The Inclusion Check}\label{alg:check-inclusion}
    \begin{algorithmic}[1]
    
    
    
    \Procedure{$\fun{CheckInclusion}$}{$\pk$}
    \Comment{Invoked by $\user$}
    
    \State
    Fetches
    $\pp^{\server}, \curvk^{\server}, \stkey$ from $\server$
    \State
    Fetches
    $\pp^{\contract}, \curvk^{\contract}$
    from $\contract$'s transaction
    
    \State
    Computes $(b, \cdot, \cdot) = \checkwellform(\pp)$.
    \If{$b = 0$}
    Outputs $0$
    \EndIf
    
    \State
    Sets $\vk' = \genone$
    \For{each $(\pk_{i}, \pos_{i})$ in $\stkey$}
        \If{$e(\pk_{i}, \hashgp(\pk_{i})) \neq e(\genone, \pos_{i})$}
        Outputs $0$
        \EndIf
      \State Sets $\vk' = \vk' + \pk_{i}$
    \EndFor
      %
    \If{$\vk' = \curvk^{\mathsf{ser}} = \curvk^{\mathsf{tx}}$ and $\pp^{\server} = \pp^{\contract}$ and $(\pk, \cdot) \in \stkey$}
 Outputs $1$
    \Else \xspace Outputs $0$
    \EndIf
    \EndProcedure
    
\end{algorithmic}
\end{algorithm}
}

\begin{algorithm}[t]
    \footnotesize
    \caption{The Fraud Proof Algorithm}\label{alg:give-fraud-proof}
    \begin{algorithmic}[1]
    %

    \Procedure{$\fun{Disprove}$}{$t$}
    \Comment{Invoked by $\user$}
    \State
    Read $(t, \pp_{t}, \compp_{t})$ from $\contract$'s $\txcontract^{t}$
    \State
    Read $\compp^{\st}_{t}$ from $\contract$'s $\stcontract$
    \If{$\compp_{t} \neq \compp^{\st}_{t}$}
       Aborts
    \EndIf




    \State Computes $(\idxgp, \idxill) = \funextwellform(\pp_t)$.
    \State $\mathcal{M} = \memprove(\pp_{t}, ((1, 1), (2, 1), (\idxgp, \idxill{-}1), (\idxgp, \idxill)))$
    \LComment{$\memprove$ ignores $\idxgp$ and $\idxill$ if they are $\bot$}



    \State
    Outputs $(t, \idxgp, \idxill, \mathcal{M})$

    \EndProcedure
    

    \Procedure{$\fun{RecvFraud}$}{$t, \idxgp, \idxill, \mathcal{M}$}
    \Comment{Invoked by $\contract$}
    \If{$\fun{FraudVerify}(t, \idxgp, \idxill, \mathcal{M}) = 0$}
    Aborts
    \Else
    \xspace
    $\arrcompp \leftarrow \arrcompp[1:t-1], \ell \leftarrow \ell-1$
    \LComment{Removes the items after the $t$-th ones.}
    \EndIf
    \EndProcedure
    
    \Procedure{$\fun{FraudVerify}$}{$t, \idxgp, \idxill, \mathcal{M}$}
    \Comment{Invoked by $\contract$}
    
    \State
    Extracts
    $\arrelem = (\pp[1][1], \pp[2][1], \pp[\idxgp][\idxill{-}1], \pp[\idxgp][\idxill])$ from $\mathcal{M}$
    \LComment{Ignores $\pp[\idxgp][\idxill{-}2], \pp[\idxgp][\idxill]$ if $\idxgp = \bot$}
    \If{any item in $\arrelem$ cannot verify with $\arrcompp[t]$ and $\mathcal{M}$}
    Outputs $0$.
    \EndIf
    \State
    Names $P_{1} = \pp[1][1], Q_{1} = \pp[2][1]$
    \If{$e(P_{1}, \gentwo) \neq e(\genone, Q_{1})$}
     Outputs $1$.
    \EndIf
    
    \State
    Names $E_{i-1} = \pp[\idxgp][\idxill{-}1], E_{i} = \pp[\idxgp][\idxill]$
    \If{$\idxgp = 1$ }
     Outputs $e(E_{i}, \gentwo) = e(E_{i-1}, Q_{1})$
    \ElsIf{$\idxgp = 2$}
     Outputs $e(\genone, E_{i}) = e(P_{1}, E_{i-1})$
    \Else
     \xspace Outputs $0$
    \EndIf
    \EndProcedure

\end{algorithmic}
\end{algorithm}


{\centering
\begin{algorithm}[!ht]
    \footnotesize
    \caption{Extended Well-Formedness Check}\label{alg:local-ext-wellform}
    \begin{algorithmic}[1]
      \Procedure{$\checkwellform$}{$\pp$}
      \State Parses $\pp$ as $(P_{1}, \ldots, P_{n}; \quad Q_{1}, \ldots, Q_{k})$

      \If{$e(P_{1}, \gentwo) = e(\genone, Q_{1})$}
         Outputs $(1, 1, 1)$
      \EndIf

      \For{each $i \in [2..n]$}
        \If{$e(P_{i}, \gentwo) \neq e(P_{i-1}, Q_{1})$}
          Outputs $(1, 1, i)$.
        \EndIf
      \EndFor

      \For{each $i \in [2..k]$}
        \If{$e(\genone, Q_{i}) \isequal e(P_{i-1}, Q_{1})$}
           Outputs $(1, 2, i)$
        \EndIf
      \EndFor

      \State Outputs $(0, \bot, \bot)$ and halts.
    \EndProcedure
    \end{algorithmic}
\end{algorithm}
}

Finally, \Cref{fig:dml-mixer} 
summarizes who initiates each algorithm, the execution order, and the interaction during the algorithms' execution.

\begin{figure}[!ht]
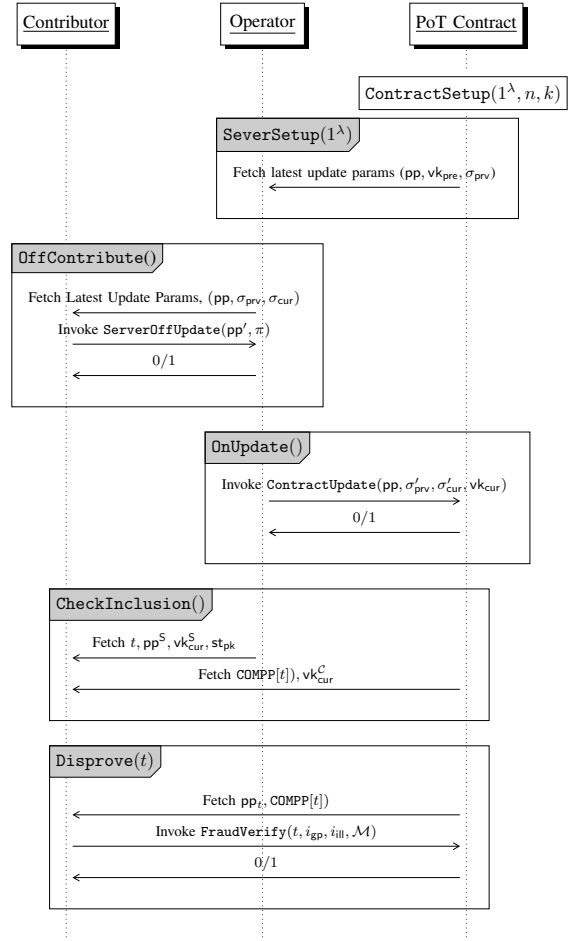

    \centering
    \resizebox{.95\columnwidth}{!}{
    \begin{sequencediagram}
    \newinst{a}{Contributor}
    \newinst[2]{m}{Operator}
    \newinst[2]{b}{PoT Contract}
    \IMess{m}{}{b}
    \node (B) [fill=white] at (mess to){
    \pcbox{\mathtt{ContractSetup}(1^{\lambda}, n,k)}
    };
    \begin{sdblock}[black!20]{$\mathtt{SeverSetup}(1^{\lambda})$}{}
    \Mess{b}{Fetch latest update params ($\pp, \prevk, \presig$)}{m}
    \end{sdblock}
    \begin{sdblock}[black!20]{$\mathtt{OffContribute}$()}{}
    \Mess{m}{Fetch Latest Update Params, $(\pp, \presig, \cursig)$}{a}
    \Mess{a}{Invoke $\mathtt{ServerOffUpdate}(\pp', \pi)$}{m}
    \Mess{m}{$0/1$}{a}
    \end{sdblock}
    \begin{sdblock}[black!20]{${\mathtt{OnUpdate}()}$}{}
    \Mess{m}{Invoke $\mathtt{ContractUpdate}(\pp, \presig', \cursig', \curvk)$}{b}
    \Mess{b}{$0/1$}{m}
    \end{sdblock}
    \begin{sdblock}[black!20]{$\mathtt{CheckInclusion}()$}{}
    \Mess{m}{Fetch $t, \pp^\server, \curvk^\server, \stkey$}{a}
    \Mess{b}{Fetch $\arrcompp[t]), \curvk^\contract$}{a}
    \end{sdblock}
    \begin{sdblock}[black!20]{$\mathtt{Disprove}(t)$}{}
    \Mess{b}{Fetch $\pp_{t}, \arrcompp[t])$}{a}
    \Mess{a}{Invoke $\fun{FraudVerify}(t, \idxgp, \idxill, \mathcal{M})$}{b}
    \Mess{b}{$0/1$}{a}
    \end{sdblock}
    \end{sequencediagram}
    }
    \caption{The interaction among the entities in our protocol and the supposed invocation sequence of the procedures.} 
    \label{fig:dml-mixer}
\end{figure}


\subsection{Security Analysis}\label{subsec:security-analysis}

\pparagraph{Completeness of the Fraud Proof}\label{def:wellform-complete}
The following theorem states that if some $\pp_{t}$ is ill-formed, anyone can generate a fraud-proof from $\pp_{t}$ to rewind $\contract$ back to stage $t-1$.

\begin{theorem}
Suppose $\pp_{t}$ is submitted to $\contract$ at stage $t$ and $\pp_t$ is not in the form of
\begin{align*}
(\tau \genone, \tau^2 \genone, \ldots, \tau^{n} \genone;
\quad \tau \gentwo, \ldots, \tau^{k} \gentwo)
\end{align*}
for $\tau \in \Zp^{*}$.
There exists a proof $\pi$ derived from $\pp_{t}$ such that $\contract$ will rewind itself back to stage $t-1$ after receiving $\pi$ via $\fun{FraudProve}$.
\end{theorem}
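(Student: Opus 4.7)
The plan is to do a case analysis based on whether the ``first-level'' pairing check $e(P_1, \gentwo) = e(\genone, Q_1)$ holds for $\pp_t = (P_1,\ldots,P_n; Q_1,\ldots,Q_k)$, and in each case exhibit a triple $(\idxgp, \idxill, \mathcal{M})$ that makes $\fun{FraudVerify}$ accept. Correctness of the vector commitment scheme guarantees that, for any element actually stored in $\pp_t$, the prover can produce a membership proof $\mathcal{M}$ that verifies against $\compp_t = \com(\pp_t)$ recorded on $\contract$, so we only need to check that the pairing equations extracted inside $\fun{FraudVerify}$ detect the ill-formedness.

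First, suppose $e(P_1,\gentwo) \neq e(\genone, Q_1)$. Then $\user$ invokes $\fun{Disprove}(t)$ with $(\idxgp,\idxill) = (\bot,\bot)$, which produces only the membership proofs of $P_1$ and $Q_1$. On the contract side, $\fun{FraudVerify}$ extracts these two elements, verifies them against $\arrcompp[t]$, and immediately outputs $1$ at the line checking $e(P_1,\gentwo) \neq e(\genone, Q_1)$. Thus $\fun{RecvFraud}$ truncates $\arrcompp$ at index $t{-}1$, i.e.\ it rewinds to stage $t{-}1$.

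Otherwise $e(P_1,\gentwo) = e(\genone, Q_1)$, so there is a unique $\tau \in \Zp$ with $P_1 = \tau\genone$ and $Q_1 = \tau\gentwo$; moreover $\tau \neq 0$ because $\fun{ContractUpdate}$ already rejects $P_1 = \genone^{0}$, and similarly $\tau \neq 0$ is enforced by $\checkthree$. Since $\pp_t$ is not of the claimed form, at least one entry differs from the correct power of $\tau$. Let $(\idxgp^*, \idxill^*)$ be the lexicographically smallest pair such that $\pp_t[\idxgp^*][\idxill^*]$ is wrong; because the degree-$1$ entries are consistent with $\tau$, we have $\idxill^* \geq 2$, and the preceding element $\pp_t[\idxgp^*][\idxill^*{-}1]$ equals $\tau^{\idxill^*-1}$ times the appropriate generator. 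The prover invokes $\fun{Disprove}(t)$ with this $(\idxgp^*,\idxill^*)$, and $\memprove$ returns a valid $\mathcal{M}$ for the four relevant entries.

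It remains to verify that the pairing test inside $\fun{FraudVerify}$ fires. If $\idxgp^* = 1$, write $P_{\idxill^*} = \alpha \genone$ with $\alpha \neq \tau^{\idxill^*}$; then
\begin{align*}
e(P_{\idxill^*}, \gentwo) &= \alpha\, e(\genone,\gentwo), \\
e(P_{\idxill^*-1}, Q_1) &= \tau^{\idxill^*}\, e(\genone,\gentwo),
\end{align*}
which are unequal, so the corresponding check triggers acceptance (we read the output line as asserting this inequality, consistent with the text in Section~\ref{subsec:fraud-proof-overview}). The case $\idxgp^* = 2$ is symmetric using $e(\genone, Q_{\idxill^*})$ versus $e(P_1, Q_{\idxill^*-1})$. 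In all cases $\fun{FraudVerify}$ outputs $1$, $\fun{RecvFraud}$ rewinds $\arrcompp$, and the contract is back at stage $t{-}1$.

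\textbf{Main obstacle.} The only non-routine piece is isolating a \emph{single} ill-formed index whose immediately preceding entry is provably correct; this is why the minimality of $(\idxgp^*,\idxill^*)$ is essential. The pairing manipulations and the appeal to vector-commitment correctness are mechanical once that index is fixed.
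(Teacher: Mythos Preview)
Your proposal is correct and follows essentially the same approach as the paper, which simply defers to the case analysis in Section~\ref{subsec:fraud-proof-overview}: split on whether $e(P_1,\gentwo)=e(\genone,Q_1)$ holds, and in the second case pick the first ill-formed index so that its predecessor is guaranteed correct. You have in fact spelled out more detail than the paper does (including the $\idxgp^*=2$ case and the role of vector-commitment correctness), and your observation about reading the output line in $\fun{FraudVerify}$ as an inequality is the right way to reconcile the pseudocode with the surrounding text.
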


The proof follows directly from the analysis in~\Cref{subsec:fraud-proof-overview}.

\paragraph{Soundness for the Fraud Proof}
The following states that if $\pp_t$ is well-formed, $\contract$ will not be rewinded back to $t-1$ with overwhelming probability.
\begin{theorem}
\label{thm:wellform-sound}
Suppose $\pp_{t}$ is submitted to $\contract$ at stage $t$ and $\pp_t$ is not in the form of
\begin{align*}
(\tau \genone, \tau^2 \genone, \ldots, \tau^{n} \cdot \genone;
\quad \tau \gentwo, \ldots, \tau^{k} \gentwo)
\end{align*}
for $\tau \in \Zp^{*}$.
With negligible probability, there exists a proof $\pi$ derived from $\pp_{t}$ such that $\contract$ will rewind itself back to stage $t-1$ after receiving $\pi$ via $\funfraudproof$.
\end{theorem}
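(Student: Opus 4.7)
The literal hypothesis says $\pp_t$ is ill-formed, yet the conclusion is that with negligible probability does a rewinding proof exist. This is in direct tension with the companion completeness theorem, whose identical hypothesis is paired with the opposite conclusion. The only reading consistent with the surrounding text is to interpret ``proof $\pi$ derived from $\pp_t$'' as an adversarially crafted $\pi$ that does \emph{not} honestly witness the defect (equivalently, that the negation in the hypothesis is a typographical slip and should be read as ``is in the form of''). Under either reading the proof structure is identical: I would show that $\fun{FraudVerify}$ outputs its accepting value only when $\pi$ corresponds to a genuine flaw in the committed vector, by reducing to Merkle binding and the bilinearity of $e$.

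The plan is as follows. Fix any $\pi = (t, \idxgp, \idxill, \mathcal{M})$ submitted via $\fun{RecvFraud}$. First, invoke positional binding of the vector commitment: except with probability $O(q_H/2^{\lambda})$ over the hash oracle, the four leaves $(P_1, Q_1, E_{\idxill-1}, E_{\idxill})$ opened by $\mathcal{M}$ equal the corresponding entries of the actual $\pp_t$ committed by the $\compp_t$ stored in $\arrcompp[t]$. Here one must also rule out a mismatch between the in-state $\compp^{\st}_t$ and the in-transaction $\compp_t$ checked at the top of $\fun{Disprove}$, which follows directly from collision resistance of the Merkle hash. Conditioned on binding, the outcome of $\fun{FraudVerify}$ is fully determined by the two pairing equations applied to the genuine components of $\pp_t$.

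Next, I would analyze the two pairing tests. If $e(P_1, \gentwo) \ne e(\genone, Q_1)$, then $\pp_t$ admits no common $\tau$ in its first slots, which is itself a witness of ill-formation, and the contract correctly rewinds. Otherwise $P_1 = \tau\genone$ and $Q_1 = \tau\gentwo$ for a common $\tau \in \Zp^*$, and the second test (in either branch $\idxgp \in \{1,2\}$) is, by bilinearity and non-degeneracy, equivalent to checking $E_{\idxill} = \tau \cdot E_{\idxill-1}$ in the exponent. This equality holds iff the pair $(E_{\idxill-1}, E_{\idxill})$ continues the geometric progression with ratio $\tau$; rewind is therefore triggered exactly when $\pi$ identifies a position at which the progression genuinely breaks, and a spuriously chosen $(\idxgp, \idxill)$ aimed at a position in which the progression is correct cannot force rewind.

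The main obstacle is the orientation of the pseudocode: $\fun{FraudVerify}$ literally returns the boolean value of the pairing \emph{equality}, whereas the overview in \Cref{subsec:fraud-proof-overview} calls for rewind upon \emph{inequality}. The security argument must be written against the intended semantics, and this convention should be pinned down explicitly at the top of the proof before the case analysis above proceeds. A secondary obstacle is bounding the hash-collision loss uniformly over the adversary's view, since it observes many commitments $\compp_1, \ldots, \compp_\ell$ and may attempt to exploit a collision against any of them; a union bound over the total number of hash queries suffices, but the reduction must be written so that a single collision anywhere in $\arrcompp$ suffices to break the underlying hash.
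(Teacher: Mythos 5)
Your proposal is correct and follows essentially the same route as the paper's (very brief) proof sketch: condition on positional binding of the vector commitment so that the opened leaves must be the genuine entries of $\pp_t$, then observe that on a well-formed $\pp_t$ neither pairing test in $\fun{FraudVerify}$ can be made to signal fraud at any index, leaving a binding break as the only remaining avenue, which occurs with negligible probability. Your two side observations are also accurate: the ``not'' in the hypothesis is a typo (the sentence introducing the theorem confirms it is the soundness counterpart to the completeness theorem), and the final pairing test in the pseudocode returns the equality where the intended semantics, per \Cref{subsec:fraud-proof-overview}, is to accept the fraud proof on inequality.
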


\pparagraph{Soundness of Randomness Inclusion}
\label{subsec:inclusion-sound}
We consider an adversary $\adv$ that attempts to convince a contributor $\user$ that $\user$'s $r$ is included in the final $\pp$,
while $\adv$ has excluded $r$ from that $\pp$.
We define inclusion soundness by stating the above situation is computationally infeasible:
A PPT $\adv$ cannot correctly guess the trapdoor in the final $\pp$ with non-negligible probability as long as $\user$ has confirmed its randomness inclusion via $\fun{CheckInclusion}$.
We allow $\adv$ to
compromise all other contributors except $\user$ and all operators
and
invoke any function except $\fun{ContractSetup}$.
Nevertheless, we assume $\adv$ cannot alter the execution on $\contract$, and all invalid $\pp$s would be disproved before a final $\pp$ is decided.

\begin{definition}[$\mathsf{Game}_{\mathsf{incl\text{-}snd}}$]
The security game $\mathsf{Game}_{\mathsf{incl\text{-}snd}}$ for soundness of randomness inclusion is as follows:
\begin{compactenum}
    \item The challenger runs $ \mathsf{ContractSetup}(1^{\lambda})$ to setup $\contract$.
    \item For polynomially many times, $\adv$ can invoke $\fun{ServerSetup}$ and $\fun{OnUpdate}$ as an operator, $\fun{OffUpdate}$ and $\fun{CheckInclude}$ with any operator, $\fun{ProveFraud}$ (as a contributor), and access the random oracle $\hashgp$ and $\hashzp$.
    \item The challenger invokes $\fun{OnUpdate}(\cdot)$ as a contributor with $\adv$ acting as an operator. Then, the challenger records its $\pk$.
  \item $\adv$ can invokes the above functions and oracles polynomially many times.
    Finally,
    $\adv$ outputs a state of an operator $\st_{\server}$, a transaction $\tx_{\contract}$, and a guess $w \in \Zp$.
    \item The challenger fetches the final $\pp$ from (the final valid transaction of $\fun{OnUpdate}$ to) $\contract$ and checks if
    \begin{compactitem}
        \item  $\tx_{\contract}$ is a valid transaction on $\contract$
        \item $\checkinclusion(\pk) \isequal 1$ where data are read from $\st_{\server}$ and $ \txser$
        \item $\pp \isequal (w \genone, \ldots w^{n} \genone; \quad w \gentwo, \ldots, w^{k} \gentwo)$
    \end{compactitem}
    If all checks are passed, it outputs $1$, or $0$ otherwise.
\end{compactenum}
\end{definition}


\begin{definition}[Soundness of Randomness Inclusion]
  \label{def:inclusionsound}
  A powers-of-tau system has inclusion soundness
  if
  for any $n, k \in \poly(\lambda)$, any PPT adversary with $\numhashquery = \poly(\lambda)$ queries to $\hashgp$ and $\hashzp$ can win $\mathsf{Game}_{\mathsf{incl\text{-}snd}}$ with only negligible probability.
\end{definition}

\begin{theorem}\label{thm:incl-snd}
Under $(n, k)$-SDH assumption,
our scheme is $\mathsf{Game}_{\mathsf{incl\text{-}snd}}$-secure in the AGM with random oracle model.
Specifically, we have
\begin{align*}
  \Pr[\gamesdh^{\advb} = 1] &\geq \frac{2p - \numhashquery}{12p} \Pr[\mathsf{Game}_{\mathsf{incl\text{-}snd}} = 1]
\end{align*}
where $p$ is the order of the pairing groups and $q_{H}$ is the number of query to the random oracles.
\end{theorem}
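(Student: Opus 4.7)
The plan is to build a reduction $\advb$ that solves an $(n,k)$-SDH challenge using any adversary $\adv$ that wins $\mathsf{Game}_{\mathsf{incl\text{-}snd}}$. Given an SDH instance $(\genone, z\genone, \ldots, z^n\genone; \gentwo, z\gentwo, \ldots, z^k\gentwo)$, $\advb$ simulates the inclusion-soundness game for $\adv$ and plays the challenger role. When $\adv$ (acting as an operator) invokes the challenger's honest contribution step, $\advb$ implicitly embeds the SDH trapdoor by setting the honest contributor's random value to be $z$ times a uniformly sampled blinding factor; the new $\pp'$ can be produced without knowing $z$ explicitly because $\advb$ holds all powers $z^i \genone$ and $z^i \gentwo$ and can linearly combine them against the prior $\pp$ supplied by $\adv$. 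The honest secret key $\sk$ is sampled freshly so that $\pk$, $\pos$, and the updated $\presig', \cursig'$ can be produced consistently, with the random oracle $\hashgp$ being programmed at $\pk$ to support the BLS-style proof-of-possession.

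I would then bound $\adv$'s winning probability through three successive AGM/random-oracle arguments. First, the intra-operator rogue-key attack is defeated by the proof-of-possession verification inside $\fun{CheckInclusion}$: in the AGM, every $\pk_j$ that $\adv$ submits is an explicit linear combination of previously seen group elements, and producing a valid $\pos_j$ forces $\adv$ to know the corresponding discrete log or to guess a random-oracle value, contributing at most $\numhashquery/p$ in the loss. Second, the inter-operator rogue-key attack is ruled out by Lemma~\ref{lem:hash-zippel} applied to the hash-derived coefficients $\rho_1, \rho_2$: since $\rho_1, \rho_2$ are derived via $\hashzp$ from a transcript containing $\prevk, \curvk, \presig, \cursig, \pp.Q_1$, any key-cancellation relation that $\adv$ tries to engineer must be committed before $\rho_1, \rho_2$ are revealed, and thus holds with probability at most $\numhashquery/p$ by a Schwartz--Zippel argument on a nonzero linear polynomial in $(\rho_1, \rho_2)$. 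Third, Lemma~\ref{lem:snd-after-t} propagates this guarantee forward: once the honest contribution has been absorbed into $(\prevk, \presig)$ with random coefficients, every subsequent accepted $\fun{ContractUpdate}$ preserves the honest $\sk$'s (nonzero) algebraic coefficient in $\presig$ and, by the pairing equality, preserves the $z$-factor in the trapdoor of the on-chain $\pp$.

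Given these invariants, when $\adv$ finally outputs $(\stser, \txcontract, w)$ such that the on-chain $\pp$ equals $(w\genone, \ldots, w^n\genone; w\gentwo, \ldots, w^k\gentwo)$ and $\checkinclusion(\pk)$ passes, $\advb$ reads off $\adv$'s AGM coefficients for $\pp.Q_1$ and derives an equation of the form $w = f(z)$, where $f$ is a known nonconstant polynomial of degree at most $n$ in $z$ whose coefficients $\advb$ can compute explicitly from $\adv$'s algebraic representations. Subtracting, $\advb$ obtains a nonzero univariate polynomial $g(Z) = f(Z) - w$ of degree at most $n$ that vanishes at $Z = z$. Applying the standard polynomial-division extraction used for BLS/KZG-style SDH reductions, $\advb$ picks any root $-c \neq -z$ of $g$, writes $g(Z) = (Z+c)h(Z) + \gamma$ with $\gamma = g(-c) = 0$, and uses the SDH powers to compute $U = \frac{1}{z+c}\genone$ from $h(z)\genone$, outputting $(c, U)$ as its SDH solution.

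The main obstacle will be the AGM bookkeeping across the progressive key aggregation: between the honest contribution and the final submission, $\prevk$ and $\presig$ pass through many hash-randomized recombinations, and one must track that the $z$-degree of their algebraic representations stays within the allowed range while the polynomial $g$ extracted at the end remains nonzero. The precise counting of loss factors is also delicate --- the factor $12$ in the denominator of $\frac{2p - \numhashquery}{12p}$ arises from combining the three pairing equations, the two random coefficients $\rho_1, \rho_2$, and the probability that $\adv$'s output $w$ coincides with an unwanted root, while the $\numhashquery$ correction absorbs the cumulative random-oracle bad events across both $\hashgp$ and $\hashzp$ queries; making these contributions multiply cleanly rather than via a loose union bound is what pins down the stated constant.
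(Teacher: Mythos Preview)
Your proposal has a genuine structural gap: you embed the SDH trapdoor in \emph{one} place (the honest contributor's random value $r$), whereas the paper's proof crucially requires \emph{three} different embeddings---into the honest $\sk$, into $r$, and into the random-oracle responses of $\hashgp$---together with a random guess of which one to use. Concretely, the paper first writes every adversarial group element as $\log G = A_G\cdot \sk + B_G$ and then does a case split on whether certain coefficients vanish: the algorithm $\calg$ embeds $z$ into $\sk$ and works when $A_t\neq 0$; $\dalg$ embeds $z$ into $r$ and works when $A_t=0$ but $\coeff{\cursig^{*}}{\cursig'}\neq 0$; $\ealg$ embeds $z$ into $\hashgp$'s outputs and handles the PoP-forgery branch. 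A separate pair $\faalg,\fbalg$ handles the event $\coeff{\vk^{*}}{\pk}=0$. Your single-embedding extractor is essentially $\dalg$ alone, and it fails exactly in the cases the other sub-algorithms are there to cover: if the adversary arranges $\coeff{\cursig^{*}}{\cursig'}=0$ (e.g., by discarding the honest $\cursig'$ and re-signing with its own keys while still making $\checkinclusion$ accept via the $\pk$-list), then the final $\pp.Q_1$ need not depend on $z$ at all, your polynomial $f$ is constant, and $f(Z)-w$ is identically zero. Your appeal to Lemma~\ref{lem:snd-after-t} to force $f$ nonconstant is a misreading: that lemma is itself conditional on $\event{\pk}=1$ and is \emph{proved} by running the three-way embedding guess $\calg',\dalg',\ealg'$; it does not assert that ``the $z$-factor in the trapdoor is preserved'' as a standalone invariant.

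A second, related issue is your accounting for the constant $12$. In the paper it arises purely from the random-guessing structure of the reduction: $\advb$ flips a fair coin between $\falg$ (handling $\event{\pk}=0$) and $\jalg$ (handling $\event{\pk}=1$), contributing a factor $2$; $\jalg$ in turn guesses uniformly among $\calg',\dalg',\ealg'$, contributing a factor $3$; the remaining factor comes from the individual success bounds $(p-1)/p$, $(p-2)/p$, and the hash-Zippel loss $1-\numhashquery/p$. It has nothing to do with ``three pairing equations'' or ``unwanted roots''. Finally, note that in every branch the paper recovers $z$ \emph{directly} as an explicit ratio of known coefficients and then outputs $(c,\frac{1}{z+c}\genone)$ trivially; no polynomial-division extraction in the style of KZG is used or needed here.
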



The proof of this theorem is also at the center of our contribution.
Due to the page limit, we defer all full proofs to~\Cref{sec:missing-proofs}.

\begin{proof}[Proof sketch]
If there is an algebraic adversary $\advinclude$ who makes a correct guess on the $\tau$ from the final $\pp$ with non-negligible probability into breaking, we construct another adversary $\advb$ whose use $\advinclude$ to break $(n,k)$-SDH assumption.
$\advb$ embeds the $(n, k)$-SDH instance into three places:
\begin{compactitem}
\item the $\pk$ of the contributor
\item the contributor's $r$ in the $\pp$ 
\item the responses of $\hashgp(\cdot)$
\end{compactitem}
And we show that these embeddings are sufficient to capture all the attacks in $\inclusionsoundgame$.

It means that to break the soundness of our aggregation scheme,
$\advinclude$ must at least correctly guess the $\sk$ from $\pk$, the $r$ from $\pp$, or breaking the PoP with non-negligible probability.
We also apply Schwartz-Zippel Lemma~\cite{ipl/DemilloL78, eurosam/Zippel79, jacm/Schwartz80} to show that $\adv$ can only break our progressive key aggregation with negligible probability.
\end{proof}

\section{Evaluation}
\label{sec:evaluation}

\begin{table}[t]
\centering
\caption{Gas costs for different operations}
\label{tb:evm-op-cost}
\small  
\begin{tabularx}{\columnwidth}{lll}
\toprule
Name         & Operation                                                             & Gas cost                   \\ \midrule
\var{ECADD}  & \(A + B \text{ for } A, B \in \mathbb{G}_1\)                          & 150                        \\
\var{ECMult} & \(\alpha A \text{ for } \alpha \in \mathbb{Z}_p, A \in \mathbb{G}_1\) & 6,000                      \\ 
\var{ECPAIR} & \(\sum_{i=1}^k e(A_i, B_i) = 0 \)                                     & \(34,000 \cdot k + 45,000\) \\ 
\var{CALLDATA} & \(k\) non-zero bytes                                                & \(16 \cdot k\)             \\ 
\var{Keccak256} & \(k\) 32-byte words                                                & \(30 + 6 \cdot k\)         \\ \bottomrule
\end{tabularx}
\end{table}



In this section, we implement \sysname and demonstrate its enhancements compared to the construction proposed by Nikolaenko et al.~\cite{boneh-pot-2024} for the on-chain protocol.

\pparagraph{Choices of Cryptographic Primitives} For our setup, we selected the BN254 curve, which is natively supported by Ethereum as specified in EIP-197~\cite{eip197}. For the hash function, we used \texttt{Keccak256}.

\pparagraph{Software} We implemented all off-chain components (i.e., Algorithms 2-5) of our protocol in \texttt{Rust}, and the on-chain component (i.e., Algorithm 1: the On-chain Update)  as a smart contract in \texttt{Solidity}, and we deployed the smart contract using \texttt{Ganache}~\cite{ganache} which is a sandbox environment for Ethereum.

\pparagraph{Gas Cost Accounting}
With all our optimizations in place, the gas cost for submitting a
contribution is mainly $O(n+k)$ \var{CALLDATA} (for reading $\pp$ in
transaction inputs) and \var{Keccak256} (for hashing).
The gas cost for verifying a fraud-proof is $O(\log\degree)$ \var{CALLDATA} and
$\var{Keccak256}$ for membership proofs, plus $2$ \var{ECPAIR}.
The gas cost for these operations is mentioned in~\cref{tb:evm-op-cost}.



\pparagraph{Hardware} We conducted our experiments on a desktop machine equipped
with a 2.6GHz 6-Core Intel Core i7 and 80 GB of RAM.

\subsection{Performance}



\pparagraph{On-chain Fraud-Proof Mechanism: On-chain costs} The on-chain gas costs are detailed in~\cref{tab:gas-cost}. Our protocol significantly reduces the update complexity by replacing the expensive check 2 with a fraud-proof mechanism. This optimization streamlines the update process to just two checks and the construction of a Merkle tree for the string. As a result, our update operation is 16 times more gas-efficient compared to the method proposed in~\cite{boneh-pot-2024}. This enhanced efficiency allows our protocol to support for the first time in Ethereum a string size of $2^{15}$, making it viable for a wide range of applications, including ProtoDanksharding.

Furthermore, in the event that a fraud-proof is necessary, the associated cost is fairly low at only 330,000 gas (approximately US\$2 at current rates). This cost-effectiveness in both regular updates and fraud-proofs underscores the practical applicability of our protocol in blockchain environments where gas optimization is crucial.

We compare the gas consumption of our scheme and Nikolaenko~\etal~\cite{boneh-pot-2024}'s in~\Cref{tab:gas-cost}.
And we plot the gas cost for different parameter sizes in~\Cref{fig:gas-plot}.

\begin{figure}[t]
    \centering
    \includegraphics[scale=0.19]{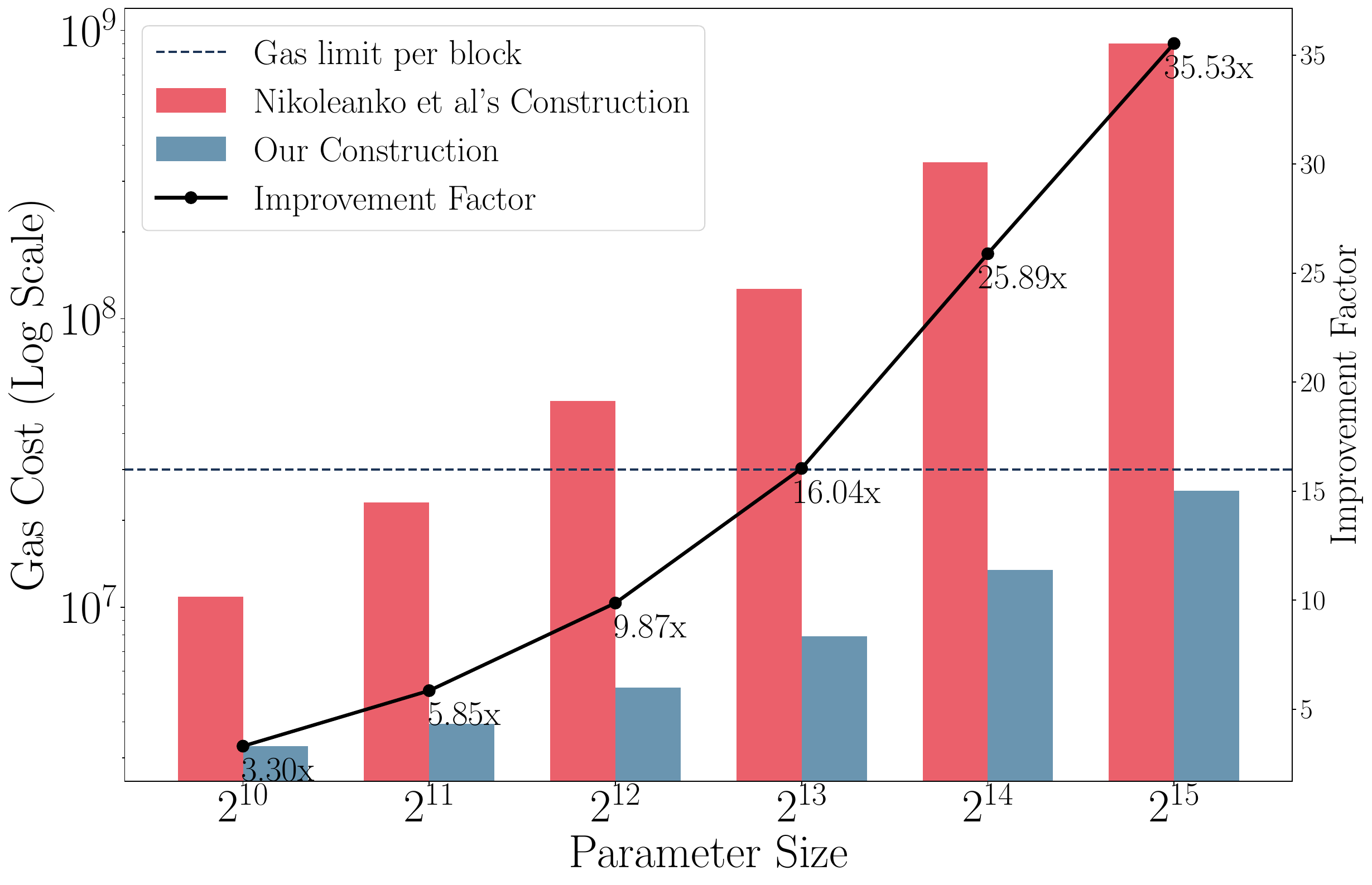}
    \caption{Comparison of Gas Costs for Each Update with~\cite{boneh-pot-2024} for $n\in \{2^{10}, 2^{11}, 2^{12}, 2^{13}, 2^{14}, 2^{15}\}$ and $k = 1$
    }
    \label{fig:gas-plot}
  \end{figure}

\begin{remark}
As previously noted, Ethereum does not natively support operations in group $\mathbb{G}_2$. Hence, the protocol in~\cite{boneh-pot-2024} is ineffective if the number of elements in $\mathbb{G}_2$ exceeds one (i.e., $k>1$). However, since our approach utilizes the blockchain for storage, we do not require operations within the group $\mathbb{G}_2$. Our fraud check mechanism relies only on pairings, allowing our protocol to support parameters with multiple elements in group $\mathbb{G}_2$.
\end{remark}

\begin{figure}[b]
    \centering
    \includegraphics[scale=0.175]{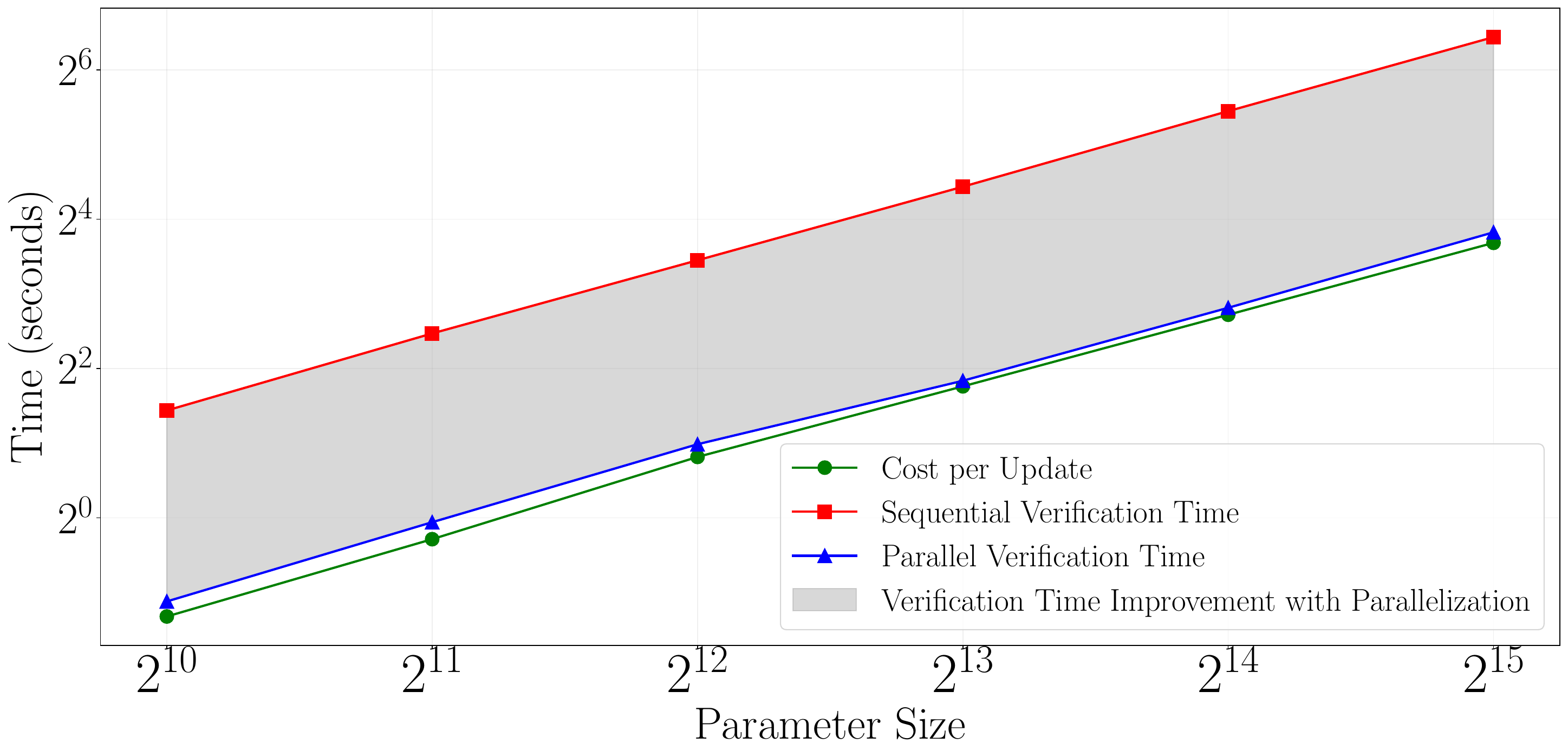}
    \caption{Off-Chain Computation Time for Contributors and Operators for $n\in \{2^{10}, 2^{11}, 2^{12}, 2^{13}, 2^{14}, 2^{15}\}$ and $k = 1$. 
    }
    \label{fig:offchain-cost}
\end{figure}


\pparagraph{Off-chain Aggregatable Contribution: Operator's Off-chain Costs}
We also benchmark the off-chain well-formedness verification.
In the batched contribution setting, 
contributors and operators can locally verify the $\pp$ with off-the-shelf optimizations. 
Notably, our approach enables parallel verification—a capability not available in fully on-chain ceremonies like \cite{boneh-pot-2024}'s scheme, which is constrained by the Ethereum Virtual Machine's (EVM) sequential execution model.
As shown in \Cref{fig:offchain-cost}, we achieve a ${\approx}6\times$ speedup in verification time using a $6$-core machine by parallelizing the addition operations in check 2.
These optimizations enable \sysname to efficiently process large volumes of contributions, demonstrating better scalability compared to existing approaches.


We plot the cost of off-chain contribution and verification for the operator and
contributor in~\Cref{fig:offchain-cost}.

\newcommand{\graycell}{\cellcolor{red!20}}


\begin{table*}[t!]
\centering
\caption{Estimated Gas (USD) Costs For Different Parameter Sizes. Red cells denote instances that cannot be executed in Ethereum as they exceed the block gas limit of 30M.
}
\resizebox{\textwidth}{!}{ 
\begin{threeparttable}
\begin{tabular}{ccccccc}
\toprule
$n$ & $2^{10}$ & $2^{11}$ & $2^{12}$ & $2^{13}$ & $2^{14}$ & $2^{15}$\\
\midrule
Total Cost of~\cite{boneh-pot-2024} & 10.9M \textbf{(806 USD)} & 23.1M \textbf{(1,707 USD)} & \graycell 51.9M \textbf{(3,836 USD)} & \graycell 127.3M \textbf{(9,410 USD)} & \graycell 349M \textbf{(25,797 USD)} & \graycell 851M \textbf{(62,904 USD)}\\
Our Optimistic Update Cost & 3.3M \textbf{(244 USD)} & 3.9M \textbf{(288 USD)} & 5.3M \textbf{(392 USD)} & 7.9M \textbf{(584 USD)} & 13.5M \textbf{(999 USD)} & 25.4M \textbf{(1,879 USD)}\\
Our Fraud Proof Verification Cost & 322,218 \textbf{(24 USD)} & 325,516 \textbf{(24 USD)} & 328,802 \textbf{(24 USD)} & 332,051 \textbf{(25 USD)} & 335,385 \textbf{(25 USD)} &  338,647 \textbf{(25 USD)}\\
\bottomrule
\end{tabular}
\begin{tablenotes}\footnotesize
\item[*] The USD values were calculated based on the Ethereum price of 3,080 USD and gas price of 24 gwei on Nov 14th, 2024.
\end{tablenotes}
\end{threeparttable}
} 
\label{tab:gas-cost}
\end{table*}

\section{Discussion}
\label{sec:discussion}

\pparagraph{Participation of Operators and Contributors}
The providers of the applications relying on a PoT string are inherently incentivized to act as operators, 
as the security of the PoT is fundamental to the security of the application.
To increase the confidence of users in the security of their application,
it is in the providers' interest to collect contributions from future users and provide inclusion proofs to them.
The providers can also set up incentive schemes, \eg, lottery schemes, to attract independent operators and contributors.
Hence, further incentive analysis can be an interesting future direction and orthogonal to our work.



\pparagraph{Handling Temporary Invalid Powers-of-Tau Parameters}
Our protocol allows potentially invalid parameters to remain on-chain for a short period of time in the absence of challenges. While honest users might use these ill-formed parameters for their application, this poses minimal risk for two reasons. First, the local verification of parameter well-formedness is computationally inexpensive, allowing users to efficiently validate parameters before use. Second, we could enhance security by adopting an approach similar to optimistic rollups, implementing a challenge period during which participants can submit fraud proofs to identify and reject invalid parameters. 
\pparagraph{Concurrency and Front-running} When multiple operators simultaneously attempt to update $\pp$ on $\contract$,
only one update will be accepted, every update else becomes useless due to the nature of sequential updates.
Nevertheless, one can deploy an orthogonal mechanism to resolve the concurrency issue.
For example, operators can register for the contract to distribute the update time slot.
A security concern is sybil attacks, where the attackers create many operators to occupy all time slots for a DoS attack. 
Still, many solutions are available, including:
    \emph{(i)} asking the operator to put collaterals on the contract;
    \emph{(ii)} selecting the operators using public randomness;
    \emph{(iii)} prioritizing operators with real-world identities; and many more.
We leave further study on concurrency control as a future work.

\pparagraph{Disincentive of Submitting Ill-formed $\pp$}
Attackers might submit ill-formed $\pp$ to thwart the ceremony. 
However, such attacks are economically irrational because anyone can defeat them 
using $10 \sim 78$-fold less financial cost, as shown in \Cref{tab:gas-cost}.
In the face of irrational attacks, our fraud-proof cost is also cheap enough (${\approx}$US\$ 25) for the providers and future users to disprove without financial concerns.

\section{Related Works}
\label{sec:related-work}


\begin{table*}[t!]
\centering
\caption{Comparison with Related Works for PoT Setup}
\label{tb:rel-work}
\resizebox{\textwidth}{!}{ 
\begin{threeparttable}
\begin{tabular}{lcccc}
  \toprule
  & Censorship Resistant & Fault Tolerance & Communication Model & On-Chain Cost       \\
  \midrule
  MPC-based PoT~\cite{bowe-pot-2017,ben-sasson-pot-2015}        & No               & ${m-1}$           & Round-Robin + Broadcast       & N/A                                      \\
  PoT in Async~\cite{das-asynchrony-pot-2022}  & No               & $m/3$           & Asynchronous      & N/A                                      \\
  PoT to the People~\cite{boneh-pot-2024} & \textbf{Yes}              & ${m-1}$           & Round-Robin + Blockchain      & $O(km)$ Pairing Ops. + $O(m (n+k))$ ECMult                       \\
  \sysname          & \textbf{Yes}              & ${m-1}$           & Round-Robin + Blockchain      & \textbf{$O(c)$ Pairing Ops. + $O(c (n+k)$ Calldata/Hashing} \\
  \bottomrule
\end{tabular}
\begin{tablenotes}\footnotesize
\item[*] $m$ is the number of contributors. $n$ and $k$ are the total degrees of the powers-of-tau string in $\gpone$ and $\gptwo$, respectively.
\item[*] $m-1$ fault tolerance is necessary for the more-the-merrier property.
\item[*] $m$ values in \cite{bowe-pot-2017, ben-sasson-pot-2015, das-asynchrony-pot-2022} are fixed while $m$ in \cite{boneh-pot-2024} and our protocol can keep growing as more users contribute to the setup. 
\item[*] $c$ is the number of (trustless) operators in our protocol and $c \ll m$.
\end{tablenotes}
\end{threeparttable}
} 
\end{table*}

\pparagraph{Round-Robin-Based Setup}
Ben-Sasson~\etal~\cite{ben-sasson-pot-2015} introduced one of the first setup ceremony protocols that can tolerate $m-1$ malicious parties (thus is a ``more-the-merrier'' protocol) with the use of multiplicative shares.
Following a round-robin, each party sequentially introduces their private randomness into the $\pp$ as a multiplicative share and broadcasts the updated $\pp$. The protocol concludes when all parties have made their contributions. As long as at least one honest participant has updated the parameters, the final output is secure, thus ensuring that the trapdoor $\tau$ remains hidden from the adversary.
As the protocol is executed in a round-by-round manner, only one party can act on the $\pp$ and broadcast the message to other parties via a broadcast channel like a blockchain.
However, this protocol limits the number of participants to a pre-selected set as its security proof requires a pre-commitment phase.

Bowe~\etal~\cite{bowe-pot-2017} avoids the pre-commitment phase by using a random beacon model, which is an abstraction of delayed randomness such as the hash value from a hash-chain or blockchain.
Skipping the pre-commitment phase enables dynamic participation for new contributors and thus perpetual powers-of-tau ceremony, where the ceremony is ever-going.
Kohlweiss~\etal~\cite{asiacrypt/KohlweissMSV21} optimizes Boew~\etal's protocol by eliminating the use of a random beacon model and proves its security in the algebraic group model.


\pparagraph{Decentralized Setup}
The above works cannot prevent censorship because these {\emph{centralized}} parties can censor contribution to the $\pp$, allowing an adversary to have full knowledge of the randomness (\ie, $r$s)
and thus learn $\tau$ from $\pp$.
Nikolaenko~\etal~\cite{boneh-pot-2024} proposes running a decentralized ceremony on a smart contract on Ethereum. The contract stores $\pp$ and allows anyone to join the ceremony as the blockchain is permissionless. Yet, the gas cost per update remains high, which results in a high monetary cost that discourages participation and limits the total degree of $\pp$.

\pparagraph{Asynchronous PoT Setup} 
Das~\etal~\cite{das-asynchrony-pot-2022} revisited the PoT setup within the context of an asynchronous network model. Their protocol demonstrates greater efficiency in terms of (off-chain) computation and communication costs compared to previous methods.
However, due to the inherent limitation of the asynchronous network setting, it can only tolerate up to $m/3$ Byzantine parties. 
Thus, this protocol does not have the more-the-merrier property and requires the users of the final $\pp$ to have trust in the majority of the contributors.

\pparagraph{Optimistic Paradigm and Fraud Proofs}
Optimistic rollups~\cite{ethereum-optimistic-rollups} is a layer-2 solution for Ethereum to scale the transaction throughput.
In a nutshell, the (level-1) blockchain accepts the off-chain transaction result without verifying it, 
unless a valid fraud proof is received.
Arbitrum~\cite{uss/KalodnerGCWF18} is a concrete scheme of optimistic rollups. 
It proposes a fraud-proof mechanism that a challenger can run a bisection game with the transaction submitter 
to pinpoint and prove the invalid operation in the transaction.
However, it incurs expensive logarithmic rounds of on-chain interactions.



\pparagraph{Rogue-Key Attack Mitigation}
BLS signature scheme~\cite{asiacrypt/BonehLS01} uses bilinear pairing for short signatures and aggregation of signatures. However, the rogue-key attack is a well-known issue of aggregation over BLS signatures. 
Ristenpart~\etal~\cite{eurocrypt/RistenpartY07} proposes proof-of-possession (POP) to protect BLS signatures against rogue-key attacks. It requires every secret key holder to register their keys by signing on the hash value of its public key to show its knowledge of its secret key.
However, it is not succinct enough for our case as we discussed in \Cref{sec:pro-key-agg}.
Boneh~\etal~\cite{asiacrypt/BonehDN18} proposes another way to defend against rogue-key attacks by combining public keys and signatures using random coefficients. These coefficients are derived via hashing all the involved public keys.
It has two drawbacks: \emph{(i)} a verifier (\eg, the contract or a contributor) needs access to all public keys and 
\emph{(ii)} it requires a pre-selected set of public keys 
which restricts the participation of late-join contributors. 
A generic approach to sample random coefficients is Fiat-Shamir heuristic,
where the verifier hashes the entire transcript to output random values.
In the context of a smart contract, it means $\contract$ needs to hash all the prior $\pp$s, $\vk$s, and $\sigma$s,
which is too expensive because $\contract$'s computation cost would become $O(t^{2})$ for $t$ on-chain updates, while our approach's is $O(t)$.







\section{Conclusion}

We have presented an efficient and fully decentralized PoT setup ceremony scheme \sysname,
which significantly reduces the participation cost and can support larger PoT strings for broader applications. Our key contributions include 
\emph{(i)} introducing an efficient fraud-proof mechanism to reduce the on-chain cost for each update; 
\emph{(ii)} proposing a proof aggregation protocol to enable batching the contribution; 
\emph{(iii)} defining the security notion of aggregated contribution and rigorously proving our \sysname's security under AGM;
and \emph{(iv)} evaluating \sysname's concrete cost and performance.




\section*{Disclaimer}
Case studies, comparisons, statistics, research and recommendations are provided ``AS IS'' and intended for informational purposes only and should not be relied upon for operational, marketing, legal, technical, tax, financial or other advice.  Visa Inc. neither makes any warranty or representation as to the completeness or accuracy of the information within this document, nor assumes any liability or responsibility that may result from reliance on such information.  The Information contained herein is not intended as investment or legal advice, and readers are encouraged to seek the advice of a competent professional where such advice is required.

These materials and best practice recommendations are provided for informational purposes only and should not be relied upon for marketing, legal, regulatory or other advice. Recommended marketing materials should be independently evaluated in light of your specific business needs and any applicable laws and regulations. Visa is not responsible for your use of the marketing materials, best practice recommendations, or other information, including errors of any kind, contained in this document.
\bibliographystyle{IEEEtranS}
\bibliography{reference}

\begin{thebibliography}{10}
\providecommand{\url}[1]{#1}
\csname url@samestyle\endcsname
\providecommand{\newblock}{\relax}
\providecommand{\bibinfo}[2]{#2}
\providecommand{\BIBentrySTDinterwordspacing}{\spaceskip=0pt\relax}
\providecommand{\BIBentryALTinterwordstretchfactor}{4}
\providecommand{\BIBentryALTinterwordspacing}{\spaceskip=\fontdimen2\font plus
\BIBentryALTinterwordstretchfactor\fontdimen3\font minus \fontdimen4\font\relax}
\providecommand{\BIBforeignlanguage}[2]{{%
\expandafter\ifx\csname l@#1\endcsname\relax
\typeout{** WARNING: IEEEtranS.bst: No hyphenation pattern has been}%
\typeout{** loaded for the language `#1'. Using the pattern for}%
\typeout{** the default language instead.}%
\else
\language=\csname l@#1\endcsname
\fi
#2}}
\providecommand{\BIBdecl}{\relax}
\BIBdecl

\bibitem{danksharding}
``Danksharding,'' 2023, available at: \url{https://ethereum.org/en/roadmap/danksharding/}.

\bibitem{zerocash}
\BIBentryALTinterwordspacing
E.~Ben-Sasson, A.~Chiesa, C.~Garman, M.~Green, I.~Miers, E.~Tromer, and M.~Virza, ``Zerocash: Decentralized anonymous payments from bitcoin,'' Cryptology ePrint Archive, Paper 2014/349, 2014, \url{https://eprint.iacr.org/2014/349}. [Online]. Available: \url{https://eprint.iacr.org/2014/349}
\BIBentrySTDinterwordspacing

\bibitem{ben-sasson-pot-2015}
E.~Ben-Sasson, A.~Chiesa, M.~Green, E.~Tromer, and M.~Virza, ``Secure sampling of public parameters for succinct zero knowledge proofs,'' in \emph{2015 IEEE Symposium on Security and Privacy}, 2015, pp. 287--304.

\bibitem{asiacrypt/BonehDN18}
D.~Boneh, M.~Drijvers, and G.~Neven, ``Compact multi-signatures for smaller blockchains,'' in \emph{Advances in Cryptology - {ASIACRYPT} 2018 - 24th International Conference on the Theory and Application of Cryptology and Information Security, Brisbane, QLD, Australia, December 2-6, 2018, Proceedings, Part {II}}.\hskip 1em plus 0.5em minus 0.4em\relax Springer, 2018, pp. 435--464.

\bibitem{asiacrypt/BonehLS01}
D.~Boneh, B.~Lynn, and H.~Shacham, ``Short signatures from the weil pairing,'' in \emph{Advances in Cryptology - {ASIACRYPT} 2001, 7th International Conference on the Theory and Application of Cryptology and Information Security, Gold Coast, Australia, December 9-13, 2001, Proceedings}.\hskip 1em plus 0.5em minus 0.4em\relax Springer, 2001, pp. 514--532.

\bibitem{boneh2020graduate}
D.~Boneh and V.~Shoup, ``A graduate course in applied cryptography,'' \emph{Draft 0.5}, 2020.

\bibitem{bowe-pot-2017}
\BIBentryALTinterwordspacing
S.~Bowe, A.~Gabizon, and I.~Miers, ``Scalable multi-party computation for zk-snark parameters in the random beacon model,'' \emph{{IACR} Cryptol. ePrint Arch.}, p. 1050, 2017. [Online]. Available: \url{http://eprint.iacr.org/2017/1050}
\BIBentrySTDinterwordspacing

\bibitem{eip4844}
V.~Buterin, D.~Feist, D.~Loerakker, and D.~Shishov, ``{EIP-4844: Shard Blob Transactions},'' \url{https://eips.ethereum.org/EIPS/eip-4844}, 2022, accessed: 2024-07-11.

\bibitem{eip197}
V.~Buterin and C.~Reitwiessner, ``Eip-197: Precompiled contracts for optimal ate pairing check on the elliptic curve alt_bn128,'' \url{https://eips.ethereum.org/EIPS/eip-197}.

\bibitem{msm-proofs}
\BIBentryALTinterwordspacing
B.~Bünz, M.~Maller, P.~Mishra, N.~Tyagi, and P.~Vesely, ``Proofs for inner pairing products and applications,'' Cryptology ePrint Archive, Paper 2019/1177, 2019, \url{https://eprint.iacr.org/2019/1177}. [Online]. Available: \url{https://eprint.iacr.org/2019/1177}
\BIBentrySTDinterwordspacing

\bibitem{eurocrypt/CamenischHP07}
J.~Camenisch, S.~Hohenberger, and M.~{\O}. Pedersen, ``Batch verification of short signatures,'' in \emph{Advances in Cryptology - {EUROCRYPT} 2007, 26th Annual International Conference on the Theory and Applications of Cryptographic Techniques, Barcelona, Spain, May 20-24, 2007, Proceedings}, ser. Lecture Notes in Computer Science, M.~Naor, Ed., vol. 4515.\hskip 1em plus 0.5em minus 0.4em\relax Springer, 2007, pp. 246--263.

\bibitem{das-asynchrony-pot-2022}
S.~Das, Z.~Xiang, and L.~Ren, ``Powers of tau in asynchrony,'' in \emph{Network and Distributed System Security (NDSS) Symposium}, 2024, \url{https://www.ndss-symposium.org/wp-content/uploads/2024-733-paper.pdf}.

\bibitem{ipl/DemilloL78}
R.~A. DeMillo and R.~J. Lipton, ``A probabilistic remark on algebraic program testing,'' \emph{Inf. Process. Lett.}, pp. 193--195, 1978.

\bibitem{ethereum-optimistic-rollups}
\BIBentryALTinterwordspacing
{Ethereum Foundation}, ``Optimistic rollups,'' 2024, accessed: 2024-09-05. [Online]. Available: \url{https://ethereum.org/en/developers/docs/scaling/optimistic-rollups/}
\BIBentrySTDinterwordspacing

\bibitem{ctrsa/FerraraGHP09}
A.~L. Ferrara, M.~Green, S.~Hohenberger, and M.~{\O}. Pedersen, ``Practical short signature batch verification,'' in \emph{Topics in Cryptology - {CT-RSA} 2009, The Cryptographers' Track at the {RSA} Conference 2009, San Francisco, CA, USA, April 20-24, 2009. Proceedings}, ser. Lecture Notes in Computer Science, vol. 5473.\hskip 1em plus 0.5em minus 0.4em\relax Springer, 2009, pp. 309--324.

\bibitem{crypto/FuchsbauerKL18}
G.~Fuchsbauer, E.~Kiltz, and J.~Loss, ``The algebraic group model and its applications,'' in \emph{Advances in Cryptology - {CRYPTO} 2018 - 38th Annual International Cryptology Conference, Santa Barbara, CA, USA, August 19-23, 2018, Proceedings, Part {II}}, ser. Lecture Notes in Computer Science, vol. 10992.\hskip 1em plus 0.5em minus 0.4em\relax Springer, 2018, pp. 33--62.

\bibitem{plonk-eprint-2019}
\BIBentryALTinterwordspacing
A.~Gabizon, Z.~J. Williamson, and O.~Ciobotaru, ``{PLONK}: Permutations over lagrange-bases for oecumenical noninteractive arguments of knowledge,'' Cryptology ePrint Archive, Paper 2019/953, 2019, \url{https://eprint.iacr.org/2019/953}. [Online]. Available: \url{https://eprint.iacr.org/2019/953}
\BIBentrySTDinterwordspacing

\bibitem{uss/KalodnerGCWF18}
H.~A. Kalodner, S.~Goldfeder, X.~Chen, S.~M. Weinberg, and E.~W. Felten, ``Arbitrum: Scalable, private smart contracts,'' in \emph{27th {USENIX} Security Symposium, {USENIX} Security 2018, Baltimore, MD, USA, August 15-17, 2018}.\hskip 1em plus 0.5em minus 0.4em\relax {USENIX} Association, 2018, pp. 1353--1370.

\bibitem{aniket-polycommit-2010}
A.~Kate, G.~M. Zaverucha, and I.~Goldberg, ``Constant-size commitments to polynomials and their applications,'' in \emph{Advances in Cryptology - {ASIACRYPT} 2010 - 16th International Conference on the Theory and Application of Cryptology and Information Security}, vol. 6477.\hskip 1em plus 0.5em minus 0.4em\relax Springer, 2010, pp. 177--194.

\bibitem{asiacrypt/KohlweissMSV21}
M.~Kohlweiss, M.~Maller, J.~Siim, and M.~Volkhov, ``Snarky ceremonies,'' in \emph{Advances in Cryptology - {ASIACRYPT} 2021 - 27th International Conference on the Theory and Application of Cryptology and Information Security, Singapore, December 6-10, 2021, Proceedings, Part {III}}.\hskip 1em plus 0.5em minus 0.4em\relax Springer, 2021, pp. 98--127.

\bibitem{verkle-tree}
J.~Kuszmaul, ``Verkle trees,'' \url{https://math.mit.edu/research/highschool/primes/materials/2018/Kuszmaul.pdf}.

\bibitem{sonic-2019}
\BIBentryALTinterwordspacing
M.~Maller, S.~Bowe, M.~Kohlweiss, and S.~Meiklejohn, ``Sonic: Zero-knowledge snarks from linear-size universal and updatable structured reference strings,'' in \emph{Proceedings of the 2019 ACM SIGSAC Conference on Computer and Communications Security}, ser. CCS '19.\hskip 1em plus 0.5em minus 0.4em\relax New York, NY, USA: Association for Computing Machinery, 2019, p. 2111–2128. [Online]. Available: \url{https://doi.org/10.1145/3319535.3339817}
\BIBentrySTDinterwordspacing

\bibitem{rollups}
\BIBentryALTinterwordspacing
K.~Nazirkhanova, J.~Neu, and D.~Tse, ``Information dispersal with provable retrievability for rollups,'' Cryptology ePrint Archive, Paper 2021/1544, 2021, \url{https://eprint.iacr.org/2021/1544}. [Online]. Available: \url{https://eprint.iacr.org/2021/1544}
\BIBentrySTDinterwordspacing

\bibitem{boneh-pot-2024}
\BIBentryALTinterwordspacing
V.~Nikolaenko, S.~Ragsdale, J.~Bonneau, and D.~Boneh, ``Powers-of-tau to the people: Decentralizing setup ceremonies,'' in \emph{Applied Cryptography and Network Security: 22nd International Conference, ACNS 2024}, 2024, p. 105–134. [Online]. Available: \url{https://doi.org/10.1007/978-3-031-54776-8_5}
\BIBentrySTDinterwordspacing

\bibitem{eurocrypt/RistenpartY07}
T.~Ristenpart and S.~Yilek, ``The power of proofs-of-possession: Securing multiparty signatures against rogue-key attacks,'' in \emph{Advances in Cryptology - {EUROCRYPT} 2007, 26th Annual International Conference on the Theory and Applications of Cryptographic Techniques, Barcelona, Spain, May 20-24, 2007, Proceedings}, ser. Lecture Notes in Computer Science, vol. 4515.\hskip 1em plus 0.5em minus 0.4em\relax Springer, 2007, pp. 228--245.

\bibitem{crypto-1989-1727}
C.-P. Schnorr, ``Efficient identification and signatures for smart cards,'' in \emph{Advances in Cryptology - CRYPTO '89, 9th Annual International Cryptology Conference, Santa Barbara, California, USA, August 20-24, 1989, Proceedings}, ser. Lecture Notes in Computer Science, vol. 435.\hskip 1em plus 0.5em minus 0.4em\relax Springer, 1989, pp. 239--252.

\bibitem{jacm/Schwartz80}
J.~T. Schwartz, ``Fast probabilistic algorithms for verification of polynomial identities,'' \emph{J. {ACM}}, vol.~27, no.~4, pp. 701--717, 1980.

\bibitem{ganache}
\BIBentryALTinterwordspacing
T.~Suite, ``Ganache - personal blockchain for ethereum development,'' 2024, gitHub repository. [Online]. Available: \url{https://github.com/trufflesuite/ganache}
\BIBentrySTDinterwordspacing

\bibitem{zkbridge}
\BIBentryALTinterwordspacing
T.~Xie, J.~Zhang, Z.~Cheng, F.~Zhang, Y.~Zhang, Y.~Jia, D.~Boneh, and D.~Song, ``zkbridge: Trustless cross-chain bridges made practical,'' in \emph{Proceedings of the 2022 ACM SIGSAC Conference on Computer and Communications Security}, ser. CCS '22.\hskip 1em plus 0.5em minus 0.4em\relax New York, NY, USA: Association for Computing Machinery, 2022, p. 3003–3017. [Online]. Available: \url{https://doi.org/10.1145/3548606.3560652}
\BIBentrySTDinterwordspacing

\bibitem{eurosam/Zippel79}
R.~Zippel, ``Probabilistic algorithms for sparse polynomials,'' in \emph{Symbolic and Algebraic Computation, {EUROSAM} '79, An International Symposiumon Symbolic and Algebraic Computation, Marseille, France, June 1979, Proceedings}, ser. Lecture Notes in Computer Science, vol.~72.\hskip 1em plus 0.5em minus 0.4em\relax Springer, 1979, pp. 216--226.

\end{thebibliography}

\appendices
\section{Extended Discussion}

\pparagraph{Extension to Centralized Setting}\label{sec:ext-bucket}
Our technique may be of independent interest in centralized setups~\cite{bowe-pot-2017},
where a centralized operator handles all the updates and publishes the entire transcript for verification.
In this setting, 
a contributor must inspect all $\checkone$ proofs of the updates after its submission to confirm the inclusion,
leading to $O(M)$ computation for a ceremony with $M$ contributors ($M$ is the total number of contributors, while $m$ is that in a batched update in our setting). 
By applying our aggregation proof, the verification cost for a contributor is reduced to $O(\sqrt{M})$ using a bucketization trick.

More specifically, the operator divides $M$ contributions into $\sqrt{M}$ consecutive chunks,
each of which is considered as a batched update with $\sqrt{M}$ contributions.
In each batch, the $\checkone$ proofs are aggregated using our technique.
Then, the operator publishes all the aggregated proofs and the $(\pk, \pos)$s of all updates.
Now, the contributor only needs to inspect $\sqrt{M}$ $\pos$s in its batch (acting like $\user$) and runs $\sqrt{M}$ progressive key verification on the batches proofs (acting like $\contract$),
resulting in $O(\sqrt{M})$ computation.



\section{Overview of Previous PoT}
\subsection{Nikolaenko et at~\cite{boneh-pot-2024}'s $\checktwo$}\label{sec:beneh-well-check}
The smart contract only accepts an update if $\pp$ passes the following well-formedness check:
{\scriptsize
\begin{align*}
&e\left(\sum_{i=1}^{n} \rho_1^{i-1} P_i, \ \gentwo + \sum_{\ell=1}^{k-1} \rho_2^{\ell} Q_{\ell}\right) = e\left(\genone + \sum_{i=2}^{n-1} \rho_1^{i} P_i, \ \sum_{\ell=3}^{k} \rho_2^{\ell-1} Q_{\ell}\right)
\end{align*}
}

To overcome the issue that Ethereum's smart contract does not support ECMult over $\gptwo$, 
\cite{boneh-pot-2024} converts the computation into a $4k - 6$ pairing checking for $k > 1$:
{
\begin{align*}
    &\text{For } R = \sum_{i=0}^{n-2} \rho^i \cdot P_{i+1,j} : \\
    &\quad\quad e(\genone + \rho R, \ Q_{1,j}) = e(R + \rho^{n-1} P_{n,j}, \ \gentwo) \\
    &\text{For } t = 2 \ldots k-1 :\\
    &\quad e(P_{k-t,j}, Q_t) = e(P_{k,j}, \gentwo) \land e(\genone, Q_k) = e(P_{k,j}, \gentwo)
\end{align*}
}


\subsection{On Aggregation of Nikolaenko et at~\cite{boneh-pot-2024}'s $\checkone$}\label{sec:agg-schnorr}
Recall that \cite{boneh-pot-2024} proposes the following protocol for $\checkone$.
With $\pp$ and a random value $r$, the prover update $\pp$ to $\pp'$ with $r$ and compute the proof $\pi$ as following.
\begin{align*}
z &\sample \Zp^*;
h \leftarrow H(P_1' \mid\mid P_1 \mid\mid z \cdot P_1) \\ 
\pi &\leftarrow (z \cdot P_1, z + h \cdot r) \in \mathbb{G}_1 \times \mathbb{Z}_p
\end{align*}
Then, a verifier with accesses to $\pp, \pp', \pi$ outputs
\begin{align*}
  \pi_2 \cdot P_1 \isequal \pi_1 + H(P_1' \mid\mid P_1 \mid\mid \pi_1) \cdot P_1
\end{align*}

A natural attempt to extend it for aggregation is to multiply the new random value $r'$ to $\pi_1$ and 
obtain $r' z + r' r h $. While $r' r$ in the second term is the knowledge we want to prove, $h$ in the term is not the proper challenge value $H(r' P'_1 || r' P_1 || z' P_1')$.





\section{Missing Proofs}
\label{sec:missing-proofs}
\subsection{Proof for Theorem~\ref{thm:incl-snd}}
\label{subsec:prf-incl-snd}
\begin{proof}
  At a high-level, we prove that if there is an adversary $\advinclude$ that can win $\inclusionsoundgame$ with non-negligible probability, we can use $\advinclude$ to construct an algorithm $\mathbf{B}$ to break the $q$-SDH assumption.

  $\mathbf{B}$ will run a $\inclusionsoundgame$ with $\adv$ as a simulator.
  If $\advinclude$ produces a valid output,
  $\mathbf{B}$ can extract the coefficient $\advinclude$ used over the group elements,
  which allows $\mathbf{B}$ to recover the value to break $q$-SDH assumption.


  Now, we examine what group elements $\advinclude$ can obtain from function invocation during the game:
  \begin{compactitem}
    \item $\funonupdate$: $\contract$ provides
          \begin{align*}
            {\stcontract}_{i} = (\vk_{i}, {\presig}_{i}, {\cursig}_{i}) \in \gpone \times \gptwo \times \gptwo
          \end{align*}
          where might be contributed by $\advinclude$ or the simulator.
          To the benefits of $\advinclude$, we assume $\advinclude$ can access
          \begin{align*}
            \pp_{i} = (\tau_{i} \genone, \ldots) \in \gpone^{n} \times \gptwo^{k}
          \end{align*}
          for $i \in [1..\mu]$ up to the current round $\mu$
          and
          the valid
          \begin{align*}
          {\stkey}_{i} = ((\pk_{1, j}, \pos_{1, j}) | j \in [1..m_i]) \in (\gpone \times \gptwo)^{m_{i}}
          \end{align*}
          for $i \in [1..\mu]$ and $m_t$ is the (claimed) number of contributors for
          the $i$-th subbmision.

    \item $\funoffupdate$: the simulator receives $\presig$, $\cursig$ and $\pp$
    from $\advinclude$ and return
          to $\advinclude$
          \begin{align*}
    \pk \in \Zp, \pos \in \gptwo \presig', \in \gptwo, \cursig' \in \gptwo, \pp' \in \gpone^{n} \times \gptwo^{k}
          \end{align*}

    \item $\funcheckinclude$: it is a local computation and have no interaction
          with the challenger, so it gives no additional group elements to
          $\advinclude$.

  \end{compactitem}

  At the end of the game, the blockchain will have $\ell$ submissions and $\advinclude$ outputs
  \begin{align*}
    \stser^{*} = (\pp^{*}, \presig^{*}, \cursig^{*}, \cdot, \curvk^{*}, \cdot, \stkey^{*})
  \end{align*}
  that passes $\checkinclusion(\pk, \stser^{*}, \txser)$
  and $w \in \Zp$ such that $\pp_{\ell} = (w \genone, \ldots, w^{n} \gentwo ; \quad w\gentwo, \ldots, w^{k} \gentwo)$, where $\pp_{\ell}$ is the latest valid PoT string on $\contract$.
  For the sake of simplicity, we assume all submisisons on the blockchain are valid and the $\pp'$ are well-formed.

  Additionally, we assume that $\advinclude$ uploads to the $\contract$ the update containing the simulator's $\presig$ and $\cursig$ $t$-th update via $\funonupdate$. (If $\advinclude$ has multiple uploads containing $\presig$ and $\cursig$, we consider the first time $\advinclude$ submit as $t$.)

  We now introduce some notation in this proof. We use $\log : \gpone \rightarrow \Zp$ to denote the scalar $\log G \in \Zp$ such that $(\log G) \cdot 1_{\gpone} = G$. To slightly abuse the notation, we denote $\log$ for $\gptwo$ similarly. We remains use the addition notation for $\log(\cdot)$. In other words, if $G = a \genone + b \genone$, we write $\log G = a + b$.
  Additionally, we denote $\log \pk = \sk$.
  Note that $\log$ is merely a notation and does not imply an efficient algorithm that can compute the logarithm out of a group element.

  As $\advinclude$ is an algebraic adversary, whenever $\advinclude$ outputs an group element $G$, it also outputs an array of coefficient $\vec{c_{G}}$ such that $G = \sum_{i} c_{G}[U_{i}] \cdot U_{i}$, where $\{U_{i}\}_{i}$ are all the group elements $\advinclude$ previously obtained. Syntactically speaking, the $U_{i}$ in $c_{G}[U_{i}]$ is treated as a symbol instead of a value.

  For now, we also assume $t = \ell$, \ie, the $t$-th update is the last update of the entire challenge.

  \begin{lemma}\label{lem:coeff-sk}
    Assume $t = \ell$.
    For any group element $G$ that $\advinclude$ outputs after the interaction
    with the simulator via $\funcheckinclude$, their logarithmic can be written
    as a polynomial of $\sk$:
    \begin{align*}
      \log G = A_{G} \cdot \sk + B_{G} \in \Zp
    \end{align*}
    where $A_{G}$ and $B_{G}$ are terms that does not contain $\sk$.
    If $G \in \gpone$, we have
    \begin{align*}
      A_{G} = \coeff{G}{\pk}
    \end{align*}
    If $G \in \gptwo$, we have
    \begin{align*}
      A_{G} = \coeff{G}{\cursig'} \cdot r \cdot \xi + \coeff{G}{\pos} \cdot \log H(\pk)
    \end{align*}
    where $\xi = \log P_1$ and $\pp = (\ldots; \quad Q_{1}, \ldots)$ is the first $\gpone$ elements in the PoT string that $\advinclude$ sends to the simulator in $\funoffupdate$.
  \end{lemma}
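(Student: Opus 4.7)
The plan is to exploit the algebraic representation guaranteed by AGM and carefully enumerate every group element the adversary has seen that contains $\sk$ as a factor. Since $G$ is output by the algebraic adversary $\advinclude$, AGM provides coefficients $c_{G}[\cdot]$ such that $G = \sum_{i} c_{G}[U_i] U_i$ over all previously observed group elements $U_i$ (in the appropriate source group). Taking logarithms, $\log G = \sum_{i} c_{G}[U_i] \log U_i$. So the claim reduces to identifying which of the $U_i$'s have a logarithm of the form ``something linear in $\sk$'' and reading off the overall coefficient of $\sk$.

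First I would enumerate all group elements the simulator has ever produced that depend on $\sk$. Inspecting the oracles: $\funoffupdate$ is the only place where $\sk$ is used by the simulator, and in its one invocation as the honest contributor the simulator outputs exactly $\pk = \sk \cdot \genone \in \gpone$, $\pos = \sk \cdot H(\pk) \in \gptwo$, and $\cursig' = r \cdot \cursig + \sk \cdot r \cdot P_1 \in \gptwo$ (while $\presig' = r\cdot \presig$ is $\sk$-free, and $\pp'$ is obtained by re-randomising $\pp$ with $r$, also $\sk$-free). The assumption $t = \ell$ is crucial here: it means that the contract makes no further state transitions after the submission that embeds $(\presig', \cursig', \pk)$, so no new $\sk$-carrying element leaks into $\contract$'s state via subsequent $\funonupdate$ calls. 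The random oracles $\hashgp, \hashzp$ and the local algorithm $\funcheckinclude$ produce no additional $\sk$-carrying group elements. Hence the only members of $\{U_i\}$ whose logarithms depend on $\sk$ are $\pk$ (in $\gpone$) and $\pos, \cursig'$ (in $\gptwo$).

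With this inventory fixed, the two cases follow by direct computation. If $G \in \gpone$, then the only $\sk$-dependent basis element is $\pk$ with $\log \pk = \sk$, so $\log G = c_{G}[\pk] \cdot \sk + B_{G}$ where $B_{G}$ collects the $\sk$-free terms; thus $A_{G} = c_{G}[\pk]$. If $G \in \gptwo$, I substitute $\log \pos = \sk \cdot \log H(\pk)$ and $\log \cursig' = r\cdot \log \cursig + \sk \cdot r \cdot \xi$ (with $\xi = \log P_1$) into the AGM expansion:
\begin{align*}
\log G &= c_{G}[\pos] \cdot \sk \cdot \log H(\pk) + c_{G}[\cursig'] \cdot \bigl(r\cdot \log \cursig + \sk \cdot r \cdot \xi\bigr) + (\text{terms free of } \sk) \\
&= \sk \cdot \bigl( c_{G}[\cursig'] \cdot r \cdot \xi + c_{G}[\pos] \cdot \log H(\pk) \bigr) + B_{G},
\end{align*}
yielding exactly the stated $A_{G}$.

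The main obstacle is the bookkeeping step: being truly certain that nothing else in the transcript silently carries $\sk$. I would therefore handle $\funonupdate$, $\funoffupdate$, $\funcheckinclude$, and the random oracles one-by-one, showing that every group element returned by the challenger in these calls is either chosen by $\advinclude$, sampled independently of $\sk$, or one of the three listed elements $\pk, \pos, \cursig'$; the hypothesis $t = \ell$ is what prevents an adversarially crafted later update from dragging $\vk$ or $\presig, \cursig$ stored in $\contract$'s state into a position where they depend on $\sk$ through a subsequent $\funonupdate$. Once this audit is complete, AGM does the rest and the coefficient formulae drop out.
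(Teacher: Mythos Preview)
Your proposal is correct and follows essentially the same approach as the paper: the paper's proof is the single sentence ``By inspecting \Cref{alg:off-update}, a contributor only uses its $\sk$ to produce $\pk \in \gpone$ and $\pop, \cursig' \in \gptwo$,'' and your argument is a fleshed-out version of exactly this observation, making the AGM expansion and the coefficient extraction explicit. Your additional remarks on why $t=\ell$ matters and the per-oracle audit of $\sk$-carrying elements are helpful elaborations that the paper leaves implicit.
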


  \begin{proof}
    By inspecting \Cref{alg:off-update}, a contributor only uses its $\sk$ to produce $\pk \in \gpone$ and $\pop, \cursig' \in \gptwo$.
  \end{proof}

As $\advinclude$ wins the game, the above variables satisfy:

{\small 
\begin{align*}
e(\rho_{t, 1} \cdot \vk_{t-1} + \rho_{t, 2} \cdot \vk^{*}, Q_{1}^{*}) = e(\genone, \rho_{t, 1} \cdot \presig^{*} + \rho_{t, 2} \cdot \cursig^{*})
\end{align*}
}

where $\log Q^{*}_{1} = \tau_{t} = w$ and $\rho_{t, j} = H(\vk_{t-1}||\vk^{*}||\presig^{*}|| \cursig^{*} || Q_{1}^{*} || j)$ for $j \in \{1, 2\}$.

Before analysis the the above constraint, we introduce a lemma related to $\rho$.
\begin{lemma}\label{lem:hash-zippel}
  Suppose
  \begin{align*}
   0 = \rho_{j, 1} D_{j, 1} + \rho_{j, 1} D_{j,2}
  \end{align*}
  for some $j \in \mathbb{N}$
  and
   $D_{k, 1} = \rho_{k-1, 1} D_{k-1, 1} + \rho_{k-1, 1} D_{k-1,2} \in \Zp$
  for all $k \in [i+1..j]$ and some $i \in \mathbb{N}$.

  If changing of the value of $D_{k, *}$ for any $k \in [i..j]$ implies a change of the value of
  $\rho_{k, *}$,
  then we have $D_{k} = 0$ for all $k \in [i..j]$ with probabilty $\geq 1 - \numhashquery/p$.
\end{lemma}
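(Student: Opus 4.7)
The plan is to exploit the hypothesis that every $\rho_{k,*}$ is derived from a random oracle query whose input is altered whenever any $D_{k,*}$ changes. Under that hypothesis, each $\rho_{k,1}, \rho_{k,2}$ behaves, from the adversary's viewpoint, as a pair of uniformly random and independent elements of $\Zp$ that are sampled after $D_{k,1}, D_{k,2}$ are fixed. The proof then reduces to a Schwartz--Zippel-style argument applied inductively from $k=j$ down to $k=i$, composed with a union bound over the $\numhashquery$ random-oracle queries.

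First I would prove the one-step claim: if $D_{1}, D_{2} \in \Zp$ are fixed before the challenger samples $\rho_1, \rho_2 \sample \Zp$ uniformly and independently, then
\begin{align*}
\Pr[\rho_1 D_1 + \rho_2 D_2 = 0 \ \wedge\ (D_1, D_2) \neq (0,0)] \leq 1/p,
\end{align*}
because any nonzero linear form in two independent uniform variables vanishes with probability $1/p$. Since the adversary has at most $\numhashquery$ queries to the random oracle $\hashzp$, and each query corresponds to a distinct potential pair $(\rho_{k,1}, \rho_{k,2})$, a union bound yields that the event ``some query produces coefficients satisfying the linear relation for a nonzero $(D_1, D_2)$'' occurs with probability at most $\numhashquery / p$.

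Next I apply this to the base case $k = j$: the hypothesis $0 = \rho_{j,1} D_{j,1} + \rho_{j,2} D_{j,2}$ (correcting the apparent typo on the second index) together with the fact that $\rho_{j,*}$ depends on $D_{j,*}$ via $\hashzp$ forces $(D_{j,1}, D_{j,2}) = (0,0)$ except with probability $\numhashquery/p$. For the inductive step, assume $D_{k,1} = D_{k,2} = 0$. By hypothesis we have $D_{k,1} = \rho_{k-1,1} D_{k-1,1} + \rho_{k-1,2} D_{k-1,2}$, so the right-hand side equals zero, and the same one-step claim (with the same random-oracle hypothesis on $\rho_{k-1,*}$) gives $D_{k-1,1} = D_{k-1,2} = 0$ except with probability $\numhashquery/p$. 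Iterating down to $k = i$ and applying a final union bound over the $j - i + 1$ levels (absorbed into the $\numhashquery/p$ since each level consumes at most $\numhashquery$ of the global queries) yields the claimed bound.

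The main subtlety is justifying the ``freshness'' of $\rho_{k,*}$ at each inductive step: one must argue that even though $D_{k,*}$ is a function of earlier $\rho$'s and adversarial choices, the random-oracle output $\rho_{k,*}$ is sampled only \emph{after} the input encoding $D_{k,*}$ is determined, so conditioning on the history makes $\rho_{k,*}$ uniform and independent. This is precisely what the hypothesis ``changing of the value of $D_{k,*}$ implies a change of the value of $\rho_{k,*}$'' encodes: distinct $D_{k,*}$ lead to distinct oracle inputs, hence to independent uniform outputs in the random-oracle model. Once this observation is made precise, the induction and union bound go through routinely.
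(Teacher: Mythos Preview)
Your approach is essentially the same as the paper's: both apply a Schwartz--Zippel/linear-form argument at each level to conclude $(D_{k,1},D_{k,2})=(0,0)$ from $\rho_{k,1}D_{k,1}+\rho_{k,2}D_{k,2}=0$, then induct from $k=j$ down to $k=i$, and finally union-bound over the adversary's $\numhashquery$ random-oracle queries. The paper's accounting is slightly different---it observes that each resample of a pair $(\rho_{k,1},\rho_{k,2})$ costs two oracle queries, so there are at most $\numhashquery/2$ attempts in total across all levels, giving failure probability $1-(1-2/p)^{\numhashquery/2}\le \numhashquery/p$---whereas you union-bound $\numhashquery/p$ per level and then hand-wave the absorption; the paper's bookkeeping makes the final $\numhashquery/p$ bound cleaner, but the underlying argument is the same.
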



\begin{proof}
  We expand $D_{k, 1}$ into
  $ (\rho_{k+1, 1} \ldots\rho_{i+1,1}(\rho_{i,1} D_{i, 1} + \rho_{i,2} D_{i, 2}) +\ldots)+ \rho_{k+1, 2} D_{k+1, 2})$
  for any $k \in [i..j]$.
  $(\rho_{k, 1} D_{k, 1} + \rho_{k, 2} D_{k, 2})$ can be considered as a multivariate polynomial of $\rho_{k, 1}$ and $\rho_{k, 2}$ with total degree of $2$.
   Since the value of $\rho_{k, 1}$ and $\rho_{k, 2}$ are sampled after $D_{k, 1}$ and $D_{k, 2}$, and thus the polynomial is evaluated at random points.
  Then by the Schwartz-Zippel lemma, $D_{k, 1} = D_{k, 2} = 0$ with $1 - 2/p$ probability.
  Given that $D_{k, 1} = 0$, $D_{k+1, 1} = D_{k+1, 1} = 0$ with $1 - 2/p$ probability for the same reason.
  By induction, we know $D_{k, 1} = D_{k, 1} = 0$ with probability
  $\geq (1 - 2/p)^{k-i+1}$ for any $k \in [i..j]$ \emph{if} we do not apply the Fait-Shamir Heuristic.
  However, in our protocol, $\rho_{k, *}$ are sampled through a random oracle, so the above step takes
  $2(j-i+1)$ queries to the random oracle.
  Notice that an adversary can also change the values of $D_{k, *}$ to resample $\rho_{k, *}$, and each sampling takes $2$ hash queries.
  Hence, the probability of violating all $D_{k, *}$ being zeros is further increased to $\leq 1 - (1-2/p)^{\numhashquery/2} \leq \numhashquery/p$.
\end{proof}

\begin{remark}
Notice that
$\rho_{j, 1} D_{j, 1} + \rho_{j, 1} D_{j,2}$
can be rewrite into another form of
$\sum_{k=i}^j (\prod_{\nu=i+1}^{j} \rho_{\nu,1}) (\rho_{k, 1} D_{k, 1} + \rho_{k, 2} D_{k, 2})$.
\end{remark}

\begin{remark}
  The quota of random oracle queries, \ie, $\numhashquery$, are ``shared'' in the entire execution of $\advinclude$.
  Hence, if in our proof, we apply Lemma~\ref{lem:hash-zippel} multiple times, the overall failture probability all of algorithm is still $\leq \frac{\numhashquery}{p}$, instead of a multiple of it.
\end{remark}

Now, we proceed our analysis on the contraint.
The equality implies

{\small 
\begin{align*}
  w \cdot (\rho_{t, 1} \log \vk_{t-1} + \rho_{t, 2} \log \vk^{*})
  = \rho_{t, 1} \log \presig^{*} + \rho_{t, 1} \log \cursig^{*} \\
  0 = \rho_{t, 1} (w \cdot \log \vk_{t-1} - \log \presig^{*}) + \rho_{t, 2} (w \cdot \log \vk_{t-2} - \log \cursig^{*})
\end{align*}
}



Now we apply Lemma~\ref{lem:hash-zippel} on it.
$w = \tau_{t} = \log Q^{*}_{1}, \vk_{t-1}, \presig^{*}$ are used as part of the query to the random oracle $\hashzp$ to produce $\rho_{t, *}$.
Changing any of them will change $\rho_{t, *}$.
By Lemma~\ref{lem:hash-zippel} and
by setting $D_{t, 1} = w \cdot \log \vk_{t-1} - \log \presig^{*}$,
we have $w \cdot \log \vk^{*} - \log \cursig^{*} = 0$ with probability $1 - \numhashquery/p$.

We rewrite $w \cdot \log \vk^{*} - \log \cursig^{*} = A_{t} \sk + B_{t}$ where
\begin{align*}
A_{t} = w \cdot \coeff{\vk^{*}}{\pk} + \coeff{\cursig^{*}}{\cursig'} \cdot r \cdot \xi + \coeff{\cursig^{*}}{\pos} \cdot \log H(\pk)
\end{align*}


We denote the following events:
\begin{compactitem}
  \item $\event{A}$ the event $A_{t} \neq 0$
  \item $\event{\mathsf{cur}}$ the event $\coeff{\cursig^{*}}{\cursig'} \neq 0$
  \item $\event{\pk}$ the event $\coeff{\vk^{*}}{\pk} \neq 0$
\end{compactitem}

In the following lemmas, we will construct multiple algorithm who can extract a solution of $(n,k)$-SDH instance from $\advinclude$ with overwhelming probability,
each conditioning on the occurrence of above events.
These conditions altogether will cover all the possible events.
Then, $\mathcal{B}$ can randomly select one of these sub-algorithms to extract the solution from $\advinclude$.

\begin{lemma}
Given that $t = \ell$, there exist PPT algorithms $\calg, \dalg, \ealg$ such that

{\small 
\noindent$\pr{\gamesdh^{\calg} = 1} = \pr{\inclusionsoundgame = 1 | \event{A} = 1}$, 
    $\pr{\gamesdh^{\dalg} = 1} \geq \frac{p-1}{p} \pr{\inclusionsoundgame = 1 | \event{A} = 0 \wedge \event{\mathsf{cur}} = 1}$,
    $\pr{\gamesdh^{\ealg} = 1} \geq \\
    \frac{p-2}{p} \pr{\inclusionsoundgame = 1 | \event{A} = 0 \wedge \event{\mathsf{cur}} = 0 \wedge \event{\pk} = 1}$.
}

\end{lemma}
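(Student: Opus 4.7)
The plan is to construct three $(n,k)$-SDH adversaries $\calg,\dalg,\ealg$ that share a common simulation template and differ only in where they embed the SDH secret $z$. Given the SDH instance $(\genone, z\genone,\ldots,z^{n}\genone;\gentwo,\ldots,z^{k}\gentwo)$, each algorithm internally runs $\inclusionsoundgame$ with $\advinclude$ and uses Lemma~\ref{lem:hash-zippel} to conclude that, conditional on $\advinclude$ winning, the identity $A_t\cdot\sk + B_t = 0$ holds in $\Zp$. Because $\advinclude$ is algebraic, the reduction knows every coefficient $\coeff{\vk^{*}}{\pk},\coeff{\cursig^{*}}{\cursig'},\coeff{\cursig^{*}}{\pos}$ appearing in $A_t$ as an explicit scalar; the three embeddings are arranged so that exactly one of these coefficients is nonzero on the conditioning event, reducing the identity to a single linear equation in $z$. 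Once $z$ is recovered, the reduction outputs any pair $(c,(c+z)^{-1}\genone)$ with $c\neq -z$.

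For $\calg$ (branch $\event{A} = 1$), set $\pk := z\genone$, program $\hashgp(\pk):=\alpha\gentwo$ for fresh uniform $\alpha\in\Zp$ so that $\pos = \alpha\cdot(z\gentwo)$ is computable from the SDH instance, and simulate $\cursig' = r\cursig + z\cdot r\cdot P_{1}$ by invoking the AGM extractor on the adversary-supplied $P_1=\sum_{i=0}^{n-1}c_i(z^{i}\genone)$ and combining it with the precomputed $\{z^{i+1}\genone\}$. Since the conditioning event supplies $A_t\neq 0$, the reduction simply sets $z := -B_t/A_t$, and its success probability matches $\advinclude$'s exactly, giving the first equality. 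For $\dalg$ (branch $\event{A}=0,\event{\mathsf{cur}}=1$), plant $z$ inside the contributor's randomness $r$, using the SDH powers to evaluate every $r^{i}P_i^{\text{old}}$ without ever revealing $z$. The constraint $A_t=0$ rearranges to $w\coeff{\vk^{*}}{\pk}+\coeff{\cursig^{*}}{\cursig'}\cdot z\xi+\coeff{\cursig^{*}}{\pos}\log\hashgp(\pk)=0$, and since every symbol other than $z$ is known, $z$ is recovered whenever the random SDH answer $c$ avoids $-z$, absorbing the $(p-1)/p$ loss. For $\ealg$ (branch $\event{A}=0,\event{\mathsf{cur}}=0,\event{\pk}=1$), instead program $\hashgp(\pk):=\beta\cdot(z\gentwo)$ so $\log\hashgp(\pk)=\beta z$ for known $\beta$; the constraint collapses to $w\coeff{\vk^{*}}{\pk}+\coeff{\cursig^{*}}{\pos}\cdot\beta z=0$, which yields $z$ whenever $\coeff{\cursig^{*}}{\pos}\neq 0$ (bad with probability $1/p$) and $c\neq -z$ (also $1/p$), combining by union bound into the $(p-2)/p$ factor.

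The main obstacle is the simulation itself: each reduction must produce $\pos$ and $\cursig'$ without ever knowing $z$, which is feasible only because the SDH instance hands out all powers $\{z^{i}\genone\}_{i\le n}$ and $\{z^{i}\gentwo\}_{i\le k}$ and because the AGM extractor supplies a polynomial-in-$z$ representation of every adversary-supplied group element. A secondary check is that each embedding leaves the simulated view statistically identical to the real game — this holds because $\pk$, $r$, the programmed $\hashgp$ outputs, and the $\rho_{\cdot,j}$'s are all fresh uniform or affinely re-randomized values. The residual $1/p$ and $2/p$ slacks are purely combinatorial: they capture only the rare events that a denominator in the linear extraction vanishes or that the SDH exponent $c$ collides with $-z$, and do not represent any cryptographic loss.
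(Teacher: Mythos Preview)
Your three-way embedding (into $\sk$, into $r$, into the $\hashgp$ responses) is exactly the paper's strategy, and the simulation ideas are right. However, your accounting for the $(p-1)/p$ and $(p-2)/p$ slacks is incorrect. The event ``the SDH answer $c$ avoids $-z$'' is never a source of loss: once the reduction has recovered $z$ as a scalar it simply picks any $c\neq -z$ deterministically. Likewise, in your $\ealg$ you say $\coeff{\cursig^{*}}{\pos}=0$ is ``bad with probability $1/p$,'' but that coefficient is chosen adversarially, not at random; the right argument is that under the conditioning ($A_t=0$, $\event{\mathsf{cur}}=0$, $\event{\pk}=1$, and $w\neq 0$ from non-degeneracy) the equation $w\,\coeff{\vk^{*}}{\pk}+\coeff{\cursig^{*}}{\pos}\cdot\log\hashgp(\pk)=0$ \emph{forces} $\coeff{\cursig^{*}}{\pos}\neq 0$. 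The actual losses come from the simulator-controlled random scalars that sit in denominators: for $\ealg$ the paper programs $\hashgp(x)=a_x\cdot(z\gentwo)+b_x\gentwo$ and needs $a_{\pk}\neq 0$; for $\dalg$ the denominator involves $\xi=\log Q_1$ (the adversary-supplied first element), which is handled by the honest-contributor abort check.

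One more imprecision: in your $\calg$ you write the adversary-supplied element as $P_1=\sum c_i(z^i\genone)$, but the adversary never sees the SDH powers directly---in $\calg$ it sees only $\genone$ and $\pk=z\genone$ in $\gpone$. The correct simulation (as in the paper) extracts the scalar $\log Q_1$ from the AGM representation over the $\gptwo$ elements the adversary has actually seen (namely $\gentwo$ and programmed $\hashgp$ outputs with known discrete logs), and then computes $\sk\cdot r\cdot Q_1$ as $r\cdot\log Q_1\cdot(z\gentwo)$ using the SDH instance. Your version is salvageable but the explanation obscures why the simulation goes through.
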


\begin{proof}
  Let $(\alpha_1 = z \genone, \ldots, \beta_k = z^{k} \gentwo)$ be the $(n, k)$-SDH instance.

  (All of the following adversaries programs $\hashzp(\cdot)$ normally, \ie, for $\hashzp(x_{i})$, they outputs $r_{i} \sample \Zp$ if $x_{i}$ is a first-time query; otherwise, they responses with the previous $\hashzp(x_{i})$.)

  \textbf{Adversary} $\calg(\alpha_1 = z \genone, \ldots, \beta_k = z^{k} \gentwo)$:
  At the high-level, $\calg$ embeds the SDH instance into the $\sk$ (and thus $\pk$).

  It sets $\pk = \alpha_{1} = z \cdot \genone$.
  For the random oracle $\hashgp(\cdot)$,
  When someone queries $H(x_{1})$, it samples $r_{i} \gentwo \in \Zp$ and sets $H(x_{1}) = r_{1} \gentwo$ if $H(x_{1}) = \bot$ (\ie, $x_{1}$ is not queried before); otherwise, it returns the $H(x_{1})$ from its records.

  $\calg$ invokes $\funonupdate$ periodically following the real-world distribution of invocation and performs providing valid inputs like honest servers to simulate view.
  When $\advinclude$ queries $\calg$ for $\funoffupdate$, $\calg$ provides its input as an honest contributor except $\calg$ needs special steps to handle the signing for $\pk$ (recall that $\calg$ does not know $\sk = z$):
  \textit{i)} To compute $\sigma'$, $\calg$ samples $r$ and retrieves from $\advinclude$ the coefficients to construct $Q_{1}$, \ie, $ \chi = \log Q_{1}$, it then computes $\sigma' = r\cdot \cursig + r \log Q_{1} \cdot \pk$.
  \textit{ii)}
  To compute $\pos$, $\calg$ retrieves (or samples) $r_{\pk}$, which is the random coefficient programmed for $H(\pk)$. It then computes $\pos = r_{\pk} \cdot \pk$.
  Notice that $\calg$ perfectly simulates $\advinclude$'s view.

  Give that $\event{A} = 1$, \ie, $A_{t} \neq 0$,
  $\calg$ extracts $z$ from $\advinclude$'s output by computing $z = \sk = B_{t}/A_{t}$.
  With $z$, $\calg$ can provides $(c + z, \frac{1}{c+z}\genone)$ for any $c \in \Zp\backslash \{-z\}$

  \textbf{Adversary} $\dalg(\alpha_1 = z \genone, \ldots, \beta_k = z^{k} \gentwo)$:
  At a high-level, $\dalg$ embeds the $(n,k)$-SDH instance $r$ (the random value multipled into $\pp_{t}$ by $\dalg$ as an contributor).


  $\dalg$ programs $\hashgp(\cdot)$ like $\calg$ does and simulates the view about $\funonupdate$ as the honest case.
  For $\funoffupdate$, $\dalg$ interacts with $\advinclude$ like an honest contributor except when computing for $\pp'$ and $\cursig'$.
  Notice that $\dalg$ uses any information about the $(n,k)$-SDH instance to generate the view for $\advinclude$ thus far.
  All the scalar values $\advinclude$ used to construct $\pp$ are known by $\dalg$ because the invoked scalar values are either returned by $\advinclude$ as an algebraic adversary or sampled by the $\dalg$ itself.
  In other words, $\dalg$ knows $\xi$.
  (If $\dalg$ cannot extract the scalar from $\pp$ because of $\pp$'s ill-formedness, $\dalg$ aborts like an honest user.)
  $\dalg$ can compute
  $\pp' = (\xi \cdot \alpha_{1}, \ldots; \quad \ldots \xi^{k} \cdot \beta_{k})$.
  We can easily see that the logarithm of them are
  $(\xi \cdot z, \ldots; \quad \ldots, (\xi \cdot z)^{k})$.
  Similarly, $\dalg$ extracts $\log \cursig$ and then computes $\cursig' = \log \cursig \cdot \beta_{1} + \sk \cdot \tau_{t-1} \cdot \xi \cdot \alpha_{1}$. $\dalg$ perfectly simulate $\advinclude$'s view.

  Now, we analyze how $\dalg$ extract $z$ from $\advinclude$'s output.
  Given that $A_{t} = 0$ and $\coeff{\cursig^{*}}{\cursig'} = 1$, $\dalg$ can extract

  \begin{align*}
   z = -\frac{w \cdot \coeff{\vk^{*}}{\pk} + \coeff{\cursig^{*}}{\pos} \cdot \log \hashgp(\pk)}{\coeff{\cursig^{*}}{\cursig'} \cdot \tau_{t-1} \cdot \xi }
  \end{align*}

  If $\xi = 0$,
  $\dalg$ would have seen $\presig = 0_{\gptwo}$ during $\funoffupdate$ and thus would have aborted like an honest contributor.

  \textbf{Adversary} $\ealg(\alpha_1 = z \genone, \ldots, \beta_k = z^{k} \gentwo)$:
  At a high-level, $\ealg$ embeds the $(n,k)$-SDH instance into the random oracle $\hashgp$'s responses.

  When $\advinclude$ queries $\hashgp(x_{i})$ for $x_{i} \in \{0, 1\}^{*}$, $\ealg$ samples $a_{i}, b_{i} \sample \Zp \times \Zp$ and sets $\hashgp(x_{i}) = a_{i}\cdot \beta_{1} + b_{i}\cdot \gentwo$ if $\hashgp(x_{1}) = \bot$; otherwise, it returns the previously set $\hashgp(x_{i})$.
  Since $a_{i}, b_{i}$ perfectly blind $z$, $\ealg$ perfectly simulates the view about querying $\hashgp(\cdot)$.

  $\calg$ invokes $\funonupdate$ periodically as honest servers for simulation.
  When $\advinclude$ queries $\ealg$ for $\funoffupdate$,
  $\ealg$ provides its input as if an honest contributor.

  Given that $A_{t} = 0$ and $\coeff{\cursig^{*}}{\cursig'} = 0$,
  we have
  \begin{align*}
    w \cdot \coeff{\vk^{*}}{\pk} + \coeff{\cursig^{*}}{\pos} \cdot \log \hashgp(\pk) = 0
  \end{align*}

  Recall that $\log \hashgp(\pk) = a_{\pk} z + b_{\pk}$, where $a_{\pk}$ and $b_{\pk}$ are the sampled value during the first query of $\hashgp(\pk)$.
  We can rewrite the above equation as
  \begin{align*}
    z = (b_{\pk} - \frac{w \cdot \coeff{\vk^{*}}{\pk}}{\coeff{\cursig^{*}}{\pos}})/a_{\pk}
  \end{align*}

  As $\pr{a_{\pk} \neq 0} = \frac{p-1}{p}$ and as we are given that $\event{\mathsf{cur}} = 1$, \ie, $\coeff{\cursig^{*}}{\pos} \neq 0$, $\ealg$ can extract $z$ with probability $\frac{p-1}{p}$

\end{proof}

The following lemma states that as long as $\coeff{\vk^{*}}{\pk} = 0$, our algorithm can extract the solution for the $(n, k)$-SDH instance without knowing the value of $\tau_{t}$ or $w$.
We denote by $\event{\pk}$ the event that $\coeff{\vk^{*}}{\pk} \neq 0$.

\begin{lemma}\label{lem:incl-sound-nopk}
  Without assuming $t = \ell$,
  there exists a PPT algorithm $\falg$
  such that
\begin{align*}
  \pr{\mathbf{sdh}^{\falg} = 1} &\geq \frac{p-2}{2p} \pr{\inclusionsoundgame = 1 | \event{\pk} = 0}
\end{align*}
\end{lemma}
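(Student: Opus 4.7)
The plan is to construct a PPT algorithm $\falg$ that, given an $(n,k)$-SDH instance $(\alpha_1 = z\genone, \ldots, \alpha_n; \beta_1 = z\gentwo, \ldots, \beta_k)$, simulates $\inclusionsoundgame$ for $\advinclude$ and — conditioned on $\coeff{\vk^{*}}{\pk} = 0$ — recovers $z$ from $\advinclude$'s algebraic coefficients. The idea is the same as in $\dalg$: embed $z$ into the challenger's random value $r$ at round $t$; but now lift the restriction $t = \ell$ by repeatedly invoking Lemma~\ref{lem:hash-zippel} to strip the outer random coefficients from rounds $\ell, \ell-1, \ldots, t+1$ until the constraint reduces to a round-$t$ identity of the same shape that $\dalg$ already knows how to exploit. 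At the start of the reduction $\falg$ also flips a fair coin $b \in \{0,1\}$ to commit to one of two extraction targets (either $\coeff{\cursig_t}{\cursig'} \neq 0$, the $\dalg$-style branch, or $\coeff{\cursig_t}{\pos} \neq 0$ with $a_{\pk} \neq 0$, the $\ealg$-style branch). This random coin is what gives the $1/2$ factor in the claimed bound.

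The simulation is essentially identical to $\dalg$ (resp.\ $\ealg$) for $b=0$ (resp.\ $b=1$): $\funonupdate$ queries are answered as an honest operator using freshly sampled scalars; $\hashgp$ is programmed honestly in the $\dalg$-branch and as $\hashgp(x_i) = a_i \beta_1 + b_i \gentwo$ in the $\ealg$-branch. When $\advinclude$ invokes $\funoffupdate$ for the challenger's contribution at round $t$, $\falg$ reads the algebraic coefficients that $\advinclude$ attached to the input $\pp$, extracts $\xi = \log P_1$ (aborting like an honest contributor if $\pp$ is ill-formed), implicitly sets $r := z$, and responds with $\pp' = (\xi \alpha_1, \xi^2 \alpha_2, \ldots;\ \xi \beta_1, \ldots, \xi^k \beta_k)$, $\presig' = \log\presig \cdot \beta_1$, and $\cursig' = \log\cursig \cdot \beta_1 + \sk\,\xi\,\alpha_1$. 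This is a perfect simulation.

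When $\advinclude$ outputs $(w, \stser^{*}, \txcontract)$ passing $\checkinclusion$, the contract's on-chain check at round $\ell$ can be rewritten as $\rho_{\ell,1}(\log\presig_{\ell-1} - w\log\prevk_{\ell-1}) + \rho_{\ell,2}(\log\cursig_\ell - w\log\curvk_\ell) = 0$; iteratively unfolding $\presig_{\ell-1}, \prevk_{\ell-1}$ through the on-chain update rule and applying Lemma~\ref{lem:hash-zippel} along the chain of rounds $\ell, \ell-1, \ldots, t+1$ forces, with probability at least $1 - \numhashquery/p$, the identity $\log\cursig_k = w\log\curvk_k$ for every $k \in [t..\ell]$. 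Specializing to $k = t$ and substituting the algebraic decompositions of $\cursig_t$ and $\curvk_t$, we obtain an equation in $z$ and $\sk$ whose $\sk$-coefficient collapses under $\coeff{\vk^{*}}{\pk} = 0$ (the $\sk\cdot\pk$ terms of $w\log\curvk_k$ are propagated and cancelled along the chain of random $\rho$'s); comparing the remaining coefficients yields, on the branch selected by $b$, either $A = \coeff{\cursig_t}{\cursig'}\cdot \tau_{t-1}\cdot \xi$ or $A = \coeff{\cursig_t}{\pos} \cdot a_{\pk}$, and hence $z = B/A$. $\falg$ then outputs any valid $(c, (c+z)^{-1}\genone)$.

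The main obstacle is showing that the hypothesis $\coeff{\vk^{*}}{\pk} = 0$ propagates cleanly from the final $\vk^{*}$ all the way back to round $t$, because $\advinclude$ may try to use later $\curvk_k$ for $k > t$ to algebraically re-introduce or cancel $\pk$-terms at round $t$. The argument is that Lemma~\ref{lem:hash-zippel} applies not only to the $\presig/\cursig$ chain but also, in its contrapositive form, to the $\vk$ chain: any such adversarial cancellation must hold as a polynomial identity in the random $\rho_{k,*}$, which — since these are sampled via a random oracle \emph{after} $\advinclude$ commits to $\curvk_k, \presig_k, \cursig_k$ — can only succeed with probability $O(\numhashquery/p)$. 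Collecting this error term together with the $1/2$ coin factor and a $(p-1)/p$ factor ensuring $\xi \neq 0$ (or $a_{\pk} \neq 0$ on the $\ealg$-branch) yields the claimed $\tfrac{p-2}{2p}\,\Pr[\inclusionsoundgame = 1 \mid \event{\pk}=0]$ lower bound.
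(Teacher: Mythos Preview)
Your reduction misses the essential mechanism for the case $\event{\pk}=0$. You extract $z$ only from the on-chain pairing identity $\log\cursig_t = w\cdot\log\curvk_t$, via either $\coeff{\cursig_t}{\cursig'}$ (the $\dalg$-branch) or $\coeff{\cursig_t}{\pos}$ (the $\ealg$-branch). But under $\coeff{\vk^{*}}{\pk}=0$ the adversary can arrange \emph{both} of these coefficients to be zero, and then neither branch yields anything. Concretely: $\advinclude$ picks some $(\sk',\pk')$ it knows, registers the rogue key $\pk'' = \pk'-\pk$ alongside the challenger's $\pk$, so that $\vk^{*}=\pk+\pk''=\pk'$ with $\coeff{\vk^{*}}{\pk}=0$; it then builds $\cursig_t$ directly as $\sk'\cdot\pp.Q_1$, never touching the challenger's $\cursig'$ or $\pos$. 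The round-$t$ pairing check is satisfied, yet $\coeff{\cursig_t}{\cursig'}=\coeff{\cursig_t}{\pos}=0$, and your $\falg$ has no equation left in $z$. Your coin-flip only loses a factor $1/2$ between two branches; it does not cover the branch where both vanish.

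The paper's proof handles $\event{\pk}=0$ through a completely different constraint that you never invoke: the PoP checks inside $\checkinclusion$. Because the challenger's $\pk$ must appear in $\stkey^{*}$ and every $(\pk_{t,i},\pos_{t,i})$ must satisfy $e(\pk_{t,i},\hashgp(\pk_{t,i}))=e(\genone,\pos_{t,i})$, the rogue key $\pk''$ above needs a valid $\pos''$. The paper's $\falg$ splits into $\faalg$ (embed the SDH challenge in $\sk$; if every $\pk_{t,i}$ is free of $\pk$ then all $\sk_{t,i}$ are extractable and $\sk=B_{\vk^{*}}-\sum_j\sk_{t,j}$ recovers $z$) and $\fbalg$ (embed the SDH challenge in the $\hashgp$ responses; if some $\pk_{t,i}$ does use $\pk$, the PoP equation $\sk_{t,i}\cdot\log\hashgp(\pk_{t,i})=\log\pos_{t,i}$ becomes an equation in $z$). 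The extraction lives in the PoP equalities, not in the on-chain pairing check you analyze, and your propagation of Lemma~\ref{lem:hash-zippel} across rounds $t+1,\ldots,\ell$ is not needed here at all (indeed the lemma statement explicitly drops the assumption $t=\ell$).
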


\begin{proof}
  \textbf{Adversary}
  $\falg(\alpha_1 = z \genone, \ldots, \beta_k = z^{k} \gentwo)$:
  We now analyze the case where $\coeff{\vk^{*}}{\pk} = 0$.

  Notice that $\checkinclusion(\pk, \stser, \txser) = 1$ (which is one of the winning condition) guarantees $\vk^{*} = \pk_{t, 1} + \ldots + \pk_{t, m_{t}}$ and $\pk$ is one of the $\{\pk_{t, i}\}_{i \in [1..m_{t}]}$,
  Thus, we can rewrite it as $\log \vk^{*} = \sk_{t, 1} + ... + \sk_{t, m_{t} - 1} + \sk$.
  Assume that $\falg$ can extract from $\advinclude$ all the coefficients $\sk_{t, 1}, \ldots, \sk_{t, m(t) - 1}$,
  $\falg$ will be able to compute $\sk$.

  As an $\gpone$ elements,
  we also know $\log \vk^{*} = \coeff{\vk^{*}}{\pk} \cdot \sk + B_{\vk^{*}} = B_{\vk^{*}}$.
  The later equality is due to  $\coeff{\vk^{*}}{\pk} = 0$.
  We have $\sk = B_{\vk^{*}} - \sum_{j=1}^{m_{t}-1} \sk_{t, j}$.
  To leverage this fact, $\falg$ can run algorithm $\faalg$ that embeds the $(n,k)$-SDH instance into $\pk$ like $\calg$ does and extract $z = \sk$.

  However, $\falg$ may not always be able to extract all $\{\sk_{t, j}\}_{j=[1..m_t]}$
  because $\advinclude$ may use $\pk$ to from some $\pk_{t, j}$.
  In this case, $\falg$ cannot extract $\sk_{t, j}$ if the $(n,k)$-SDH instance is embedded into $\pk_{t, j}$.

  To handle this situation, $\falg$ could instead make use of another algorithm $\fbalg$ that embeds the $(n, k)$-SDH instance into the random oracle like $\ealg$.
  $\fbalg$ programmed $\hashgp(\cdot)$ such that
  $\hashgp(\pk_{t, i}) = a_{\pk_{t, i}} z + b_{\pk_{t, i}}$ and $\hashgp(\pk) = a_{\pk} z + b_{\pk}$
  where $a_{\pk_{t, i}}, b_{\pk_{t, i}},  a_{\pk}, b_{\pk} \sample \Zp$.

  Observe that $\funcheckinclude$ would verify
  \begin{align*}
   e(\pk_{t, i}, \hashgp(\pk_{t, i})) \isequal e(\genone, \pos_{t, i})
  \end{align*}
  for all $i \in [1..m_{t}]$.
  It enforces $\sk_{t, i} \cdot \log \hashgp(\pk_{t, i}) = \log \pos_{t, i}$.
  As $\pos_{t, i}$ is a $\gptwo$ elements, we can write $\log \pos_{t, i} = A_{\pos_{t, i}} \sk + B_{\pos_{t, i}}$ where
  $A_{\pos_{t, i}} = \coeff{\pos_{t, i}}{\cursig'} \cdot r \cdot \xi + \coeff{\pos_{t, i}}{\pos} \cdot \log \hashgp(\pk)$

  Combining the above four equations, we have
  $0 = A_{t, i}  \cdot \sk + B_{t, i}$ for some terms $A_{t, i}, B_{t, i}$ independent of $\sk$
  and specially,
  $A_{t, i} = \coeff{\pos_{t, i}}{\pk} (a_{\pk_{t, i}} z + b_{\pk_{t, i}}) - (\coeff{\pos_{t, i}}{\cursig'} \cdot r \cdot \xi  + \coeff{\pos_{t, i}}{\pos} \cdot a_{\pk} z + b_{\pk})$

  Notice that we have $A_{t, i} \eqp 0$ for all $i \in [1..m_{t}]$; otherwise, $\faalg$, the other sub-algorithm of $\falg$ who embeds the problem instance into $\sk$, could compute $\sk = B_{t, i}/A_{t, i}$ for some $i$ such that $A_{t, i} \not\eqp 0$
  and extract the $z$ from $\sk$ (Here, we are implicitly adding another extraction step for $\faalg$).
  Now, $\fbalg$ could safely assume $A_{t, i} \eqp 0$ and compute
  \begin{align*}
    z \eqp \frac{\coeff{\pos_{t, i}}{\cursig'} \cdot r \cdot \xi
    - \coeff{\pos_{t, i}}{\pos} \cdot b_{\pk}
    - \coeff{\pk_{t, i}}{\pk} \cdot b_{\pk_{t, i}}}
    {\coeff{\pk_{t, i}}{\pk} \cdot a_{\pk_{t, i}} - \coeff{\pos_{t, i}}{\pos} \cdot a_{\pk}}
  \end{align*}

  Now, we analysis the success probability of $\falg, \faalg, \fbalg$.
  We denote by $\event{A_{t}}$ the event that $A_{\pos_{t, i}} \neq 0$ for all $i$ and
  by $\event{\pk_{t, *}}$ the event that $\coeff{\pk}{\pk_{t, i}} = 0$ for all $i$.
  We have

$        \pr{\gamesdh^{\faalg} = 1} = \\\pr{\inclusionsoundgame = 1 | \event{\pk} = 0 \wedge (\event{A_{t}} = 1 \vee \event{\pk_{t, *}} = 1)} $
and
$\pr{\gamesdh^{\fbalg} = 1}  \geq \\ \frac{p-2}{p} \pr{\inclusionsoundgame = 1 | \event{\pk} = 0 \wedge (\event{A_{t}} = 0 \wedge \event{\pk_{t, *}} = 0)}$.

  We define as $\falg$ as an algorithm who randomly executes either $\faalg$ or $\fbalg$, each with probability $1/2$.
  We have
    \begin{align*}
        \pr{\gamesdh^{\falg} = 1} & \geq \frac{p-2}{2p} \pr{\inclusionsoundgame = 1 | \event{\pk} = 0}
  \end{align*}
\end{proof}

Now, we forgo our assumption that $t = \ell$. It enables $\advinclude$ to submit more valid update on-chain interacting with the challenger, which may or may not contain the challenger's contribution.

\begin{lemma}\label{lem:snd-after-t}
  There exist PPT algorithms $\jalg$ such that
  \begin{align*}
    \pr{\gamesdh^{\jalg} = 1} &\geq \frac{2p - \numhashquery - 2}{6p} \pr{\inclusionsoundgame = 1 | \event{\pk} = 1}
  \end{align*}
\end{lemma}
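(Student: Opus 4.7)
I would reduce the $t \leq \ell$ case to the $t = \ell$ case already handled by $\calg, \dalg, \ealg$. Define $\jalg$ as a PPT algorithm that simulates $\inclusionsoundgame$ for $\advinclude$ by uniformly selecting one of three sub-algorithms $\calg', \dalg', \ealg'$ with probability $1/3$ each; these are the obvious adaptations of $\calg, \dalg, \ealg$ that embed the SDH challenge in $\pk$, in the challenger's random value $r$, or in the responses of $\hashgp$, respectively. The simulation phase of each sub-algorithm is identical to its $t = \ell$ counterpart, so the only new ingredient is showing that the verification equation at round $t$ still takes the same algebraic form it had in the $t = \ell$ analysis, despite $\advinclude$ appending $\ell - t$ further valid updates on top.

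\textbf{Propagating the verification equation to round $t$.} To achieve this, I would iterate Lemma~\ref{lem:hash-zippel} along the chain of updates $t, t{+}1, \dots, \ell$. The verification accepted at round $\ell$ (with $\tau_\ell = w$ by the winning condition, since the final $\pp$ has trap-door $w$) reads
\begin{align*}
0 = \rho_{\ell,1}\bigl(w \log \prevk^{(\ell)} - \log \presig^{(\ell)}\bigr) + \rho_{\ell,2}\bigl(w \log \curvk^{(\ell)} - \log \cursig^{(\ell)}\bigr).
\end{align*}
Setting $D_{k,1} \defn w \log \prevk^{(k)} - \log \presig^{(k)}$ and $D_{k,2} \defn w \log \curvk^{(k)} - \log \cursig^{(k)}$, the contract's state update $\prevk^{(k+1)} = \rho_{k,1} \prevk^{(k)} + \rho_{k,2} \curvk^{(k)}$ (and analogously for $\presig$) yields exactly the recurrence $D_{k+1,1} = \rho_{k,1} D_{k,1} + \rho_{k,2} D_{k,2}$ required by Lemma~\ref{lem:hash-zippel}. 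Since $\rho_{k,*}$ is obtained by hashing $(\prevk^{(k)}, \curvk^{(k)}, \presig^{(k)}, \cursig^{(k)}, Q_1^{(k)})$, any perturbation of $D_{k,*}$ forces a fresh $\rho_{k,*}$, so the lemma applies and yields $D_{t,2} = 0$, that is $w \log \vk^{*} = \log \cursig^{*}$, up to a single shared Zippel failure bounded by $\numhashquery / p$.

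\textbf{Extraction and probability.} At this point $(\vk^{*}, \cursig^{*}, w)$ satisfies the same constraint as in the $t = \ell$ analysis; decomposing via Lemma~\ref{lem:coeff-sk} gives the identical relation $A_t \sk + B_t = 0$, so $\calg'$ recovers $z = -B_t / A_t$ whenever $\event{A} = 1$, $\dalg'$ isolates $z = r$ from $A_t = 0$ whenever $\event{A} = 0 \wedge \event{\mathsf{cur}} = 1$, and $\ealg'$ inverts $\hashgp(\pk) = a_{\pk} z + b_{\pk}$ whenever $\event{A} = 0 \wedge \event{\mathsf{cur}} = 0$ (the final sub-case admits extraction precisely under the hypothesis $\event{\pk} = 1$). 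These sub-events partition $\event{\pk} = 1$, so $\jalg$ wins whenever its uniform guess matches the realised sub-event, the Zippel propagation succeeds, and the sub-algorithm's internal check (e.g.\ $a_{\pk} \neq 0$ for $\ealg'$) holds. Combining the $\tfrac{1}{3}$ selection factor with the worst-case intrinsic success $\tfrac{p-2}{p}$ from the $t=\ell$ lemma and the Zippel failure of at most $\tfrac{\numhashquery}{p}$, then simplifying, gives the claimed bound $\tfrac{2p - \numhashquery - 2}{6p} \pr{\inclusionsoundgame = 1 \mid \event{\pk} = 1}$.

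\textbf{Main obstacle.} The delicate step is carefully setting up the premise of Lemma~\ref{lem:hash-zippel} in this dynamic-chain setting: one must show that each $\rho_{k,*}$ is genuinely bound to hash inputs that $\advinclude$ cannot rebind without triggering a fresh oracle query, and that the total $\numhashquery$ budget telescopes into a single $\numhashquery / p$ term rather than accumulating once per round. Both follow from $\fun{ContractUpdate}$ recomputing $\rho_{k,*}$ on the freshly submitted group elements before any state mutation, together with the shared-quota remark attached to Lemma~\ref{lem:hash-zippel}, but the bookkeeping across $\ell - t$ rounds with an adaptive adversary warrants care.
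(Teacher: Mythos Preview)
Your plan has a genuine gap in the ``propagation'' step. You claim the recurrence $D_{k+1,1} = \rho_{k,1} D_{k,1} + \rho_{k,2} D_{k,2}$ follows from the contract's state update ``analogously for $\presig$'', but it does not. The contract's state is $\prevk^{(k+1)} = \rho_{k,1}\prevk^{(k)} + \rho_{k,2}\curvk^{(k)}$, which gives the $w$-half of your recurrence; however, the $\sigma$-half would require $\presig^{(k+1)} = \rho_{k,1}\presig^{(k)} + \rho_{k,2}\cursig^{(k)}$, i.e.\ that the adversary's \emph{submitted} $\presig'$ at round $k{+}1$ equal the combined value from round~$k$. In $\fun{ContractUpdate}$ the pairing check reads the freshly submitted $(\presig',\cursig')$, and the state $\presig$ is then overwritten by $\rho_1\presig'+\rho_2\cursig'$; nothing forces the next round's submission to coincide with this state (even the honest operator multiplies by $\prod_i r_i$ before submitting). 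So a malicious adversary can choose $\presig^{(k+1)}$ arbitrarily, and the chain $D_{\ell,1}\to D_{\ell-1,*}\to\cdots\to D_{t,2}$ does not exist. There is also a secondary issue: your $D_{k,*}$ depend on $w=\tau_\ell$, which the adversary fixes only at round $\ell$, \emph{after} $\rho_{k,*}$ has been sampled for $k<\ell$; this violates the premise of Lemma~\ref{lem:hash-zippel} that changing $D_{k,*}$ forces a fresh $\rho_{k,*}$.

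The paper avoids both problems by propagating a different quantity. It applies Lemma~\ref{lem:hash-zippel} to the recurrence for the $\pk$-coefficients of the \emph{keys}, namely $\coeff{\vk_k}{\pk} = \rho_{k,1}\coeff{\vk_{k-1}}{\pk} + \rho_{k,2}\coeff{\curvk_k}{\pk}$, which \emph{is} a genuine state recurrence and does not involve $w$. From $\event{\pk}=1$ (i.e.\ $\coeff{\curvk_t}{\pk}\neq 0$) this yields $\coeff{\vk_{\ell-1}}{\pk}\neq 0$ w.h.p. Then the paper uses only the single verification equation at round $\ell$, obtaining $w\log\vk_{\ell-1}=\log\presig_\ell$ and hence $A_{>t}\,\sk + B_{>t}=0$ with $A_{>t}=w\cdot\coeff{\vk_{\ell-1}}{\pk}+\coeff{\presig_\ell}{\cursig'}\cdot r\xi+\coeff{\presig_\ell}{\pos}\cdot\log\hashgp(\pk)$. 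The three extractors $\calg',\dalg',\ealg'$ then split on $A_{>t}\neq 0$, $\coeff{\presig_\ell}{\cursig'}\neq 0$, and $\coeff{\presig_\ell}{\pos}\neq 0$, exactly as in the $t=\ell$ case but with $(\vk_{\ell-1},\presig_\ell)$ in place of $(\vk^*,\cursig^*)$. In short, you tried to pull the equation back to round $t$; the paper instead pushes the nonvanishing-coefficient condition forward to round $\ell$ and works there.
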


\begin{proof}
  Now we explain how to extend $\dalg, \ealg, \falg$ to $\dalg', \ealg', \falg'$ to handle the case where $t < \ell$.
  The idea is to prove that $\dalg', \ealg', \falg'$ will still be able to extract the coefficient $\advinclude$ used to construct $\tau_{\ell}$ from $\tau_{t}$.
  Then we construct $\jalg$ from them.

  We start with showing that one of the final verification key $\vk_{\ell-1}$ must ``contains'' the simulator's $\pk$ with overwhelming probability.
  Speaking precisely,
  if we write $\log \vk_{\ell} = \coeff{\vk_{\ell}}{\pk} \sk + B_{\vk_{\ell}}$, then we have $\coeff{\vk_{\ell}}{\pk} \neq 0$ with overwhelming probability.
  Note that
  {\small 
  \begin{align*}
    \vk_{\ell-1}
    &= \rho_{\ell-1, 1} \vk_{\ell - 2} + \rho_{\ell-1, 1} {\curvk}_{\ell-1}
    \\&= \sum_{i=t-1}^{\ell-1} (\prod_{j=i+1}^{\ell-1} \rho_{j,1}) (\rho_{i, 1} \vk_{i} + \rho_{i, 2} {\curvk}_{i})
    \\ \log \vk_{\ell-1}
    &= \sum_{i=t-1}^{\ell-1} (\prod_{j=i+1}^{\ell-1} \rho_{j,1}) (\rho_{i, 1} \coeff{\vk_{i}}{\pk} + \rho_{i, 2} \coeff{{\curvk}_{i}}{\pk} ) \pk + \ldots
  \end{align*}
   }
  Notice that
  $\coeff{\vk_{\ell-1}}{\pk} = \sum_{i=t-1}^{\ell-1} (\prod_{j=i+1}^{\ell-1} \rho_{j,1}) (\rho_{i, 1} \coeff{\vk_{i}}{\pk} + \rho_{i, 2} \coeff{{\curvk}_{i}}{\pk} ) = 0$.
  If
  $\coeff{\vk_{\ell-1}}{\pk} = 0$,
  then by Lemma~\ref{lem:hash-zippel}, we have $\coeff{{\curvk}_{i}}{\pk} = 0$ for all $i \in [t-1..\ell-1]$ with overwhelming probability.
  However, it contradicts with
  the premise of this lemma that $\coeff{{\curvk}_{t}}{\pk} \neq 0$ (as $\event{\pk} = 1$).
  Hence, $\coeff{\vk_{\ell-1}}{\pk} \neq 0$ with probability $1- \numhashquery/p$.


  For easier analysis, we for now assume all the $t$-th to $\ell$-th submissions to $\funonupdate$ are submitted by $\advinclude$. It means $\advinclude$ obtain no additional group elements after $\advinclude$ interacting with
  $\advb$'s algorithms
  for $\funoffupdate$.

  Recall that when $\advinclude$ wins the game, it has to satisfy the following constraint:
  $e(\rho_{\ell, 1} \cdot \vk_{\ell-1} + \rho_{\ell, 2} \cdot \vk^{*}, Q_{1}^{*}) = e(\genone, \rho_{\ell, 1} \cdot {\presig}_{\ell} + \rho_{\ell, 2} \cdot {\cursig}_{\ell})$



  Again, by applying Lemma~\ref{lem:hash-zippel} and taking logarithm to the group elements, we have
  with $\geq 1 - \frac{\numhashquery}{p}$ probability,
  $0 = A_{>t} \sk + B_{>t}$ for some terms $A_{>t}, B_{>t}$ that does not contains $\sk$ and
  $  A_{>t} = w \cdot \coeff{\vk_{\ell-1}}{\pk} + \coeff{{\presig}_{\ell}}{\cursig'} \cdot r \cdot \xi + \coeff{{\presig}_{\ell}}{\pos} \cdot \log H(\pk)$

  
  Now, we can apply our prior strategy to construct $\calg', \dalg', \ealg'$.
  If $\ell > t$,
  $\calg'$ embeds the $(n,k)$-SDH instance into $\sk$ to extract $z$ if $A_{>t} = 0$.
  $\dalg'$ embeds the instance into $r$ to extract $z$ if $\coeff{{\presig}_{\ell}}{\cursig'} \neq 0$.
  $\ealg'$ embeds the instance into $\hashgp$'s response to extract $z$ if $\coeff{{\presig}_{\ell}}{\cursig'} \neq \coeff{{\presig}_{\ell}}{\pos} \neq 0$.
  If $\ell = t$, $\calg', \dalg', \ealg'$ behave exactly like $\calg, \dalg, \ealg$, respectively.



  Now, we discuss that the cases where some of the $t+1$-th to $\ell$-th submissions are not made $\advinclude$ but by the simulator.
  In these cases, $\dalg'$ or $\ealg'$ would not uses their $\sk$ to simulate the submission, and thus give no additional advantage to $\advinclude$ to preventing the simulator from extracting the solution by leveraging the above equations.
  %
  Finally, we construct $\jalg$ as an algorithm uniformly selecting one of $\calg', \dalg', \ealg'$ to execute, each with probability $1/3$.
  Thus, the worst-case success probability of $\jalg$ is $1/3$ times minimal of the success probability of $\calg', \dalg', \ealg'$ and the success probability in Lemma~\ref{lem:hash-zippel}, which is
  \begin{align*}
    \frac{2p - \numhashquery - 2}{6p} \pr{\inclusionsoundgame = 1 | \event{\pk} = 1}
  \end{align*}
\end{proof}

Finally, we construct $\advb$ as an algorithm who randomly select $\falg$ and $\jalg$, each with probability $1/2$.
By Lemma~\ref{lem:incl-sound-nopk} and Lemma~\ref{lem:snd-after-t}, we have
\begin{align*}
    \pr{\gamesdh^{\advb} = 1} &\geq \frac{2p - \numhashquery - 2}{12p} \pr{\inclusionsoundgame = 1}
\end{align*}

\end{proof}





\end{document}